\documentclass[a4paper,cleveref,USenglish]{lipics-v2021}

\usepackage{auxlib}

\overfullrule=2mm
\usepackage[utf8]{inputenc}
\usepackage{quiver}
\usepackage{xargs}
\usepackage[colorinlistoftodos,prependcaption,textsize=small]{todonotes}

\usepackage{xspace}
\usepackage{listings}
\usepackage{framed}
\usepackage{float}
\usepackage{tikz}
\usetikzlibrary{patterns}
\MakeOuterQuote{"}
\emergencystretch=1.4em

\makeatletter
\def\HyPsd@CatcodeWarning#1{}
\makeatother


\newcommand*{\BigO}[1]{\ensuremath{\mathchoice{\LDAUOmicron{#1}}{\smash{\LDAUOmicron{#1}}}{\LDAUOmicron{#1}}{\LDAUOmicron{#1}}}}

\def\Vardef#1{%
	\expandafter\newcommand\csname #1\endcsname[1]{%
		\def\first{##1}%
		\def\second{*}%
		\def\third{}%
		\ensuremath{\mathsf{\MakeLowercase #1}\ifx\first\second\else\ifx\first\third\else[{##1}]\fi\fi}%
	}%
}
\def\Vardefxx#1#2{%
	\expandafter\newcommand\csname #1\endcsname[1]{%
		\def\first{##1}%
		\def\second{*}%
		\def\third{}%
		\ensuremath{\mathsf{#2}\ifx\first\second\else\ifx\first\third\else[{##1}]\fi\fi}%
	}%
}
\def\Vardefx#1#2{%
	\expandafter\newcommand\csname #1\endcsname[1]{%
		\ensuremath{\mathsf{#2}[{##1}]}%
	}%
}

\Vardef{Rounds}
\Vardef{Count}
\Vardef{Tick}
\Vardef{Clock}
\Vardef{Synth}
\Vardef{Hybrid}
\Vardef{Coin}
\Vardef{Full}
\Vardef{Level}
\Vardefx{TickE}{tick_{E}}
\Vardef{Load}
\Vardef{Undecided}
\Vardef{Opinion}
\Vardef{PhaseType}
\Vardef{Round}
\Vardef{Clusterhead}
\Vardef{Leader}
\Vardef{Junta}
\Vardef{Phase}

\let\epsilon\varepsilon
\let\phi\varphi

\def\protocol#1{\textsc{#1}\xspace}

\AfterPackageLoaded{cleveref}{
    \crefname{lst@lineno}{Line}{Lines}
    \crefname{algoflt}{Algorithm}{Algorithms}
    \crefname{enumi}{Statement}{Statements}
}


\makeatletter
\def\?#1{}
\def\whp{w.h.p\@ifnextchar.{.\?}{\@ifnextchar,{.}{\@ifnextchar){.}{\@ifnextchar:{.:\?}{.\ }}}}}
\def\Whp{W.h.p\@ifnextchar.{.\?}{\@ifnextchar,{.}{.\ }}}
\makeatother

\bibliographystyle{plainurl}

\let\E\Ex





\newcommand{\Probs}[1]{\Pr\left[{#1}\right]}

\newcommand{\xconfig}{\mathbf{x}}
\newcommand{\Xconfig}{\mathbf{X}}
\newcommand{\tconfig}{\tilde{\mathbf{x}}}
\newcommand{\Tconfig}{\tilde{\mathbf{X}}}
\newcommand{\better}{\succeq_Q}
\newcommand{\dominates}{\succeq_C}

\def\shBias{\xi\xspace}
\newcommand{\submConst}{r\xspace}
\newcommand{\UpperUBound}{\Phi_{up}}

\newcommand{\USD}[1]{\mathrm{USD}_{#1}}
\newcommand{\usd}[1]{\delta_{#1}}
\newcommand{\filter}{\mathcal{F}}
\newcommand{\productive}{\mathrm{prod}_t}

\newcommand{\sref}[1]{Statement \ref{#1}}

\newcommandx{\chr}[2][1=]{\todo[linecolor=orange,bordercolor=orange,backgroundcolor=orange!40,#1]{#2}}
\newcommandx{\felix}[2][1=]{\todo[linecolor=blue,bordercolor=blue,backgroundcolor=blue!30,#1]{#2}}
\newcommandx{\petra}[2][1=]{\todo[linecolor=red,bordercolor=red,backgroundcolor=red!40,#1]{#2}}

\title{Undecided State Dynamics with Stubborn Agents}
\subtitle{}

\author{Petra Berenbrink}{Universität Hamburg, Germany}{petra.berenbrink@uni-hamburg.de}{}{}

\author{Felix Biermeier}{Universität Hamburg, Germany}{felix.biermeier@uni-hamburg.de}{}{}

\author{Christopher Hahn}{Universität Hamburg, Germany}{tim.christopher.hahn@uni-hamburg.de}{}{}

\authorrunning{P.\ Berenbrink, F.\ Biermeier, C.\ Hahn}
\Copyright{Petra Berenbrink, Felix Biermeier, and Christopher Hahn}
\ccsdesc{Theory of computation}
\keywords{Population Protocols, Biased Opinion, Majority, Consensus, Randomized Algorithms}

\hideLIPIcs
\nolinenumbers

\begin{document}
\setcounter{page}{0}
\maketitle
\thispagestyle{empty}
\begin{abstract}
    
In the classical Approximate Majority problem with two opinions there are agents with Opinion 1 and with  Opinion 2.
The goal is to reach consensus and to agree on the majority opinion if the bias is sufficiently large.
It is well known that the problem can be solved efficiently using the \emph{Undecided State Dynamics} (USD) where an agent interacting with an agent of the opposite opinion becomes undecided. 
In this paper, we consider a variant of the USD with a preferred Opinion 1. That is, agents with Opinion 1 behave \emph{stubbornly} -- they preserve their opinion with probability $p$ whenever they interact with an agent having Opinion 2.
Our main result shows a phase transition around the stubbornness parameter $p \approx 1-x_1/x_2$.
If $x_1 = \Theta(n)$ and $p \geq 1-x_1/x_2 + o(1)$, then all agents agree on Opinion 1 after $O(n\cdot \log n)$ interactions.
On the other hand, for $p \leq 1-x_1/x_2 - o(1)$, all agents agree on Opinion 2, again after $O(n\cdot \log n)$ interactions.
Finally, if $p \approx 1-x_1/x_2$, then all agents do agree on one opinion after $O(n\cdot \log^2 n)$ interactions, but either of the two opinions can survive. All our results hold with high probability.

\end{abstract}

\clearpage


\newpage

\section{Introduction}
In this paper we consider a consensus problem 
for population protocols. We have $n$ indistinguishable agents with limited computational power and memory.
The agents communicate via pairwise interactions. Here we assume that the pairs are chosen uniformly at random. 
The population model is widely applicable to real-world scenarios where the communication pattern between the agents is unpredictable.
Examples include sensor networks, mobile ad-hoc networks, swarm robotics and epidemiology.

The consensus problem studied here is defined as follows. Initially, each agent holds one of two possible opinions (called $1$ and $2$ in the following).
The goal of the agents is now to agree on an opinion (which has initially positive support). 
A special case of a consensus problem is the \emph{Majority} problem where the goal is to agree on the opinion which initially has the larger support.
One can distinguish between approximate majority where agents have to agree on the majority opinion only if there is a big difference in the support and exact majority where the agents have to agree on the majority opinion even if the support differs only by one. 
A well-known approximate majority protocol is the so-called \emph{undecided state dynamics}. Here the agents can adopt one additional state called the undecided state. 
Assume agent $q$ interacts with agent $q'$.
If both agents have a different Opinion, $q$ becomes undecided. 
If $q$ is undecided, it adopts the Opinion of $q'$.

In this paper we consider a biased variant of the undecided state dynamics.
We assume that $1$ is the preferred opinion.
Again let us assume agent $q$ interacts with agent $q'$.
If $q$ has the preferred Opinion $1$ and $q'$ has Opinion $2$, then agent $q$ becomes undecided with probability $1-p$.
With probability $p$ it does not change its opinion at all.
We can regard the agent as being stubborn; with probability $p$ the agent insists on its own opinion.
All the other interactions between $q$ and $q'$ are identical to the interactions in the undecided state dynamics.

Our model is a natural extension of the undecided state dynamics to biased agents.
We believe the model covers many real-world instances.
For example, referendums are often triggered by a small active minority.
A newly evolved disease of higher infectiousness can replace the older one. 
A more plausible rumor, or the use of a better tool or newer technology all start out as a minority.
Sometimes, we are interested in limiting the influence of a minority. 
For example, population protocols are examined for their \emph{robustness} against few erroneous or malicious agents. One might model this as a stubborn opinion and see at what threshold the faithful execution breaks down.

\paragraph{Our Contribution}
In this paper we are especially interested in scenarios where the initial majority opinion is $2$. 
Since $p>0$ gives the preferred opinion an advantage, we focus on initial configurations where the preferred opinion is the minority.
We show that for any configuration with linear support for both opinions there exists a threshold $p_s$ such that Opinion 1 wins after $O(n \log n)$ interactions if the stubbornness parameter $p$ is slightly larger than $p_s$. 
If $p$ is slightly smaller than $p_s$, Opinion $2$ wins after $O(n \log n)$ interactions. If $p \approx p_s$, we show that one of the two opinion wins in $O(n \log^2 n)$ interactions, but it is not clear which of the two opinions wins. Our results show that, even if the initial support for Opinion $2$ is larger, for sufficiently large $p$ the agents will still agree on  Opinion $1$. 
Additionally, we show a majorization results which allow us, given two initial configurations and bias-pairs, to determine which of the two systems has the smaller convergence time. 
The exact statement of our main results is given in \cref{sec:problem_definition_main_result}.

\subsection{Related Work}
\paragraph{Undecided State Dynamics}
The USD was independently introduced by Angluin et al.~\cite{DBLP:journals/dc/AngluinAE08} for the population protocol model and by Perron et al.~\cite{DBLP:conf/infocom/PerronVV09} for the closely related asynchronous gossip model.
The latter model works in parallel rounds and all agents are active at the same time.
Both papers show that, for $k=2$ opinions, \whp the convergence time of the process is $O(n \log n)$ interactions (respectively, $O(\log n)$ continuous time) assuming an initial bias of $\omega(\sqrt{n} \log n)$. 
Condon et al.\ \cite{DBLP:conf/dna/CondonHKM17} present an improved result for the two-opinion case in the population model.
They show that the process solves the approximate majority problem assuming an initial bias of $\Omega(\sqrt{n \log n})$.
Their analysis is based on the more general chemical reaction network model and shows this result for a group of closely related processes.

Amir et al.~\cite{DBLP:conf/podc/AmirABBHKL23} analyze the $k$-opinion USD under mild constraints on $k$ and the initial number of undecided agents.
They show the process converges in $O(k \cdot n\log n)$ interactions regardless of the initial bias.
Additionally, they provide a more technical statement where the convergence time depends on the initially largest opinion.

In the parallel gossip model, the convergence of the $k$-opinion USD for 
$k \ge 2$ was first studied by Becchetti et al.~\cite{DBLP:conf/soda/BecchettiCNPS15}. 
To measure the convergence time of an initial configuration, they introduce the 
\emph{monochromatic distance}.
Roughly speaking, this distance is the sum of squares 
of the support of each opinion, normalized by the square of the 
 largest opinion. They show convergence
within $O(\text{md}(\textbf{x}) \cdot \log n)$ parallel rounds assuming an initial multiplicative bias, where 
$\text{md}(\textbf{x})$ is the monochromatic distance of the initial
configuration, which is always bounded above by $k$.
In the two-color case, Clementi et al.\ \cite{DBLP:conf/mfcs/ClementiGGNPS18} later present a tight analysis (giving convergence rates that hold for any 
initial configuration) without using the monochromatic
distance.

Furthermore, in a related line of research, multiple works \cite{DBLP:conf/podc/GhaffariP16a,DBLP:conf/icalp/BerenbrinkFGK16,DBLP:conf/podc/BankhamerEKK20,DBLP:conf/soda/BankhamerBBEHKK22} have analyzed a \emph{synchronized} variant of the USD where the system alternates between two different phases in a synchronized fashion.
In the first phase, all agents perform one step of the USD.
In the second phase, all undecided agents adopt an opinion again.
This provides a poly-logarithmic convergence time regardless of the initial configuration.
The downside of this variant is a significant state overhead by the use of so-called \emph{phase clocks}.
The \emph{phase clocks} allow the synchronization of the agents.

The authors in \cite{DBLP:journals/swarm/DAmoreCN22} study the $2$-USD with uniform noise in a synchronous time model (gossip). Whenever an agent communicates with another agent, it observes its actual state only with probability $1-p$.
Otherwise, it observes any state uniformly at random.
The main result is a phase transition regarding the probability $p$.
When the probability $p$ is less than 1/6, the configuration quickly reaches a meta-stable almost consensus. On the other hand, when the probability is greater than 1/6, the initial majority is lost in $O(\log n)$ rounds. 

\paragraph{Other Opinion Dynamics}

\def\protocol{}

There is a popular family of so-called $j$-Majority dynamics.
Every agent samples randomly $j$ other agents and adopts the majority opinion among the sample.
The special case $j$=1 is known as \protocol{Voter} dynamics \cite{DBLP:journals/iandc/HassinP01, DBLP:journals/networks/NakataIY00, DBLP:conf/podc/CooperEOR12, DBLP:conf/icalp/BerenbrinkGKM16, DBLP:conf/soda/KanadeMS19}.
The variants for $j$=2 and $j$=3 have been analyzed under the names of \protocol{Two-Choices} dynamics \cite{DBLP:conf/icalp/CooperER14,DBLP:conf/wdag/CooperERRS15,DBLP:conf/wdag/CooperRRS17} and the \protocol{3-Majority} dynamics \cite{DBLP:journals/dc/BecchettiCNPST17, DBLP:conf/podc/GhaffariL18, DBLP:conf/podc/BerenbrinkCEKMN17}.
For further references, we refer the reader to the survey of consensus dynamics by Becchetti et al.\ \cite{DBLP:journals/sigact/BecchettiCN20}.

D'Amore and Ziccardi~\cite{DBLP:conf/sirocco/DAmoreZ22} consider uniform communication noise for the $3$-Majority dynamics. They observe a similar phase transition as in \cite{DBLP:journals/swarm/DAmoreCN22}.

In \cite{mobilia2003does}, Mobilia examined the role of a single so-called \emph{zealot} -- an agent that never changes its opinion -- for the Voter dynamics.
Mobilia et al.~\cite{mobilia2007role} pursued this further for several zealots.
Yildiz et al.~\cite{yildiz2013binary} examined the role of two sets of zealots with opposing opinions -- which they named \emph{stubborn agents}.

In \cite{DBLP:conf/ijcai/BecchettiCKPTV23}, Becchetti et al. consider a constant number of opinions and a single stubborn agent.
In each time step, an agent is activated uniformly at random and samples $\ell$ opinions of other agents uniformly at random.
They show for a constant number of initial opinions that every memoryless dynamics requires $\Omega(n^2)$ time steps in expectation to converge.
\paragraph{Biased Opinions}
All results for biased opinion dynamics consider the case with two opinions. One of the opinions is the preferred one. 
Anagnostopoulos et al.~\cite{DBLP:journals/isci/Anagnostopoulos22} consider biased opinion dynamics consisting of two steps. 
In each time step, an agent is selected uniformly at random and adopts the preferred opinion with probability $\alpha$. Otherwise, the agent adopts the majority among its neighbors' opinions.
Note that in this setting an agent can adopt the preferred opinion even if none of its neighbors shares that opinion. Hence, in contrast to our model, the system has only one absorbing state where all agents agree on the preferred opinion. The authors show that there is a phase transition for $\alpha = 1/2$ in dense graphs for the majority rule. For the process with voter rule (the node adopts a random neighbor's opinion) they do not observe any phase transition; the absorption time is $O(n\log n)$.

Mukhopadhyay et al.~\cite{mukhopadhyay_2023} consider the same biased opinion variant as \cite{DBLP:journals/isci/Anagnostopoulos22} but with the 2-choices rule.
For the complete graph, they bound the expected absorption time and observe a phase transition around $\alpha = 1/9$. 
For large $\alpha > 1/9$, the process converges in time $O(n\log n)$ to the preferred opinion.
On the other hand, for small $\alpha < 1/9$, the convergence time depends on the initial fraction of the preferred opinion among the population. If the initial support of the preferred opinion is sufficiently large, the process converges fast ($O(n\log n)$), and for a small initial support the process needs at least $\Omega(\exp(n))$ steps.

Cruciani et al.~\cite{DBLP:journals/dc/CrucianiNNS21} consider a variant of the biased opinion dynamics on core-periphery networks.
The network consists of a core that is a densely-connected subset of agents with the same opinion.
The remaining agents form the periphery with another opinion.
They observe a phase transition that depends on the cut between the core and periphery.
Either all agents agree relatively fast on the initial opinion of the core agents.
Otherwise, the process remains in a meta-stable where both opinions remain in the network for at least polynomial many rounds.

Cruciani et al.~\cite{DBLP:journals/dc/CrucianiMQR23} study the $j$-majority dynamics. In each time step, each agent simultaneously samples $j$ neighbors uniform at random and it adopts the majority opinion.
They consider two different noise models that alter the communication between agents with probability $p$.
In the first variant an agent may observe the preferred opinion instead of the sampled opinion.
In the second variant the opinion of an agent may directly change to be the preferred one.
In both variants, they show phase transitions for some $p^*$ with the preferred opinion being the initial minority.
Eventually, all agents agree on the preferred opinion \whp.
For small $p < p^*$ and $j \geq 3$, it requires $n^{\omega(1)}$ parallel rounds.
For large $p > p^*$ and $j \geq 3$, it only requires $O(1)$ parallel rounds.
At last for $j <3$, it requires $O(1)$ parallel rounds for every $p>0$.

\section{Problem Definition \& Main Result}
\label{sec:problem_definition_main_result}
In this paper, we consider the \emph{population protocol} model with a set of $n$ anonymous agents. 
In each time step $t$, a random scheduler picks an ordered pair of agents $(i,j)$ uniformly at random.  We call agent $i$ the \emph{initiator} and agent $j$ the \emph{responder}. A function $\delta: Q^2 \rightarrow Q$ defines the state transitions. We assume that only initiators change their state according to $\delta$, i.e.,  $q_i(t+1)=\delta(q_i(t),q_j(t))$ and $q_j(t+1)=q_j(t)$.

We assume that  $Q=\set{1,2,\bot}$ and we say  that agent $i$ has \emph{Opinion} $1$ ($2$) if $q_i = 1$ ($q_i = 2$). Otherwise, if $q_i = \bot$, we say that agent $i$ is \emph{undecided}. 
We assume that Opinion $1$ is the preferred opinion. The random variable $X_i(t)$ denotes the \emph{support} of Opinion $i$ at time $t$, i.e., the number of agents in state $i$. The random variable  $U(t)$
denotes the number of undecided agents at time $t$.
We call a vector $\mathbf{X}(t) = (X_1(t), X_2(t), U(t)) \in \N^{3}$ with $X_1(t)+X_2(t)+U(t) = n$ a \emph{configuration}. In the following, we use capital letters for random variables and lowercase letters ($(x_1(t), x_2(t), u(t))$ for fixed outcomes.
We use $\mathcal{F}_t = (F_i)_{i=0}^{t}$ to denote the natural filtration consisting of the initial configuration at time $0$ and all random choices up to time $t$: the interacting agents, and the outcome of the interaction based on the stubbornness $p$.
I.e., We write $\mathcal{F}_t$ for $\Xconfig(0) = \xconfig(0), \Xconfig(1) = \xconfig(1), \ldots, \Xconfig(t) = \xconfig(t)$ and for the sake of readability we may use $\xconfig$ instead of $\xconfig(t)$.

\paragraph{Undecided State Dynamics with Stubborn Agents}
The undecided state dynamics (USD) is defined as follows. Whenever an agent encounters another agent with a different opinion, it drops its opinion and becomes undecided. Whenever an undecided agent encounters another agent, it adopts its opinion.

The following generalization of the USD has a \emph{preferred} opinion, which is w.l.o.g.\ Opinion $1$. We call the other opinion \emph{unpreferred}.
To model the preference on Opinion $1$, we assume that agents with Opinion $1$ are \emph{stubborn} in the following sense. Whenever an agent with Opinion $1$ meets an agent with Opinion $2$ it does not become undecided immediately. 
Instead, it draws a random number in $\intoc{0,1}$ and keeps its opinion if this number is smaller or equal to a constant $p \in \intcc{0,1}$. Otherwise it becomes undecided.
In the following, we call $p$ the \emph{stubborness}.
In more detail, if the random scheduler picks a pair of agents with states $(1,2)$, the initiator remains unchanged with probability $p$.
With probability $1-p$, its new state is $\delta(1,2) = \bot$.
All other interactions remain as in the original version of the undecided state dynamics. 

Formally, the transition function of the stubborn USD with stubbornness $p$ is
\begin{align*}
    (q,q') & \mapsto 
    \begin{cases}
    \bot \text{ if } q=2, q'=1\\
    \bot \text{ if } q=1, q'=2 \text{ with probability } 1-p\\
    q' \text{ if } q = \bot
    \\
    q \text{ otherwise.}
    \end{cases}
\end{align*}%
In the following, we refer to the stubborn USD process with stubbornness $p$ by $\USD{p}$ and to its transition function by $\usd{p}$. $\USD{p}(\xconfig)$ is defined as the $\USD{p}$
with initial configuration $\xconfig$. 
Consequently, the standard USD process is $\USD{0}$ and its transition function $\usd{0}$.

\subsection{Results}
Our focus is on the convergence time of $\USD{p}(\xconfig)$, defined as the number of interactions until all agents agree on one of the two opinions if $\USD{p}$ is started on initial configuration $\xconfig = (x_1,x_2,u)$.
Let $T_i(p,\xconfig)$ be the convergence time of process $\USD{p}(\xconfig)$ assuming  opinion $i$ survives.
In \cref{thm:main-theorem}, we show that there exists a phase transition around the \emph{threshold probability} $p_s = 1-x_1/x_2$.
If $p$ is sufficiently larger than $p_s$, the process will reverse the initial bias with high probability and the agents will agree on Opinion $1$.
For $p$ sufficiently smaller than $p_s$, with high probability all agents will agree on Opinion 2 instead.
In the intermediate cases any of the two opinions might win, but we can still provide bounds on the convergence time.
Note that $\Omega(n\log n)$ is a trivial lower bound for the process since this is the time until each agent is, w.h.p., activated at least once.

\begin{theorem}
\label{thm:main-theorem}
    Let $\epsilon, p \in \intoc{0,1}$ be arbitrary constants and let $\xconfig = (x_1,x_2,u)$ be a configuration with $x_1 \in \intcc{\epsilon \cdot n,x_2}, u \leq \frac{n}{2}$.
    Let $p_s \coloneqq 1-x_1/x_2$.
    Then the following statements hold \whp.
    \begin{align}
        T_1(p,\xconfig) &= O(n \cdot \log n) &&\text{if } p-p_s = \Omega\left(\sqrt{n^{-1}\cdot \log n}\right), \label{thm:main-theorem:1}\\
        T_2(p,\xconfig) &= O(n \cdot \log n) &&\text{if } p_s-p = \Omega\left(\sqrt{n^{-1}\cdot \log n}\right),\label{thm:main-theorem:2}\\
        T_{1\lor2}(p,\xconfig) &= O(n \cdot \log^2 n) &&\text{otherwise.}\label{thm:main-theorem:3}
    \end{align}
\end{theorem}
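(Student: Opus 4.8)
The plan is to reduce the critical regime to the two already-settled ones by showing that the intrinsic randomness of $\USD{p}$ drives the configuration out of the critical window within $O(n\log^2 n)$ interactions, after which \eqref{thm:main-theorem:1} or \eqref{thm:main-theorem:2} applies. The object I would track is the signed imbalance $Z(t)\coloneqq X_1(t)-(1-p)X_2(t)$: its sign records which side of the threshold we are on, since $p-\bigl(1-X_1(t)/X_2(t)\bigr)=Z(t)/X_2(t)$, so whenever $X_2(t)=\Theta(n)$ the hypothesis of \eqref{thm:main-theorem:1} (resp.\ \eqref{thm:main-theorem:2}) at $\Xconfig(t)$ is implied by $Z(t)\ge C\sqrt{n\log n}$ (resp.\ $Z(t)\le-C\sqrt{n\log n}$) for an appropriate constant $C$. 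A one-step drift computation gives
\[
  \ExpCond{Z(t+1)}{\filter_t}=Z(t)\Bigl(1+\tfrac{U(t)}{n(n-1)}\Bigr),
  \qquad
  \ExpCond{Z(t+1)^2}{\filter_t}\ge Z(t)^2+c_0 ,
\]
the second with a constant $c_0>0$ whenever $X_1(t),X_2(t)=\Theta(n)$: a single interaction changes $X_1$ or $X_2$ by $\pm1$ with constant probability, so $\ExpCond{(\Delta Z)^2}{\filter_t}=\Omega(1)$, while the cross term $2Z(t)\ExpCond{\Delta Z}{\filter_t}=\tfrac{2U(t)Z(t)^2}{n(n-1)}$ is nonnegative. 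Thus $Z$ is repelled from $0$ and $Z^2$ has a constant positive drift; the real work will be to turn the latter into a high-probability escape bound.

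For the bookkeeping I would first record, by the kind of drift argument already needed for \eqref{thm:main-theorem:1}--\eqref{thm:main-theorem:2}, that $U$ has negative drift whenever $U\ge n/2$ (then $X_1+X_2\le n/2$, so $U$ is pulled towards $\tfrac{(2-p)X_1X_2}{X_1+X_2}\le n/4$), whence, starting from $u\le n/2$, we have $U(t)\le\tfrac{2}{3}n$ for all $t\le n\log^3 n$ \whp. Set $\tau\coloneqq\min\{t:|Z(t)|\ge C\sqrt{n\log n}\}$, $\sigma\coloneqq\min\{t:U(t)>\tfrac{2}{3}n\}$, and let $\mathcal B\coloneqq\{|Z|<C\sqrt{n\log n},\ U\le\tfrac{2}{3}n\}$. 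On $\{t<\tau\wedge\sigma\}$ one has $X_1(t)+X_2(t)\ge n/3$ and $X_1(t)=(1-p)X_2(t)\pm o(n)$; since $x_1\ge\epsilon n$ gives $p_s=1-x_1/x_2\le1-\epsilon$ and hence $1-p=\Omega(1)$ in the critical regime, this forces $X_1(t),X_2(t)=\Theta(n)$. Therefore $Z(t\wedge\tau\wedge\sigma)^2-c_0(t\wedge\tau\wedge\sigma)$ is a submartingale, and optional stopping together with $Z(t\wedge\tau\wedge\sigma)^2\le C^2 n\log n+O(\sqrt{n\log n})$ (one step overshoots the threshold by at most $O(\sqrt{n\log n})$ in $Z^2$) gives $c_0\Exp{t\wedge\tau\wedge\sigma}\le(1+o(1))C^2n\log n$ for every $t$, hence $\Exp{\tau\wedge\sigma}=O(n\log n)$ --- and, crucially, this bound holds from \emph{any} start in $\mathcal B$, not just from $\Xconfig(0)$ (which itself lies in $\mathcal B$, since in the ``otherwise'' case $|Z(0)|=|p-(1-x_1/x_2)|\cdot x_2$ is below the threshold of \eqref{thm:main-theorem:1}--\eqref{thm:main-theorem:2}).

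To amplify to a high-probability statement I would restart: by Markov's inequality, from any configuration in $\mathcal B$ the process exits $\mathcal B$ within $2C'n\log n$ steps with probability at least $\tfrac12$, so chaining $\Theta(\log n)$ blocks via the Markov property shows the process exits $\mathcal B$ within $O(n\log^2 n)$ steps \whp. Since the $U$-bound gives $\sigma>n\log^3 n$ \whp, this exit is through $\{|Z|\ge C\sqrt{n\log n}\}$; at the exit time the configuration has $X_1=\Theta(n)$, $U\le\tfrac{2}{3}n$, and $|p-(1-X_1/X_2)|\ge C\sqrt{\log n/n}$. I would then invoke the arguments underlying \eqref{thm:main-theorem:1} and \eqref{thm:main-theorem:2} --- which apply from any configuration with $X_1=\Theta(n)$, $U$ bounded away from $n$, and $|p-(1-X_1/X_2)|=\Omega(\sqrt{\log n/n})$ of the corresponding sign (after a constant-factor further phase to shrink $U$ below $n/2$ if those statements are used verbatim) --- to conclude, according to whether $Z$ exited on the positive or negative side, that the corresponding opinion reaches consensus within $O(n\log n)$ further interactions \whp, for a total of $O(n\log^2 n)$.

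The step I expect to be the main obstacle is exactly this amplification. The $Z^2$-submartingale argument only delivers an $O(n\log n)$ \emph{expected} escape time, and because $Z$ near $0$ behaves like an essentially unbiased random walk with $\Theta(1)$ per-step variance, a single run escapes in $O(n\log n)$ steps only with constant probability; paying the extra $\log n$ factor by independent restarts is precisely what yields the $n\log^2 n$ bound in \eqref{thm:main-theorem:3}. Making this rigorous needs the uniform-in-starting-configuration form of the expected-escape bound, plus care with the $n^{-\omega(1)}$-probability events on which $U$ or the supports leave their typical ranges, so that the submartingale property and the $\Theta(n)$-estimates are only invoked on $\{t<\tau\wedge\sigma\}$, where they hold.
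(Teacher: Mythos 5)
Your proposal is correct and follows the same overall architecture as the paper: the quantity you call $Z(t)$ is exactly the paper's weighted bias $\Delta_w(t)=x_1(t)-(1-p)x_2(t)$, the drift identity $\E[\Delta_w(t+1)-\Delta_w(t)\mid\filter_t]=u(t)\Delta_w(t)/n^2$ and the lower bound $\E[(\Delta Z)^2\mid\filter_t]\ge (1-p)x_1x_2/n^2=\Omega(1)$ are the same computations, the auxiliary control on $U(t)$ plays the same role, and after escaping the critical window you hand off to the analyses of Statements (1)--(2) just as the paper does. The one genuine difference is how the submartingale $\Delta_w^2(t)-c_0t$ is converted into a high-probability escape bound. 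The paper applies Azuma--Hoeffding once to this submartingale over a single window of length $\tau=\Theta(n\log^2n)$, choosing the window long enough that $c_0\tau$ dominates the $O(\sqrt{\tau n}\log n)$ fluctuation term. You instead use optional stopping to get an expected escape time of $O(n\log n)$ \emph{uniformly over starting configurations in the critical window}, then amplify by Markov's inequality and $\Theta(\log n)$ restarts via the strong Markov property. Both are sound and both need the same guard events (the $U$-bound and the $\Theta(n)$ support bounds only hold before the stopping times, so the submartingale must be frozen outside that region, which you do). Your route makes the source of the extra $\log n$ factor more transparent (constant escape probability per $O(n\log n)$ block); the paper's one-shot concentration avoids the restart bookkeeping and the uniform-in-start expectation bound. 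Two small points to tie down if you write this up: your weaker bound $U(t)\le\tfrac23n$ means the exit configuration need not satisfy the $u\le n/2$ hypothesis verbatim, so you do need the extra phase you mention (the paper sidesteps this by proving the sharper bound $u(t)\le x_1(t)+x_2(t)+O(\sqrt{n\log n})$ throughout); and your proposal takes Statements (1)--(2) as given, whereas in the paper these are themselves proved by a doubling argument on $\Delta_w$ followed by a multiplicative drift bound on the gap $x_2(t)/x_1(t)$, so a complete proof of the theorem still requires that material.
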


Note that for initial configurations $\xconfig = (x_1,x_2,0)$ with $x_1(0) > x_2(0)$, it is known that Opinion 1 is more likely to win in $O(n\log n)$ interactions, even for $p=0$ (see \cite{DBLP:conf/dna/CondonHKM17}).
We could easily rephrase \cref{thm:main-theorem} to cover those configurations.
However, in \cref{thm:coupling-theorem} we give a more general extension that might be of independent interest.
We will explain below which results follow from both Theorems. 

\begin{restatable}{theorem}{couplingtheorem}
\label{thm:coupling-theorem}
    Let $p,\tilde{p} \in \intcc{0,1}$ and let $\xconfig = (x_1,x_2,u), \tconfig = (\tilde{x}_1,\tilde{x}_2,\tilde{u})$ be arbitrary configurations. Then, it holds for all $t \geq 0$ that
    \begin{align}
        \Pr[T_1(p,\xconfig) \leq t] &\geq \Pr[T_1(\tilde{p},\tconfig) \leq t] &&\text{if } \tilde{p} \leq p, \tilde{x}_1 \leq x_1 \text{ and } \tilde{x}_2 \geq x_2,\label{thm:coupling-theorem:1}\\
        \Pr[T_2(p,\xconfig) \leq t] &\geq \Pr[T_2(\tilde{p},\tconfig) \leq t] &&\text{if } \tilde{p} \geq p, \tilde{x}_1 \geq x_1 \text{ and } \tilde{x}_2 \leq x_2.\label{thm:coupling-theorem:2}
    \end{align}
\end{restatable}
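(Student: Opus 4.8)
The plan is to prove \cref{thm:coupling-theorem} by running the two processes on the same probability space under a coupling that uses \emph{exactly the same} scheduler choices and stubbornness coins, and maintaining a domination invariant at the level of individual agents. I describe the argument for \eqref{thm:coupling-theorem:1}; \eqref{thm:coupling-theorem:2} is handled symmetrically. Fix the total order $1 \succ \bot \succ 2$ on $Q$, placing the undecided state in the middle, and write $\usd{p}(q,q';r)$ for the new state of an initiator in state $q$ meeting a responder in state $q'$ when the real number $r\in\intoc{0,1}$ resolves the stubbornness coin (so the initiator keeps its opinion iff $(q,q')=(1,2)$ and $r\le p$). The technical core is the \emph{monotonicity lemma}: if $q_1\succeq q_2$, $q_1'\succeq q_2'$, $p\ge\tilde p$ and $r\in\intoc{0,1}$, then $\usd{p}(q_1,q_1';r)\succeq\usd{\tilde p}(q_2,q_2';r)$. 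This is verified by a finite case distinction over the tuples $(q_1,q_2,q_1',q_2')$ allowed by the hypotheses; the only case in which the stubbornness parameter plays a role is $q_1=q_2=1$, $q_1'=q_2'=2$, and there the hypothesis $\tilde p\le p$ is precisely what guarantees the inequality for every $r$ (if $\tilde p<r\le p$ the left side is $1$ and the right side is $\bot$, and otherwise the two sides coincide).

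Given the lemma, I set up the coupling as follows. Using $\tilde x_1\le x_1$ and $\tilde x_2\ge x_2$ one can fix a common labeling of the $n$ agents such that, at time $0$, the state of agent $k$ in $\USD{p}(\xconfig)$ dominates its state in $\USD{\tilde p}(\tconfig)$ for every $k$: place the Opinion-$1$ agents of the $\tconfig$-process onto a subset of the Opinion-$1$ agents of the $\xconfig$-process, its undecided agents onto the remaining Opinion-$1$ agents and onto the undecided agents of the $\xconfig$-process, and its Opinion-$2$ agents onto whatever remains; a short counting argument (this is exactly where $x_2\le\tilde x_2$ enters) shows there is always enough room. Now run both processes with the same pair $(i_t,j_t)$ and the same coin $r_t$ in every step $t$. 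Only the initiator $i_t$ changes state, in both processes, and the monotonicity lemma shows its new state in the $\xconfig$-process still dominates its new state in the $\tconfig$-process, while all other agents are untouched; hence, by induction on $t$, the pointwise domination invariant is preserved for all $t\ge 0$. Consequently, if every agent holds Opinion~$1$ in the $\tconfig$-process at time $t$ (i.e.\ $T_1(\tilde p,\tconfig)\le t$), then every agent in the $\xconfig$-process is in a state $\succeq 1$, hence in state $1$ since $1$ is the maximum of $\succeq$; so $T_1(p,\xconfig)\le t$, and therefore $\Pr[T_1(p,\xconfig)\le t]\ge\Pr[T_1(\tilde p,\tconfig)\le t]$, which is \eqref{thm:coupling-theorem:1}.

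For \eqref{thm:coupling-theorem:2} one repeats the argument with the order $2\succ\bot\succ 1$ on $Q$ in place of $1\succ\bot\succ 2$; the analogue of the monotonicity lemma then needs $p\le\tilde p$, which is exactly the hypothesis $\tilde p\ge p$, and the feasibility of the initial labeling now uses $\tilde x_1\ge x_1$ and $\tilde x_2\le x_2$. I expect the only real obstacle to be getting the monotonicity lemma right: one must order the states with $\bot$ strictly between the two opinions (an order with $\bot$ at an extreme fails), and pair this order with the correct direction of the stubbornness inequality — these two choices are coupled, and they are reversed between the two parts of the statement. Everything else is bookkeeping.
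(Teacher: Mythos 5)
Your proposal is correct and follows essentially the same route as the paper: the same order $1 \better \bot \better 2$, the same case-checked monotonicity of $\usd{p}$, and the same coupling that reuses the scheduler's pair and the stubbornness coin to propagate a domination invariant by induction. The only cosmetic difference is that you maintain agent-wise domination directly (folding the coin $r$ into a deterministic transition), whereas the paper states the monotonicity observation with a separate probabilistic clause for the $(1,2)$ interaction and phrases the invariant at the configuration level with re-sorted agents.
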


\Cref{thm:coupling-theorem} allows to transfer high-probability results in several ways. 
The aforementioned example $\USD{0}(\tconfig)$ with $\tilde{x}_1-\tilde{x}_2=\Omega(\sqrt{n \cdot \log n})$ converges \whp on Opinion 1 (\cite{DBLP:conf/dna/CondonHKM17}).
By \sref{thm:coupling-theorem:1}, e.g., $\USD{0.1}(\tconfig)$ and $\USD{0}(\xconfig)$ with $x_1 - x_2 = n/2$ also converge on Opinion 1 with at least the same probability.
Secondly, in $\USD{0.25}((n/5,n-n/5,0))$ all agents agree \whp on Opinion 2 by \cref{thm:main-theorem}.
By \sref{thm:coupling-theorem:2}, the same holds for $\USD{0.25}(\log n, n-\log n,0)$ and $\USD{1/n}(n/10,n-n/10,0)$.
We visualize our results in \cref{fig:proof-overview}.
\begin{figure}[H]
    \includegraphics[width=0.56\textwidth]{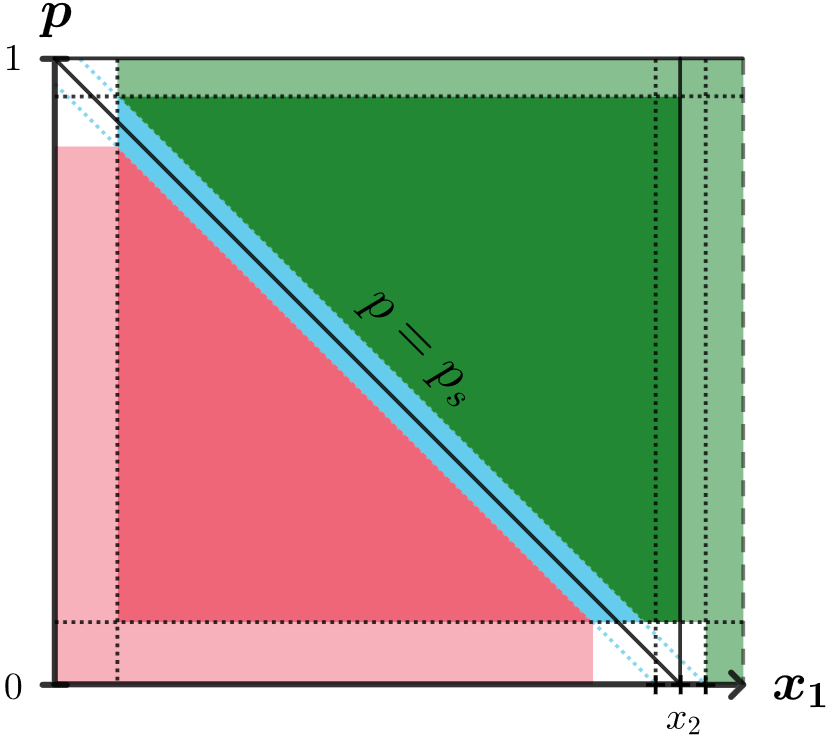}
    \caption{Overview over the results (not to scale). 
    The x-coordinate indicates $x_1$ from $1$ to $x_2 + o(n)$ -- we omit larger values of $x_1$.
    The y-coordinate indicates the stubbornness $p$ from $0$ to $1$.
    The black diagonal line represents $p=p_s$ and the dashed blue lines around it represent $p = p_s \pm \Theta(\sqrt{n^{-1} \cdot \log n})$. At $p=0$, they coincide with the known bounds from \cite{DBLP:conf/dna/CondonHKM17}: Opinion 1 wins \whp for $x_1 \geq x_2 + \Omega(\sqrt{n\log n})$, Opinion 2 wins \whp for $x_1 \leq x_2 - \Omega(\sqrt{n\log n})$.
    The inner opaque rectangular area corresponds to \cref{thm:main-theorem}.
    The remaining colored areas correspond to \cref{thm:coupling-theorem}.
    The area colored in green represents where Opinion 1 wins \whp in $O(n\log n)$ interactions.
    The area colored in red represents where Opinion 2 wins \whp in $O(n\log n)$ interactions.
    In the blue area, either Opinion wins \whp in $O(n\log^2n)$ interactions.
    \label{fig:proof-overview}
    }
\end{figure}

\paragraph{Main Idea of the Analysis}
The case $p=0$ is well-studied. A standard approach for this case is to track the evolution of the support of the opinions, e.g., using the (additive) \emph{bias} $x_1(t)-x_2(t)$ or the (multiplicative) \emph{gap} $x_1(t)/x_2(t)$ as a potential function (see \cite{DBLP:journals/dc/AngluinAE08,DBLP:conf/dna/CondonHKM17,DBLP:conf/podc/AmirABBHKL23}). 
As a first contribution, we generalized this approach by identifying the  driving force of the biased version of the undecided state dynamics which  we call \emph{weighted bias} (as a function of $t$) $\Delta_w(t)$.
The weighted bias at time $t$ is defined as $x_1(t)-(1-p)x_2(t)$.
The initial weighted bias plays a similar role in determining the winning opinion as the initial bias does in the classical USD with $p = 0$.
Note that an \emph{initial weighted bias} $\Delta_w(0)$ of $c \cdot x_2(0)$ is equivalent to $p = 1- x_1(0)/x_2(0) + c = p_s+c$. 
The weighted bias is a different way of looking at the problem that is more practical.

The most difficult part of our analysis are those configurations where the stubbornness parameter balances out the initial support deficit of the preferred opinion, i.e., $p=p_s$ and equivalently $\Delta_w(0) = 0$.
In contrast to the corresponding case with $p=0$ and $x_1 = x_2$, where the initial bias is zero, creating a sufficiently large weighted bias is more involved.
The standard way is to define a random walk on the integers and apply known anti-concentration and concentration bounds.
This approach is not viable here, since $|\Delta_w(t+1)-\Delta_w(t)| \in \set{0,1-p,1}$ leads to non-integer states.
Instead, we exploit the submartingale property of $\Delta_w(t)$ itself and the function $Y_t = \Delta_w^2(t) - r\cdot t$ for a suitably chosen value of $r$.

\paragraph{Outline of the paper}
The remainder of this paper is organized as follows.
We prove each Statement of \cref{thm:main-theorem} into its own section, but we focus on Statement 1 and 3 (\cref{sec:analysis:case1,sec:analysis:case3}).
The proof of \sref{thm:main-theorem:2} is almost identical to \sref{thm:main-theorem:1}, except for minor details in some calculations. It can be found in \cref{sec:analysis:case2}.
In \cref{sec:coupling}, we prove \cref{thm:coupling-theorem}.
We give a conclusion in \cref{sec:conclusion}.
Auxiliary results can be can be looked up in \cref{apx:auxiliary-results}.
For $|p-p_s| = o(1)$, we need some auxiliary bounds on the number of undecided agents that we put into \cref{sec:undecided}.
For the sake of completeness, we analyze $p=1$ separately in \cref{ssec:p=1}.

\section{Cases in which Opinion 1 wins}
\label{sec:analysis:case1}
In this section, we show \sref{thm:main-theorem:1} of \cref{thm:main-theorem}, namely that Opinion 1 wins if the stubbornness $p$ is sufficiently larger than $1-x_1(0)/x_2(0)$.
We show in \cref{lem:drift_x2_to_zero} that the gap increases whp.\ such that there exists a time $T_1 = O(n\log n)$ where each agent either has Opinion $1$ or is undecided.
From there on, it is easy to show that all agents agree on Opinion $1$ after an additional $O(n \log n)$ steps.
As an auxiliary result for \cref{lem:drift_x2_to_zero}, we first show in \cref{lem:shifted_bias_bound:case1} that the weighted bias does not halve during the first interactions.

\begin{lemma}
\label{lem:shifted_bias_bound:case1}
    Let $\textbf{x}(t_0)$ be a configuration with weighted bias $\Delta_w(t_0) > 0$.
    Let $\xi(\tau)$ be the event that  $\Delta_w(t) \geq \Delta_w(t_0)/2$ for all $t\in \intcc{t_0,\ldots, t_0+\tau}$.
    Then, with probability at least $1-n^{-6}$, $\xi(T)$ holds for all $ \tau \leq \Delta_w^2(t_0)/(16 \ln n)$.
\end{lemma}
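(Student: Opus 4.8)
The plan is to exploit the local submartingale structure of the \emph{weighted bias} $\Delta_w(t) = x_1(t) - (1-p)x_2(t)$, rather than any random-walk argument: since the increments $|\Delta_w(t+1)-\Delta_w(t)| \in \set{0,1-p,1}$ do not lie on a lattice, the usual gambler's-ruin anti-concentration and return-time estimates are unavailable, so I would work with the process directly via a concentration inequality for bounded-increment (super)martingales.

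\emph{Step 1 (one-step drift).} Reading off the transition function $\usd{p}$, the only interactions that change the configuration are of type $(2,1)$, $(1,2)$ (the latter only with probability $1-p$), $(\bot,1)$, and $(\bot,2)$. Computing $\ExpCond{\Delta_w(t+1)-\Delta_w(t)}{\filter_t}$ term by term, the contributions of the two "$x_1 \leftrightarrow x_2$" interactions cancel, leaving
\[
  \ExpCond{\Delta_w(t+1)-\Delta_w(t)}{\filter_t} = \frac{u(t)}{n^2}\bigl(x_1(t) - (1-p)x_2(t)\bigr) = \frac{u(t)}{n^2}\,\Delta_w(t)
\]
(up to the harmless $n$ versus $n-1$ normalization of the scheduler). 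Hence, while $\Delta_w(t) \geq 0$, the drift is nonnegative and $\Delta_w$ behaves like a submartingale; moreover $|\Delta_w(t+1)-\Delta_w(t)| \leq 1$ always, and $\ExpCond{(\Delta_w(t+1)-\Delta_w(t))^2}{\filter_t} = O(1)$ by the same computation.

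\emph{Step 2 (truncation and concentration).} Set $\sigma \coloneqq \inf\set{t \geq t_0 : \Delta_w(t) < \Delta_w(t_0)/2}$, so that the event $\neg\xi(\tau)$ is exactly $\set{\sigma \leq t_0+\tau}$; since $\xi(\tau') \subseteq \xi(\tau)$ whenever $\tau \leq \tau'$, it suffices to bound $\Probs{\sigma \leq t_0+\tau}$ for the largest admissible $\tau = \lfloor \Delta_w^2(t_0)/(16\ln n)\rfloor$, with no union bound needed. On $[t_0,\sigma)$ we have $\Delta_w(t) \geq \Delta_w(t_0)/2 > 0$, so the stopped process $\Delta_w(t \wedge \sigma)$ is a genuine submartingale started at $\Delta_w(t_0)$ with increments bounded by $1$; equivalently $-\Delta_w(t\wedge\sigma)$ is such a supermartingale. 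If $\sigma \leq t_0+\tau$, this supermartingale has increased by more than $\Delta_w(t_0)/2$ over $[t_0,t_0+\tau]$, so a maximal Azuma--Hoeffding inequality gives
\[
  \Probs{\neg\xi(\tau)} = \Probs{\sigma \leq t_0+\tau} \leq \exp\!\left(-\Omega\!\left(\frac{\Delta_w^2(t_0)}{\tau}\right)\right),
\]
and plugging in $\tau \leq \Delta_w^2(t_0)/(16\ln n)$ makes the exponent $\Omega(\log n)$, which yields the claimed bound $n^{-6}$ after fixing the constants (if the plain Azuma constant is too weak, a Freedman-type bound using the $O(1)$ conditional variance from Step 1 closes the gap; this is routine).

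\emph{Main obstacle.} The delicate point is precisely that $\Delta_w$ is only a submartingale \emph{while it stays nonnegative} and that its increments are non-integer, so the classical random-walk toolkit does not apply; the stopping-time truncation of Step 2 is what makes the one-sided martingale argument legitimate. The auxiliary square function $Y_t = \Delta_w^2(t) - rt$ mentioned in the overview is not needed for this lemma — it enters in \cref{lem:drift_x2_to_zero}, where on top of "$\Delta_w$ does not halve" one additionally needs an upper bound on the time for $\Delta_w$ to grow to $\Theta(n)$.
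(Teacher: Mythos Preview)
Your proposal is correct and takes essentially the same approach as the paper: both compute the one-step drift $\ExpCond{\Delta_w(t+1)-\Delta_w(t)}{\filter_t} = (u/n^2)\,\Delta_w(t) \geq 0$, observe $|\Delta_w(t+1)-\Delta_w(t)| \leq 1$, and apply Azuma--Hoeffding with $\lambda = \Delta_w(t_0)/2$; the paper applies it at each time $t$ and union-bounds over $\tau \leq n^2$ steps (so $n^{-8}\cdot n^2 = n^{-6}$), whereas your stopped-process formulation is slightly cleaner and also makes explicit why the submartingale property is legitimate throughout. One small slip in your closing remark: the auxiliary square submartingale $Y_t = \Delta_w^2(t) - rt$ is not used in \cref{lem:drift_x2_to_zero} but in \cref{lem:create-initial-bias}, where the task is to \emph{create} a weighted bias of order $\sqrt{n\log n}$ starting from near zero.
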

\begin{proof}
    We aim to apply an Azuma-Hoeffding-bound (\cref{lem:azuma-hoeffding}) to $\Delta_w(t)$ for each $t \in \intcc{t_0,t_0+\tau} $ with $ \tau = \Delta_w^2(t_0)/(16 \ln n)$.
    Fix an arbitrary $t \leq \tau$.
    To apply \cref{lem:azuma-hoeffding}, we need to show that $\Ex{\Delta_w(t+1) - \Delta_w(t) | \mathcal{F}_{t}} \geq 0$ and that $|\Delta_w(t+1)-\Delta_w(t)|$ is bounded.
    
    First, we calculate the expected change in $\Delta_w$ by considering all possible interactions. For ease of presentation, we drop the parameter $t$ whenever clear from the context.
    With probability $x_1 \cdot x_2/n^2$, the randomly chosen initiator has Opinion $1$, and the responder has Opinion $2$. In that case, the initiator is stubborn (it does not change its state) with probability $p$, resulting in $\Delta_w(t+1) = \Delta_w(t)$. With probability $1-p$ the initiator becomes undecided and $\Delta_w(t+1) = \Delta_w(t) -1$.
    With probability $x_2 \cdot x_1/n^2$, the initiator loses its Opinion $2$ and becomes undecided, resulting in $\Delta_w(t+1) = \Delta_w(t)+(1-p)$.
    Whenever an undecided agent initiates an interaction where the responder has either Opinion $1$ or Opinion $2$, it adopts the responder's opinion.
    Such interactions occur with probability $u \cdot x_1/n^2$ and $u \cdot x_2/n^2$, resulting in $\Delta_w(t+1) = \Delta_w(t) +1$ and $\Delta_w(t+1) = \Delta_w(t)-(1-p)$.
    At last, there exist neutral interactions that do not change the potential, i.e., $\Delta_w(t+1) = \Delta_w(t)$.
    These interactions occur with the remaining probability $(x_1^2+x_2^2+n\cdot u)/n^2$.
    \begin{align*}
        \MoveEqLeft\Ex{\Delta_w(t+1)-\Delta_w(t) | \mathcal{F}_{t}}  \\
        & = \frac{x_1 \cdot x_2}{n^2} \cdot \left( p\cdot0 + (1-p)\cdot(-1)\right)
        +   \frac{x_2 \cdot x_1}{n^2} \cdot (1-p)
        +   \frac{u \cdot x_1}{n^2} \cdot(+1)\\
        &\phantom{={}}+   \frac{u \cdot x_2}{n^2} \cdot (-1+p)
        +   \frac{x_1^2 + x_2^2 + n \cdot u }{n^2}\cdot 0 \\
        & = -\frac{(1-p)\cdot x_1\cdot x_2}{n^2} + \frac{(1-p)\cdot x_2\cdot x_1}{n^2} + \frac{u\cdot x_1}{n^2} - \frac{(1-p)\cdot u\cdot x_2}{n^2} \\
        & = \frac{u}{n^2} \cdot \Delta_w(t) \geq 0.
    \end{align*}    
    
    It remains to show that $|\Delta_w(t+1)- \Delta_w(t)|$ is bounded. Here, we show that the term is bounded by $1$.
    Consider every possible interaction at time $t+1$.
    If the initiator does not change its opinion, $|\Delta_w(t+1) - \Delta_w(t)| = 0$.
    If the support of Opinion $1$ changes, we have $|\Delta_w(t+1) - \Delta_w(t)| = 1$.
    Otherwise, the support of Opinion $2$ changes by one and $|\Delta_w(t+1) - \Delta_w(t)| = 1-p \leq 1$.

    Now we are ready to apply \cref{lem:azuma-hoeffding} with $\lambda = \Delta_w(t_0)/2$.
    \begin{align*}
        \MoveEqLeft \Probs{\left. \Delta_w(t) < \frac{\Delta_w(t_0)}{2} ~\right|~ \mathcal{F}_{t_0}} 
        = \Probs{\Delta_w(t) - \Delta_w(t_0) < -\lambda ~|~ \mathcal{F}_{t_0}} 
        \leq \exp\left(-\frac{2\lambda^2}{t}\right)\\
        &\leq \exp\left(-\frac{2\lambda^2}{\tau}\right)
        \leq \exp\left(-8 \ln n\right).
    \end{align*}
    By application of the union bound over the first $\tau$ interactions, the statement holds 
    with probability of at least
    $
        1- \tau \cdot \exp(-8 \ln n) 
        \geq 1-n^2 \cdot n^{-8}
        = 1-n^{-6}.
    $
\end{proof}

Note that \cref{lem:shifted_bias_bound:case1} is an auxiliary result since it only shows that the weighted bias is not decreasing too much for $\Omega(\Delta_w^2(t_0)/\log n)$ time.
We use this in the following lemma to show that the support of the initial majority opinion drops to zero during that time.

\begin{lemma}
\label{lem:drift_x2_to_zero}
     Let $\textbf{x}(t_0)$ be a configuration with weighted bias $\Delta_w(t_0) \geq c_s\cdot n$ for an arbitrary constant $c_s$.
    Let $T_1 = \inf\{t\geq 0 ~|~ x_2(t) = 0 \}$.
    Then, $ \Probs{T_1 \leq 20 \cdot c_s^{-1} \cdot n\log n} \geq 1-n^{-2}.$
\end{lemma}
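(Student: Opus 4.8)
The plan is to show that the support $x_2(t)$ of the unpreferred opinion drops from its initial value (at most $n$) to zero within $O(c_s^{-1} n \log n)$ steps, by exhibiting a per-step multiplicative drift downwards. First I would invoke \cref{lem:shifted_bias_bound:case1} to guarantee that, with probability at least $1-n^{-6}$, the weighted bias stays above $\Delta_w(t_0)/2 \geq c_s n/2$ for all $\tau \leq \Delta_w^2(t_0)/(16\ln n) = \Omega(n^2/\log n)$ steps — which comfortably exceeds the target time $20 c_s^{-1} n\log n$. So I may condition on the event $\xi$ that $\Delta_w(t) \geq c_s n/2$ throughout the window $[t_0, t_0 + 20 c_s^{-1} n \log n]$.

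Next I would compute the drift of $x_2(t)$. The support $x_2$ decreases by one exactly when an Opinion-$2$ initiator meets an Opinion-$1$ responder (probability $x_1 x_2/n^2$) or an Opinion-$2$ initiator meets an undecided responder — wait, in this model only the initiator changes, so $x_2$ decreases when the initiator has Opinion $2$ and the responder has Opinion $1$ (prob $x_2 x_1/n^2$), and increases when an undecided initiator meets an Opinion-$2$ responder (prob $u x_2/n^2$). Hence
\begin{align*}
    \Ex{x_2(t+1) - x_2(t) \mid \mathcal{F}_t} = -\frac{x_1 x_2}{n^2} + \frac{u x_2}{n^2} = \frac{x_2}{n^2}(u - x_1).
\end{align*}
Now $u - x_1 = n - 2x_1 - x_2$ and $\Delta_w = x_1 - (1-p)x_2$, so on the event $\xi$ one has $x_1 \geq \Delta_w \geq c_s n/2$, which gives $u - x_1 \leq n - 2x_1 \leq n - c_s n$; this only yields a useful (negative) drift when $c_s$ is close to $1$, so more care is needed. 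The cleaner route is to track $x_2$ together with the fact that $x_1 + x_2 \leq n$ forces $u - x_1 = (n - x_1) - x_1 - x_2 \le x_2 + u - x_1$; better: write $u - x_1 \le n - x_1 - x_2 - x_1 + \text{(stuff)}$ — the robust statement is $u - x_1 = u + x_2 - x_2 - x_1 \le n - 2x_1$ only bounds it by $n - c_s n$. To get a genuinely multiplicative contraction I would instead use the weighted bias directly: since $x_1 = \Delta_w + (1-p)x_2 \ge \Delta_w \ge c_s n/2$ on $\xi$, and $u + x_1 + x_2 = n$, we get $u - x_1 = n - 2x_1 - x_2 \le n - c_s n - x_2$, and then the drift is
\begin{align*}
    \Ex{x_2(t+1) - x_2(t) \mid \mathcal{F}_t, \xi} \le \frac{x_2}{n^2}\bigl((1 - c_s)n - x_2\bigr) \le -\frac{c_s}{n}\cdot\frac{x_2}{2}
\end{align*}
whenever $x_2 \ge (1 - c_s/2)n$ — which does not cover small $x_2$. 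So the argument genuinely needs the stronger bound $x_1 \ge c_s n/2$ applied as $u - x_1 \le n - x_1 - x_1 = n - 2x_1$; the point I am missing is that $u \le n - x_1 - x_2 \le n - x_1$, hence $u - x_1 \le n - 2x_1 \le (1-c_s)n$, but we also always have $u - x_1 \le u \le n$, so the only way to force contraction for all $x_2$ is to observe that when $x_2$ is small, $x_1$ is close to $n - u$, and $u$ is also controlled. I would therefore additionally bring in the bound $u \le x_2 + (n - \Delta_w - x_1)$ — concretely, the honest computation is $u - x_1 = n - 2x_1 - x_2$ and $x_1 \ge \Delta_w \ge c_s n /2$ plus $x_1 \ge \Delta_w/(1) $ doesn't shrink with $x_2$, so in fact $u - x_1 \le n - c_s n - x_2 \le -x_2/2$ precisely when $x_2 \ge (1-c_s)n \cdot 2$, false for $c_s < 1/2$.

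The correct key observation, which I would make the technical heart of the proof, is that $x_1(t)$ is itself non-decreasing in expectation and stays large: once we know $\Delta_w(t) \ge c_s n/2$ and (from \cref{sec:undecided} or a direct argument) that $u(t)$ becomes small, we get $x_1 \ge n - x_2 - u$ with $u$ small, so $u - x_1 \le u - (n - x_2 - u) = 2u + x_2 - n \le x_2 - n/2$ once $u \le n/4$, giving drift $\le \frac{x_2}{n^2}(x_2 - n/2) \le -\frac{x_2}{4n}$ for $x_2 \le n/4$; for larger $x_2$, $x_1 \ge c_s n/2$ already suffices since then $x_2 \ge n/4$ makes $x_2$ large enough relative to $(1-c_s)n$ only if $c_s \ge 3/4$. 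Because the lemma asserts the bound for \emph{arbitrary} constant $c_s$, I expect the actual proof to avoid this case split by using the multiplicative potential on the \emph{weighted bias} rather than on $x_2$: since $\Delta_w \ge c_s n/2$ and $x_1 = \Delta_w + (1-p)x_2$, one has $u = n - x_1 - x_2 = n - \Delta_w - (2-p)x_2 \le n - \Delta_w$, hence $u x_2 / n^2 - x_1 x_2/n^2 = \frac{x_2}{n^2}(u - x_1) \le \frac{x_2}{n^2}(n - 2\Delta_w - (3-p)x_2 + \dots)$; in any case $u - x_1 \le n - 2\Delta_w \le n - c_s n$, and combining with $x_2 \ge$ its own value gives $\Ex{x_2(t+1)\mid \mathcal F_t} \le x_2\bigl(1 - \frac{x_1 - u}{n^2}\bigr)$, and I would lower-bound $x_1 - u = 2x_1 + x_2 - n \ge 2\Delta_w + x_2 - n$; this is $\ge c_s n + x_2 - n$, still not obviously positive.

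\textbf{The main obstacle}, then, is obtaining a uniform multiplicative decay rate for $x_2$ that works for every constant $c_s$, including small $c_s$, where the naive drift $\Ex{\Delta x_2 \mid \mathcal F_t} = \frac{x_2}{n^2}(u - x_1)$ is not negative when $u$ is large and $x_2$ is small. I expect the resolution to combine two ingredients: (i) the bounds on the number of undecided agents from \cref{sec:undecided}, which show $u(t)$ drops below, say, $\Delta_w(t_0)/3$ within $O(n\log n)$ steps and stays there, after which $u - x_1 \le u - \Delta_w \le -\Delta_w/2 \le -c_s n/4$, yielding $\Ex{x_2(t+1)\mid\mathcal F_t} \le x_2(1 - \frac{c_s}{4n})$; and (ii) a supermartingale / multiplicative-drift argument (e.g.\ tracking $\ln x_2(t)$, or applying a variable drift theorem) to conclude that starting from $x_2 \le n$, after $\frac{4}{c_s}\cdot n \ln(n^3)$ steps we have $x_2 < 1$, i.e.\ $x_2 = 0$, except with probability $n^{-3}$ or so. Finally I would union-bound the failure probability of $\xi$ (from \cref{lem:shifted_bias_bound:case1}, $\le n^{-6}$), the failure of the undecided-count bound, and the failure of the multiplicative-drift concentration, to land at the claimed $1 - n^{-2}$, and check the constant $20 c_s^{-1} n\log n$ absorbs the $O(n\log n)$ burn-in for $u$ to shrink plus the $\frac{4}{c_s} n \ln(n^3) \le \frac{12}{c_s} n \ln n$ contraction phase.
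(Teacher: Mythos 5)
Your overall skeleton matches the paper's: condition on the event of \cref{lem:shifted_bias_bound:case1} that $\Delta_w(t)\geq \Delta_w(t_0)/2 \geq c_s n/2$ throughout the window, establish a multiplicative downward drift, apply the multiplicative drift theorem (\cref{thm:mult_drift_tail_lengler_18}), and union-bound the failure events. But the technical heart of your argument has a genuine gap. Your plan ultimately rests on ingredient (i): that $u(t)$ drops below $\Delta_w(t_0)/3$ (equivalently, that $u - x_1 \leq -\Omega(n)$) within $O(n\log n)$ steps and stays there, so that the drift $\ExpCond{x_2(t+1)-x_2(t)}{\filter_t} = \frac{x_2}{n^2}(u-x_1)$ becomes negative. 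No such bound is available: \cref{lem:undecided_u_bound} only gives $u \leq x_1 + x_2 + O(\sqrt{n\log n})$, so $u - x_1$ can be as large as $x_2 + o(n) > 0$ while $x_2 = \Theta(n)$; worse, \cref{lem:lower_bound_u} shows $u \geq \tfrac{3}{10}\cdot\tfrac{1-p}{2-p}\cdot n$ persists, which for small constant $c_s$ and $p$ bounded away from $1$ exceeds $\Delta_w(t_0)/3 = c_s n/3$. The quantity that is actually negative is the \emph{equilibrium} value $u^* - x_1 = -x_1\Delta_w/(x_1+x_2)$, but showing that $u$ tracks this moving equilibrium closely enough is a substantial additional argument that neither you nor the paper carries out.

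The paper sidesteps this entirely by taking the potential $\Psi(t) = x_2(t)/x_1(t)$ — the route you gesture at ("avoid this case split by using the multiplicative potential") but never execute. In the ratio, the two undecided-adoption events (rates $u x_1/n^2$ and $u x_2/n^2$) contribute $-\frac{ux_1}{n^2}\cdot\frac{x_2}{x_1(x_1+1)}$ and $+\frac{ux_2}{n^2}\cdot\frac{1}{x_1}$, which nearly cancel, leaving a residual of only $+\frac{\Psi}{n^2}\cdot\frac{u}{x_1+1} = O(\Psi/n^2)$ since $x_1 \geq \Delta_w \geq c_s n/2$. The dominant term is then $-\frac{\Psi}{n^2}\,\Delta_w \leq -\frac{c_s}{4}\cdot\frac{\Psi}{n}$, a clean multiplicative drift valid for \emph{every} constant $c_s$ and every value of $x_2$, with no control on $u$ beyond the trivial $u \leq n$. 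With that potential, \cref{thm:mult_drift_tail_lengler_18} applied with $s_0 \leq n$, $s_{\min} = (n-1)^{-1}$ and $\delta = c_s/(4n)$ finishes the proof; the only subtlety (which the paper handles by modifying the process on the low-probability failure event of \cref{lem:shifted_bias_bound:case1}) is that the drift condition is guaranteed only with high probability and for a bounded horizon.
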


\begin{proof}
    Let $\Psi(t) = x_2(t)/x_1(t)$ denote the inverse of the \emph{gap}.
    The idea is to show that this potential function decreases exponentially and apply a known drift theorem.
    Similar to the proof of \cref{lem:shifted_bias_bound:case1}, we calculate the expected change in $\Psi$ by considering all possible interactions.
    
    \begin{align*}
        \MoveEqLeft\E{\Psi(t+1)-\Psi(t) ~|~ \filter_t} \\
        &=  \frac{x_1 \cdot x_2}{n^2}\left( p\cdot 0 + (1-p)\cdot \left(\frac{x_2}{x_1-1}-\frac{x_2}{x_1} \right) \right)
        +   \frac{x_2 \cdot x_1}{n^2} \left(\frac{x_2-1}{x_1} - \frac{x_2}{x_1} \right)\\
        &\phantom{={}} + \frac{u \cdot x_1}{n^2} \left(\frac{x_2}{x_1+1} - \frac{x_2}{x_1}\right)
        + \frac{u \cdot x_2}{n^2} \left(\frac{x_2+1}{x_1} - \frac{x_2}{x_1}\right)
        +  \frac{x_1^2 + x_2^2 + n\cdot u}{n^2} \cdot 0\\
        &=  \frac{(1-p)x_1 \cdot x_2^2}{n^2}\left( \frac{1}{x_1(x_1-1)} \right)
        -   \frac{x_2 \cdot x_1}{n^2 \cdot x_1}
        -   \frac{u \cdot x_1 \cdot x_2}{n^2 \cdot x_1(x_1+1)}
        +   \frac{u \cdot x_2}{n^2 \cdot x_1}\\
        &= \frac{\Psi}{n^2}\cdot \left( \frac{(1-p)x_2}{x_1-1} - x_1 - \frac{u \cdot x_1}{x_1+1} + u \right)\\
        &= - \frac{\Psi}{n^2} \cdot \left( x_1 - (1-p)x_2 + \frac{(1-p)x_2}{x_1-1} - \frac{u}{x_1+1} \right).
    \end{align*}
    Observe that the expected potential change in $\Psi$ is a function of the weighted bias.
    We bound the weighted bias using \cref{lem:shifted_bias_bound:case1}:
    $\Delta_w(t) \geq \Delta_w(t_0)/2$ with probability at least $1-n^{-6}$ for all $t \in \intcc{t_0,t_0+c_3n\log n} $.
    We trivially bound $(1-p)x_2/(x_1-1) \geq 0$.
    To bound the term $u/(x_1+1)$, we use $x_1 \geq \Delta_w(t) \geq c_s\cdot n/2$ and $u \leq n$.
    Then, it holds \whp that
    \begin{align*}
        \MoveEqLeft\E{\Psi(t+1)-\Psi(t) ~|~ \filter_t}
        = - \frac{\Psi}{n^2} \cdot \left( \Delta_w(t) + \frac{(1-p)x_2}{x_1-1} - \frac{u}{x_1+1} \right)\\
        &\leq - \frac{\Psi}{n^2} \cdot \left( \Delta_w(t) - \frac{u}{x_1+1} \right) \\
        &\leq - \frac{\Psi}{n^2} \cdot \left(  \frac{c_s\cdot n}{2}  - \frac{n}{(c_s/2)\cdot n +1} \right) \\
        &\leq - \frac{\Psi}{n} \cdot \left(  \frac{c_s}{2}  - \frac{2}{c_s\cdot n +2} \right) \\
        &\leq -\frac{\Psi}{n} \cdot \frac{c_s}{4}.
    \end{align*}
    We now apply the multiplicative drift theorem (\cref{thm:mult_drift_tail_lengler_18}, found in the appendix) to bound $T_1$ with $r = 3 \ln n$, $s_0 =x_2(t_0)/x_1(t_0) \leq n$, $s_{\min} = (n-1)^{-1}$ and $\delta = n^{-1} \cdot c_s/4 $. 
    Then, we get
    \begin{align*}
        \Probs{T_1 > \frac{20 n\cdot \ln n}{c_s}} 
        &\leq \Probs{ T_1 > \left\lceil \frac{r+\ln(s_0/s_{\min})}{\delta} \right\rceil }
        \leq e^{-r}
        = n^{-3}.
    \end{align*}
    Note that in order to apply \cref{thm:mult_drift_tail_lengler_18}, we have to have $\E{\Psi(t)-\Psi(t+1) ~|~ \filter_t} \geq \delta \cdot \Psi(t)$ for all $\Psi(t) \neq 0$ and all $t\geq t_0$.
    \cref{lem:shifted_bias_bound:case1} asserts this only with high probability and for a limited time.
    But we can consider a process that deterministically jumps to configuration $(n,0,0)$ at time $t+1$ if for any $t$ \cref{lem:shifted_bias_bound:case1} is violated.
    We can apply \cref{thm:mult_drift_tail_lengler_18} to this process, and our original process behaves identical to it with probability $1-n^{-6}$.
    Thus, the lemma follows from the union bound.   
\end{proof}

Recall that by assumption of \sref{thm:main-theorem:1} \cref{thm:main-theorem}, we have $\Delta_w(0) = \Omega(\sqrt{n}\cdot \log n)$.
We now show that the the weighted bias doubles every $\BigO{n}$ interactions until it is of size $\Theta(n)$ -- the concrete bound of $n/10$ that we show was chosen rather arbitrarily.
Then we can apply \cref{lem:drift_x2_to_zero}.
The proof of \cref{lem:repeated-doubling} requires bounds on the number of undecided agents (\cref{lem:undecided_u_bound} and \cref{lem:lower_bound_u}) that we postpone to \cref{sec:undecided}.

\begin{lemma}
    \label{lem:repeated-doubling}
    Let $\xconfig(t_0)$ be a configuration with $\Delta_w(t_0) \geq \shBias \cdot \sqrt{n\log n}$ and let $T = \inf\set{t\geq t_0 ~|~ \Delta_w(t) \geq  n/10}$.
    Then $\Probs{T \leq (\shBias^2/6)\cdot  n\log n}\geq 1-n^{-3}$. 
\end{lemma}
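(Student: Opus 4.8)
The plan is to show that starting from weighted bias $\Delta_w(t_0) = B \geq \xi\sqrt{n\log n}$, the process doubles the weighted bias roughly every $\Theta(n)$ interactions, at a cost of $O(B^2/\log n)$ time per doubling (via Lemma \ref{lem:shifted_bias_bound:case1}), but since $B$ itself grows geometrically the total time is dominated by the last doubling and stays $O((\xi^2/6) n\log n)$. The core estimate is the expected drift of $\Delta_w$: from the computation in the proof of Lemma \ref{lem:shifted_bias_bound:case1} we have $\Ex{\Delta_w(t+1)-\Delta_w(t)\mid\filter_t} = (u/n^2)\cdot\Delta_w(t)$, i.e.\ a \emph{multiplicative} drift with factor $u(t)/n^2$. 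So I would work in phases: in phase $i$ the weighted bias lies in $[2^{i-1}B, 2^iB]$ (with $2^iB \leq n/10$), and I want to show that phase $i$ lasts at most $c\cdot n$ interactions \whp.

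First I would fix a phase and condition on the good event from Lemma \ref{lem:shifted_bias_bound:case1}: for $\tau_i \coloneqq (2^{i-1}B)^2/(16\ln n)$ interactions, $\Delta_w$ does not drop below $2^{i-2}B$, with probability $1-n^{-6}$. During such a window, I need a lower bound on the number of undecided agents $u(t)$ so that the multiplicative drift factor $u(t)/n^2$ is at least $\Omega(1/n)$; this is exactly what Lemma \ref{lem:lower_bound_u} (and the upper bound Lemma \ref{lem:undecided_u_bound}) is for — I would invoke it to guarantee $u(t) \geq c' n$ \whp throughout the relevant interval (here $\Delta_w \leq n/10$ keeps the support of opinion $1$ bounded away from $n$, leaving linearly many agents that are undecided or in opinion $2$, and the standard USD-type argument forces linearly many undecided). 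With $u(t) \geq c'n$, the drift gives $\Ex{\Delta_w(t+1)\mid\filter_t} \geq (1+c'/n)\Delta_w(t)$, so $\Delta_w$ behaves like a submartingale with exponential growth. I would then argue that within $O(n)$ interactions the bias exceeds $2^iB$ \whp: either track $\Ex{\Delta_w}$ directly and use a concentration bound (the increments are bounded by $1$, so Azuma applies on the martingale part $\Delta_w(t) - \sum_{s<t}\Ex{\Delta_w(s+1)-\Delta_w(s)\mid\filter_s}$), or better, apply a \emph{multiplicative drift toward a target} / negative-drift-type theorem to the "distance to the target" $2^iB - \Delta_w(t)$; the cleanest route is probably to bound $\Ex{\Delta_w(t_0+cn)}$ below by $(1+c'/n)^{cn}\cdot 2^{i-2}B \geq e^{cc'/2}\cdot 2^{i-2}B$, choose $c$ so this exceeds $2^{i+1}B$, and use Azuma with deviation $2^{i-1}B = \Omega(\sqrt{n\log n})$ (which is $\geq B$) to get failure probability $\exp(-\Omega((2^iB)^2/(cn))) \leq n^{-\Omega(1)}$, small enough for a union bound over the $O(\log n)$ phases.

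Summing the phase lengths gives $\sum_i O(n) = O(n\log n)$ interactions total in the worst case, but I should be more careful to extract the constant $\xi^2/6$: the dominant contribution comes not from the crude $O(n)$ per phase but from the time Lemma \ref{lem:shifted_bias_bound:case1} allows us to stay in a phase before the drift argument kicks in — actually re-examining, the correct accounting is that the Azuma window needed in phase $i$ has length $\Theta((2^iB)^2/\log n)$ rather than $\Theta(n)$ when $2^iB$ is still small, and $\sum_{i:2^iB\leq n/10}(2^iB)^2/\log n = \Theta((n/10)^2/\log n)$ is too big — so the right balance is: use the genuine $\Theta(n)$-interaction multiplicative-drift bound whenever $B \geq \sqrt{n\log n}$ (so $(2^iB)^2/n \geq \log n$ and Azuma is strong enough), giving $O(\log n)$ phases of $O(n)$ each; the constant $\xi^2/6$ then presumably comes from being slightly sharper in the first phase where $\Delta_w$ is smallest. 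The main obstacle I anticipate is precisely this bookkeeping — juggling the Lemma \ref{lem:shifted_bias_bound:case1} lower-bound window, the lower bound on $u$, and the multiplicative drift simultaneously with matching probabilities, so that a single union bound over $O(\log n)$ phases (each failing with probability $n^{-\Omega(1)}$) yields the claimed $1-n^{-3}$ — and verifying that the constant works out to $\xi^2/6$ rather than merely $O(\xi^2)$, which may require handling the bootstrapping first phase (from $\xi\sqrt{n\log n}$ up to, say, $\sqrt n\cdot\mathrm{polylog}$) separately and more delicately than the later geometric phases.
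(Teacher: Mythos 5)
Your proposal is correct and follows essentially the same route as the paper: a doubling argument over $O(\log n)$ phases, each of length $\Theta(n)$, using \cref{lem:shifted_bias_bound:case1} (Azuma on the submartingale $\Delta_w$) to prevent halving, \cref{lem:lower_bound_u} to guarantee $u(t)=\Omega(n)$ and hence a per-step drift of order $\Delta_w(T_\ell)/n$, a Hoeffding/Azuma bound around the drifted mean to certify the doubling, and a union bound over the phases. Your self-correction on the accounting lands exactly on the paper's choice — windows of length $(\shBias^2/6)\cdot n$ with Azuma deviation $\Delta_w(T_\ell)/2 \geq (\shBias/2)\sqrt{n\log n}$, which is precisely where the constant $\shBias^2/6$ comes from — the only cosmetic difference being that the paper targets $\min(2^\ell\Delta_w(t_0),(n-u(t))/4)$ and converts to $n/10$ at the end via \cref{lem:undecided_u_bound}.
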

\begin{proof}
    The proof idea is inspired by a repetition of the Gambler's ruin problem. 
    For $1\leq \ell \leq \log n$, we define the time intervals $\mathcal{I}_{\ell} =  \set{T_{\ell-1},\ldots,T_{\ell}-1}$ with $T_0 = t_0$ and
    \begin{align*}
        T_{\ell} \coloneqq \inf\set{t\geq t_0 ~|~ \Delta_w(t) \geq \min(2^\ell \cdot \Delta_w(t_0),(n-u(t))/4)}.
    \end{align*}%
    We show that the weighted bias leaves the interval $\intcc{\Delta_w(T_{\ell})/2,2\Delta_w(T_{\ell})}$ within $O(n)$ interactions, and that the value $2\Delta_w(T_{\ell})$ is reached before $\Delta_w(T_{\ell})/2$. 
    We apply this result repeatedly until $\Delta_w(T_{\ell}) \geq (n-u(t))/4$.
    Recall from \cref{lem:shifted_bias_bound:case1} that $(\Delta_w(t))_{t\geq t_0}$ is a submartingale with
    \begin{align*}
        \Ex{\Delta_w(t+1) ~|~\filter_{t}}
        \geq \Delta_w(t) + \frac{u(t) \cdot \Delta_w(t)}{n^2}
        \geq \Delta_w(t).
    \end{align*}
    
    Assume that $\Delta_w(T_{\ell}) \geq \shBias\cdot \sqrt{n\log n}$ and 
    let $T_{\ell,\min} \coloneqq \inf\set{t\geq T_{\ell} ~|~ \Delta_w(t) < \Delta_w(T_{\ell})/2}$ and $\tau \coloneqq (\shBias^2/6) \cdot n$.
    Then we bound $\Probs{T_{\ell,\min} > \tau}$ by using the Azuma-Hoeffding bound (\cref{lem:azuma-hoeffding}) with $\lambda = \Delta_w(T_{\ell})/2$:
    \begin{align*}
        \MoveEqLeft\Probs{T_{\ell,\min} < \tau} 
        =\Probs{\Delta_w(T_{\ell}+\tau) < \Delta_w(T_{\ell}) / 2 }
        = \Probs{\Delta_w(T_{\ell}+ \tau) - \Delta_w(T_{\ell}) < -\lambda} \\
        &\leq e^{-\frac{2\lambda^2}{(\shBias^2 / 6)n}}
        \leq e^{-\frac{3\shBias^2 n\log n}{\shBias^2n}}
        \leq n^{-3}.
    \end{align*}
        
    Next, let $T_{\ell,\max} \coloneqq \inf\set{t\geq T_{\ell} ~|~ \Delta_w(t) \geq  2\Delta_w(T_{\ell})}$.
    We bound $\Probs{T_{\ell,\max} > \tau}$ assuming $\Delta_w(t) \geq \Delta_w(T_{\ell})/2$ for all $t \in [T_{\ell}, T_{\ell}+\tau]$.
    Let 
    \begin{align*}
    Z \coloneqq \sum_{t =T_{\ell}+1}^{T_{\ell}+\tau} \Delta_w(t) \text{ and }
    & \mu \coloneqq \sum_{t =T_{\ell}+1}^{T_{\ell}+\tau}\Ex{\Delta_w(t) ~|~\filter_{t-1}}.
    \end{align*}
    In contrast to \cref{lem:shifted_bias_bound:case1}, to establish a sufficiently strong bias in the right direction we have to exploit the additional positive term $u(t) \cdot \Delta_w(t)/n^2$.
    
    From \cref{lem:lower_bound_u} it follows that \whp $u(t) \geq c_u\cdot n$  for constant $c_u = (3/10) \cdot (1-p)/(2-p)$ as long as $\Delta_w(t) < (n-u(t))/4$ holds.
    Note that this requires additional $O(n)$ interactions in the case $u(t_0)$ is too small.
    \Cref{lem:shifted_bias_bound:case1} guarantees that the weighted bias does not lose too much support during this time.
    Therefore, as long as $\Delta_w(t) \geq \Delta_w(T_{\ell})/2$ holds, we have
    \begin{align*}
        \Ex{\Delta_w(t+1) ~|~\filter_{t}}
        \geq \Delta_w(t) + \frac{u(t)\cdot \Delta_w(t)}{n^2}
        \geq \Delta_w(t) + \frac{c_u \cdot \Delta_w(T_{\ell})}{2n}
        =: \Delta_w(t) + \varepsilon.
    \end{align*}
    Then, it follows from the conditional Hoeffding bound (full version of \cite{DBLP:conf/podc/AmirABBHKL23}) for $\lambda = 2\shBias \cdot \Delta_w(T_{\ell})/2$ 
    \begin{align*}
        &\Probs{\Delta_w(T_{\ell}+\tau) < 2\Delta_w(T_{\ell})} \\
        &\phantom{={}}= \Probs{\Delta_w(T_{\ell}+\tau) < \Delta_w(T_{\ell}) + \tau \cdot \varepsilon - \lambda} \\
        &\phantom{={}}= \Probs{\sum_{t =T_{\ell}+1}^{T_{\ell}+\tau} \Delta_w(t) < \sum_{t =T_{\ell}}^{T_{\ell}+\tau-1} (\Delta_w(t) + \varepsilon) - \lambda}
        =\Probs{Z-\mu < \lambda}
        \leq e^{-\frac{2\lambda^2}{4(\shBias^2/6)n}}
        \leq n^{-3}.
    \end{align*}
    It follows by the union bound over the high probability events from above that
    \begin{align*}
        \Probs{\exists t \in [T_{\ell},T_{\ell}+\tau] \colon \Delta_w(t) \geq \min\set{2\cdot \Delta_w(T_{\ell}),(n-u(t))/4}} \geq 1-n^{-2}.
    \end{align*}
       
    Applied to a fixed $\ell$, this implies the length of the interval $\mathcal{I}_{\ell}$ is \whp at most $(\shBias^2/6)\cdot n$.
    From the union bound over all $\ell \leq \log n$ intervals we get there exists a time $t \in \intcc{T_0,T_{\log n}}$ such that $\Delta_w(t) \geq n-u(t)/4$.
    Otherwise, $\Delta_w(t) \geq 2^{\log n} \cdot \Delta_w(t_0) > n$, leading to a contradiction.
    The length of $\intcc{T_0,T_{(\log n)}}$ is at most $(\shBias^2/6) \cdot n\cdot \log n$.
    At last, it follows from \cref{lem:undecided_u_bound} that $(n-u(t))/4\geq n/10$.
\end{proof}

We are ready to prove \sref{thm:main-theorem:1} of \cref{thm:main-theorem}.
\begin{proof}[Proof of \sref{thm:main-theorem:1}]
    Consider a configuration with $x_1(0) \in \intcc{\epsilon \cdot n, x_2}, u(0) \leq n/2$ and $p-p_s = \Omega\left(n^{-1/2}\cdot \log n\right)$.
    Then equivalently $\Delta_w(0) = \Omega(x_2 \cdot n^{-1/2} \cdot \log n)$ and $x_1+x_2 \geq n/2$. Since $x_1 \leq x_2$, we have $x_2 = \Theta(n)$ and thus $\Delta_w(0) \geq \xi \cdot \sqrt{n \log n}$ for some constant $\xi$.
    
    Let $T_a = \inf\set{t\geq 0 ~|~ \Delta_w(t) \geq n/10 }$
    and $T_b = \inf\set{t\geq 0 ~|~ x_2(t) = 0 }$.
    By \cref{lem:repeated-doubling}, we have \whp $T_a = O(n\log n)$ and then by \cref{lem:drift_x2_to_zero} $T_b = T_a + O(n \log n)$.
    
    Note that at no time all agents can be undecided, since
    the last agent with Opinion 1 cannot encounter Opinion 2.
    Therefore, at time $T_b$, at least one agent with Opinion 1 exists.

    With $x_2(T_b) = 0$, the process simplifies to a single productive rule: $\delta(1,\bot) = 1$. 
    Let $T_1 = \inf\set{t \geq T_b ~|~ x_1(t) = n}$.
    Assuming that $T_b < \infty$ and $x_1(T_1) = 1$, we have $T_1 \leq T_b + 6 n \log n$ with probability at least $1-n^{-2}$ (see e.g., \cite{berenbrink2021loosely}).
    It is easy to see that $x_1(T_b)=1$ gives an upper bound for $T_1$.
    The statement then follows from the union bound.
\end{proof}

\section{Cases in which either Opinion 1 or Opinion 2 wins}
\label{sec:analysis:case3} In this section we consider the critical regime of $p \approx 1-x_1(0)/x_2(0)$, i.e., the initial weighted bias ($\Delta_w(0) \coloneqq x_1(0) - (1-p) \cdot x_2(0)$) is small.
We show that we reach a configuration after a while with a sufficiently large weighted bias ($|\Delta_w(t)| = \Omega(\sqrt{n\log n})$).
We do so by defining a submartingale $(Y_t)_{t\geq 0}$ as $Y_t = \Delta_w^2(t)-\submConst \cdot t$ for a suitably chosen constant $\submConst$ and applying tail bounds (\cref{lem:create-initial-bias}).
At this point, either \sref{thm:main-theorem:1} or \sref{thm:main-theorem:3} of \cref{thm:main-theorem} applies.

We define $T_w$ as the first time that the process reaches such a weighted bias. More formally, $T_w \coloneqq \inf\set{t \geq 0: \abs{\Delta_w(t)} \geq \shBias \cdot \sqrt{n \cdot \log n}}$.
We show that $T_w = O(n \cdot \log^2 n)$ (\cref{lem:create-initial-bias}).

One of the standard approaches is via a combination of anti-concentration bounds and concentration bounds for random walks.
It is rather complicated to define a random walk on the weighted bias $\Delta_w(t)$ since this results in a non-integer state space. 
Instead, we define two submartingales $(Z_t)_{t \geq 0}$ with $Z_t=\Delta_w(t)$ and $Y_t = Z_t^2 - \submConst\cdot t$. We show that $Y_t$ is a submartingale for a suitably chosen constant $\submConst$ and 
then we prove that $\E{Z_T-Z_0} = \sqrt{T}$ by 
considering $Y_t $.
To bound $|Y_{t+1}-Y_{t}|$ we will use tail bounds (see \cref{lem:azuma-hoeffding}).
Unfortunately, there is one more challenge we have to address.
We can only show that $Y_t$ is a submartingale as long as the number of undecided agents is not too large (which is shown in \cref{sec:undecided}).

\begin{lemma}
\label{lem:create-initial-bias}
    Let $\textbf{x}(0)$ be a configuration with $\abs{\Delta_w(0)} < \shBias \cdot \sqrt{n\cdot \log n}$ for an arbitrary constant $\xi$.
    Let $T_w \coloneqq \inf\set{t \geq 0: \abs{\Delta_w(t)} \geq \shBias \cdot \sqrt{n \cdot \log n}}$.
    Then w.h.p.\ $T_w = O(n \cdot \log^2 n)$.
\end{lemma}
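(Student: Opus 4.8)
The plan is to carry out the argument sketched just above the lemma: track the squared weighted bias, show it grows linearly in expectation, and then amplify to a high‑probability statement by a restart argument.

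\textbf{Step 1 (a quadratic submartingale).} Write $Z_t := \Delta_w(t)$. By \cref{lem:shifted_bias_bound:case1}, $(Z_t)$ is a submartingale with $\ExpCond{Z_{t+1}-Z_t}{\filter_t} = u(t)Z_t/n^2 \ge 0$ and $\abs{Z_{t+1}-Z_t}\le 1$. For a constant $\submConst>0$ to be fixed later, set $Y_t := Z_t^2 - \submConst\cdot t$. Expanding the square,
\begin{align*}
  \ExpCond{Y_{t+1}-Y_t}{\filter_t}
  &= \ExpCond{(Z_{t+1}-Z_t)^2}{\filter_t} + \frac{2u(t)}{n^2}Z_t^2 - \submConst \\
  &\ge \ExpCond{(Z_{t+1}-Z_t)^2}{\filter_t} - \submConst ,
\end{align*}
so it suffices to lower‑bound the one‑step second moment by a constant. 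Going over the productive interactions exactly as in \cref{lem:shifted_bias_bound:case1} gives $\ExpCond{(Z_{t+1}-Z_t)^2}{\filter_t} \ge (1-p)\,x_1(t)\,x_2(t)/n^2$ (the $(1,2)$‑interaction alone contributes this much).

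\textbf{Step 2 (keeping the supports linear).} As long as $t<T_w$ we have $\abs{x_1(t)-(1-p)x_2(t)}<\shBias \sqrt{n\log n}$, and — after an initial transient of $O(n\log n)$ interactions — the upper bound on the number of undecided agents from \cref{sec:undecided} (\cref{lem:undecided_u_bound}) gives $u(t)\le(1-c)n$, hence $x_1(t)+x_2(t)\ge cn$, for a constant $c>0$. Since $p<1$ is a constant, these two facts force $x_1(t),x_2(t)=\Theta(n)$, so $\ExpCond{(Z_{t+1}-Z_t)^2}{\filter_t}=\Omega(1)$; fixing $\submConst$ small enough makes $(Y_t)$ a submartingale on the relevant window. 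The caveat that this only holds while the bounds of \cref{sec:undecided} are in force is dealt with exactly as in the proof of \cref{lem:drift_x2_to_zero}: one passes to a modified process that jumps to $(n,0,0)$ the moment those bounds would fail (after the transient), and couples it to the original process with probability $1-n^{-\Omega(1)}$ over any horizon of length $O(n\log^2 n)$; I omit these routine bookkeeping details.

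\textbf{Step 3 (drift to high probability).} Fix $\tau\ge 0$ and let $\sigma := T_w\wedge\tau$. Optional stopping for the submartingale $(Y_t)$ at the bounded time $\sigma$ gives $\Exp{Z_\sigma^2}-\submConst\Exp{\sigma}=\Exp{Y_\sigma}\ge Y_0=Z_0^2\ge 0$. Since $\abs{Z_{t+1}-Z_t}\le 1$, on $\set{T_w\le\tau}$ we have $\abs{Z_{T_w}}\le \shBias \sqrt{n\log n}+1$, and on $\set{T_w>\tau}$ we have $\abs{Z_\tau}<\shBias \sqrt{n\log n}$; hence $Z_\sigma^2=O(n\log n)$ deterministically, so $\submConst\Exp{T_w\wedge\tau}\le\Exp{Z_\sigma^2}=O(n\log n)$ for every $\tau$, and letting $\tau\to\infty$, $\Exp{T_w}=O(n\log n)$. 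This bound holds starting from \emph{every} configuration with $\abs{\Delta_w}<\shBias \sqrt{n\log n}$, so Markov's inequality gives a constant probability of crossing the target within a window of $O(n\log n)$ interactions; restarting from the (still sub‑threshold) current configuration whenever a window fails and using the strong Markov property, after $\Theta(\log n)$ windows the failure probability drops to $n^{-\Omega(1)}$. Thus $T_w=O(n\log^2 n)$ \whp for the modified process, and by the coupling of Step~2 also for the original one.

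\textbf{Main obstacle.} The crux is Step~2: ensuring the second‑moment drift $\ExpCond{(Z_{t+1}-Z_t)^2}{\filter_t}$ never falls below a constant. This is where the assumptions that $p<1$ is a constant and that the number of undecided agents is moderate (via \cref{sec:undecided}) genuinely enter, and it is also what creates the "valid only while $u$ is moderate" caveat that must be threaded through the optional‑stopping and restart arguments via the coupled modified process.
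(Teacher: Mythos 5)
Your proof is correct in substance but replaces the paper's concentration step with a different (and arguably more elementary) one. Steps 1--2 coincide with the paper: the paper also works with $Y_t = \Delta_w^2(t) - \submConst\cdot t$, derives the same drift lower bound $\ExpCond{Y_{t+1}-Y_t}{\filter_t} \geq x_1(1-p)x_2/n^2 - \submConst$, and uses the same two facts ($u \leq x_1+x_2+O(\sqrt{n\log n})$ from \cref{lem:undecided_u_bound} and $|\Delta_w|=o(n)$ before $T_w$) to conclude $x_1,(1-p)x_2=\Theta(n)$; it likewise freezes $Y$ at $\min\set{T_w,T_u}$ to make the submartingale property unconditional. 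Where you diverge is Step 3: the paper applies the Azuma--Hoeffding bound (\cref{lem:azuma-hoeffding}) to $Y$ directly over a single horizon $\tau=\alpha n\log^2 n$, choosing $\alpha$ so that the deterministic drift $\submConst\tau$ beats the fluctuation term $O(\sqrt{\tau\cdot n\log n}\cdot\sqrt{\log n})$; you instead use optional stopping to get $\Exp{T_w}=O(n\log n)$ uniformly over sub-threshold starting configurations, then amplify by Markov plus $\Theta(\log n)$ restarts. Both routes lose the same extra $\log n$ factor (the paper through the Azuma fluctuation, you through the restarts), and your expected-hitting-time bound is a slightly stronger intermediate statement. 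One bookkeeping point you should fix: the modified process that \emph{jumps} to $(n,0,0)$ when the $u$-bound fails destroys both the increment bound $|Z_{t+1}-Z_t|\leq 1$ and the deterministic bound $Z_\sigma^2=O(n\log n)$ that your optional-stopping step relies on. Instead, stop at $\sigma'=T_w\wedge T_u\wedge\tau$ (or freeze $Y$ there, as the paper does); then $Z_{\sigma'}^2\leq(\shBias\sqrt{n\log n}+1)^2$ holds, optional stopping gives $\Exp{\sigma'}=O(n\log n)$, and the event $\set{T_u<\tau}$ is absorbed into the failure probability via \cref{lem:undecided_u_bound}. With that adjustment the argument goes through.
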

\begin{proof}
    The idea of the proof is to apply the Azuma-Hoeffding bound to a suitable submartingale.
    Let $T_u \coloneqq \inf\set{t \geq 0: u(t) > x_1(t) + x_2(t) + 6 \shBias \cdot \sqrt{n \cdot \log n}}$.
    We define $Y_t = \Delta_w(t)^2 - \submConst \cdot t$ for $t < \min\set{T_w,T_u}$ where the constant $\submConst$ is chosen later.
    Otherwise, $Y_t = Y_{t-1}$.
    Note that by \cref{lem:undecided_u_bound}, \whp $T_u = \omega(n\log^2n)$.
    We show that $Y_0,Y_1,\ldots$ is a submartingale.
    The calculation is similar to that in the proof of \cref{lem:shifted_bias_bound:case1}.
    We consider every possible interaction and the resulting change.
    Assume that $t < \min\set{T_w,T_u}$. 
    Then,
    
    \begin{align*}
    \MoveEqLeft\E{Y_{t+1} | \filter_t, \Delta_w(t) = \Delta_w}\\
        &= \E{\Delta^2_s(t+1)-\submConst \cdot (t+1) | \filter_t, \Delta_w(t) = \Delta_w}\\
            &= \frac{x_1 \cdot x_2}{n^2} \cdot \left( p \cdot \Delta_w^2 + (1-p) \cdot (\Delta_w-1)^2 \right) + \frac{x_2 \cdot x_1}{n^2} \cdot (\Delta_w+1-p)^2 \\
            &\phantom{={}} + \frac{u \cdot x_1}{n^2} \cdot (\Delta_w+1)^2 + \frac{u \cdot x_2}{n^2} \cdot (\Delta_w-1+p)^2 \\
            &\phantom{={}} + \frac{n^2-2x_1\cdot x_2 - u\cdot (x_1+x_2)}{n^2} \cdot \Delta_w^2 - \submConst \cdot (t+1)\\
        &= \frac{x_1 \cdot x_2}{n^2} \cdot \left( 2\Delta_w^2 + (2-p)(1-p) \right)
        + \frac{u \cdot x_1}{n^2} \cdot \left( \Delta_w^2 + 2\Delta_w+1 \right) \\
        &\phantom{={}}+ \frac{u \cdot x_2}{n^2} \cdot \left(\Delta_w^2 -2(1-p)\Delta_w +(1-p)^2\right) \\
        &\phantom{={}} + \frac{n^2-2x_1\cdot x_2 - u\cdot (x_1+x_2)}{n^2} \cdot \Delta_w^2 - \submConst \cdot (t+1)\\
        &= \Delta_w^2 + \frac{x_1 \cdot x_2}{n^2} \cdot (2-p)\cdot (1-p)\\
            &\phantom{={}}+ \frac{u \cdot 2\Delta_w}{n^2} \cdot (x_1 - (1-p)x_2) + \frac{u}{n^2} \cdot \left( x_1 + (1-p)^2x_2\right) -\submConst\cdot (t+1) \\
        &= Y_t + \frac{x_1 \cdot x_2}{n^2} \cdot (2-p)\cdot (1-p)
        + \frac{u \cdot 2\cdot \Delta_w^2}{n^2} + \frac{u}{n^2} \cdot (x_1 + (1-p)^2 x_2) -\submConst\\
        &\geq Y_t + \frac{x_1 \cdot (1-p)x_2}{n^2} -\submConst.
    \end{align*}
    Note that $p$ is constant and for $t < \min\set{T_w,T_u}$ it holds that $u < x_1 + x_2 + o(n)$ and $x_1 - (1-p)x_2 = o(n)$.
    Then $x_1,(1-p)x_2 = \Theta(n)$.
    So there exists a worst case bound on $(x_1 \cdot (1-p)x_2)/n^2$ that is in $\Theta(1)$.
    Using that bound for $\submConst$ yields the desired result
    $\E{Y_{t+1} | \filter_t} \geq Y_t$.
    For $t \geq \min\set{T_w,T_u}$, we have $\E{Y_{t+1} | \filter_t} = Y_t$ by definition.

    Next, we apply the Azuma-Hoeffding bound (\cref{lem:azuma-hoeffding}) with 
    $\tau = \alpha \cdot n \cdot \log^2 n$ for some constant $\alpha$ that we determine later
    and $\lambda = \Delta_w^2(0) + \shBias \cdot \sqrt{10 \cdot \tau \cdot n} \cdot \log n \geq \shBias \cdot \sqrt{10 \cdot \tau \cdot n} \cdot \log n$.
    Furthermore, for $t < \min\set{T_w,T_u}$ we have
    \begin{align*}
        \MoveEqLeft |Y_{t+1}-Y_t| 
        \leq (\Delta_w(t)+1)^2 - \submConst \cdot (t+1)-\Delta_w^2(t) + \submConst \cdot t
        = 2 \Delta_w(t) + 1 - \submConst\\
        &\leq 2 \cdot \shBias \cdot \sqrt{n \cdot \log n} + 1.
    \end{align*}
    Thus we have $(b-a)^2 \leq \left(2 \cdot (2\cdot \shBias \cdot \sqrt{n \cdot \log n} + 1)\right)^2 \leq 5 \cdot \shBias^2 \cdot  n \cdot \log n$.
    This yields probability 
    \begin{align*}
        \MoveEqLeft\exp\left(- \frac{2\lambda^2}{\tau (b-a)^2} \right)
        \leq \exp\left(- \frac{2 \cdot \shBias^2 \cdot 10 \cdot \tau \cdot n \cdot \log^2 n}{\tau \cdot 5 \cdot \shBias^2 \cdot n \cdot \log n}\right)
        = \exp\left(- 4\cdot \log n\right)
        = n^{- 4}
    \end{align*}
    for the event
    \begin{align*}
    &\Delta_w^2(\tau) - \submConst\cdot \tau - (\Delta_w^2(0) - 0) < - \lambda\\
    \Leftrightarrow &\abs{\Delta_w(\tau)} < \sqrt{\submConst\cdot \tau + \Delta_w^2(0) - \lambda}\\
    \Leftrightarrow &\abs{\Delta_w(\tau)} < \sqrt{\submConst\cdot \tau - \shBias \cdot \sqrt{10\cdot \tau \cdot n} \cdot \log n}\\
    \Leftrightarrow &\abs{\Delta_w(\tau)} < \sqrt{\submConst\cdot \alpha \cdot n \cdot \log^2 n - \shBias \cdot \sqrt{10\cdot \alpha } \cdot n \cdot \log^2 n}\\
    \Leftrightarrow &\abs{\Delta_w(\tau)} < \sqrt{\submConst\cdot \alpha - \shBias \cdot \sqrt{10\cdot \alpha}} \cdot \sqrt{n} \cdot \log n.
    \end{align*}
    It is now clear that there exists a constant $\alpha = \alpha(\shBias,\submConst) > 0$ such that $\submConst\cdot \alpha - \shBias \cdot \sqrt{10 \cdot \alpha} \geq \shBias^2$.
    Then, $\abs{\Delta_w(\tau)} \geq \sqrt{\submConst\cdot \alpha - \sqrt{\submConst\cdot \alpha}} \cdot \sqrt{n} \cdot \log n$ implies $\abs{\Delta_w(\tau)} \geq \shBias \cdot \sqrt{n \cdot \log n}$, i.e., $T_w \leq \tau$.
\end{proof}

\section{Proof of \cref{thm:coupling-theorem}}
\label{sec:coupling}
In this section we define a coupling to show \cref{thm:coupling-theorem}. 
To do so we first define an order on the state space. A configuration $\xconfig$ is \emph{better} 
than a configuration $\tconfig$ if it does not have fewer agents in state $1$ and not more agents in state $2$. 
Then we prove \cref{thm:coupling-theorem} 
which states that increasing the stubbornness or starting from a better configuration decreases the time until all agents agree on Opinion 1. 
This allows us to extend high probability results to more configurations and processes (see \cref{fig:proof-overview}).

We assume that the state space $Q$ is ordered. The order $\better$ is defined as $1 \better \bot \better 2$.
For configurations $\xconfig=(x_1,x_2, x_3), \tconfig=(\tilde{x}_1,\tilde{x}_2,\tilde{u})$ we say $\xconfig \dominates \tconfig$ ($\xconfig$ is \emph{better} than $\tconfig$) if $x_1 \geq \tilde{x}_1$ and $x_1+u \geq \tilde{x}_1+\tilde{u}$.
We first observe a property of $\USD{p}$ in  \cref{obs:monotonous} namely that one interaction of $\USD{p}$ with agents in better states yield a ``better'' outcome.

\begin{observation}
    \label{obs:monotonous}
    For all $q_i,q_j,q_{i'},q_{j'} \in Q$ with $q_i \better q_{i'}$ and  $q_j \better q_{j'}$ we have 
    \begin{align*}
        \usd{p}(q_i,q_j) &\better \usd{p}(q_{i'},q_{j'}) &\text{ if } (q_i,q_j) \neq (1,2) \text{ and } (q_{i'},q_{j'}) \neq (1,2)\\
        \Pr[\usd{p}(q_i,q_j)=1] &\geq \Pr[\usd{p}(q_{i'},q_{j'})=1] &\text{ otherwise}
    \end{align*}
\end{observation}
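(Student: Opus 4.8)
The plan is to reduce the statement to a small finite verification that exploits that $\better$ is a \emph{total} order on the three-element set $Q$. First I would record the transition table of $\usd{p}$: for $(q,q')\ne(1,2)$ the transition is deterministic, with $\usd{p}(\bot,q')=q'$ for every $q'$, $\usd{p}(1,q')=1$ for $q'\in\set{1,\bot}$, $\usd{p}(2,q')=2$ for $q'\in\set{2,\bot}$, and $\usd{p}(2,1)=\bot$; and $\usd{p}(1,2)$ is the unique randomized cell, equal to $1$ with probability $p$ and to $\bot$ with probability $1-p$. Arranging this as a $3\times3$ grid whose rows (initiator) and columns (responder) are both ordered $1,\bot,2$ from $\better$-largest to $\better$-smallest, one checks by inspection that every column is $\better$-non-increasing from top to bottom and every row is $\better$-non-increasing from left to right, reading the cell $(1,2)$ as taking a value in $\set{1,\bot}$, each of which is $\better\,2$.

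For the deterministic branch, where $(q_i,q_j)\ne(1,2)$ and $(q_{i'},q_{j'})\ne(1,2)$, I would factor the desired inequality through the intermediate pair $(q_{i'},q_j)$:
\begin{align*}
    \usd{p}(q_i,q_j)\ \better\ \usd{p}(q_{i'},q_j)\ \better\ \usd{p}(q_{i'},q_{j'}).
\end{align*}
The first step is column-monotonicity at column $q_j$ (since $q_i\better q_{i'}$), the second is row-monotonicity at row $q_{i'}$ (since $q_j\better q_{j'}$). The one point to verify is that this intermediate pair is again deterministic: if $(q_{i'},q_j)=(1,2)$ then $q_{i'}=1$, and since $1$ is $\better$-maximal and $q_i\better q_{i'}$ we get $q_i=1$, so $(q_i,q_j)=(1,2)$, contradicting the hypothesis. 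Hence all three pairs appearing above avoid the randomized cell and the two steps are plain table look-ups. (Note that the \emph{other} factoring, through $(q_i,q_{j'})$, would not work, since that pair \emph{can} equal $(1,2)$ while both endpoints do not; choosing the order $(q_i,q_j)\to(q_{i'},q_j)\to(q_{i'},q_{j'})$ is what makes the argument clean.)

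For the remaining branch, where $(q_i,q_j)=(1,2)$ or $(q_{i'},q_{j'})=(1,2)$, I would again use totality of $\better$ together with $1$ being $\better$-maximal and $2$ being $\better$-minimal to enumerate the admissible partners. If $(q_i,q_j)=(1,2)$ then $q_{j'}=2$, so $(q_{i'},q_{j'})\in\set{(1,2),(\bot,2),(2,2)}$, each with $\Pr[\usd{p}(q_{i'},q_{j'})=1]\le p=\Pr[\usd{p}(1,2)=1]$. If $(q_{i'},q_{j'})=(1,2)$ and $(q_i,q_j)\ne(1,2)$ then $q_i=1$, so $(q_i,q_j)\in\set{(1,1),(1,\bot)}$, each with $\Pr[\usd{p}(q_i,q_j)=1]=1\ge p$. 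This completes the observation. There is no real obstacle here; the only thing that needs care is keeping the randomized transition $(1,2)$ consistently separated from the deterministic ones — which, as noted above, also dictates the convenient order in which to factor the monotonicity argument.
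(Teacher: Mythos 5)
Your proof is correct and takes essentially the same route as the paper: an exhaustive finite verification organized along the product order on pairs, reducing general comparisons to single-coordinate steps (your row/column monotonicity plus factoring through $(q_{i'},q_j)$ is exactly the paper's ``check the covering edges, then use transitivity''). The only difference is that you are more explicit about why the intermediate pair can be chosen to avoid the randomized cell $(1,2)$, a detail the paper's figure-based argument leaves implicit.
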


\begin{proof}
    The claim can be verified by checking all cases. 
    We visualize all possible interactions in \cref{fig:coupling-cases}.
    $\better$ induces a partial order on these interactions.
    It can be verified in \cref{fig:coupling-cases}, that along this partial order, $\better$ is preserved on the outcome of the interactions. 
    All remaining cases follow from transitivity of $\better$.
\end{proof}
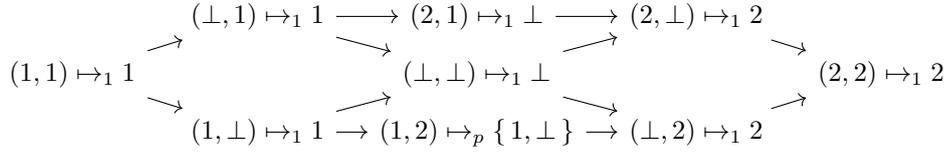
\begin{figure}[H]
    \centering
    \[\begin{tikzcd}[sep=tiny]
        && {(\bot,1)\mapsto_11} && {(2,1)\mapsto_1\bot} && {(2,\bot)\mapsto_12} \\
        {(1,1)\mapsto_11} &&&& {(\bot,\bot)\mapsto_1\bot} &&&& {(2,2)\mapsto_12} \\
        && {(1,\bot)\mapsto_11} && {(1,2)\mapsto_p\set{1,\bot}} && {(\bot,2)\mapsto_12}
        \arrow[from=2-1, to=1-3]
        \arrow[from=2-1, to=3-3]
        \arrow[from=1-3, to=1-5]
        \arrow[from=1-3, to=2-5]
        \arrow[from=3-3, to=2-5]
        \arrow[from=3-3, to=3-5]
        \arrow[from=1-5, to=1-7]
        \arrow[from=2-5, to=1-7]
        \arrow[from=2-5, to=3-7]
        \arrow[from=3-5, to=3-7]
        \arrow[from=3-7, to=2-9]
        \arrow[from=1-7, to=2-9]
    \end{tikzcd}\]%
    \caption{An overview over all possible interactions. All but one interaction have a deterministic outcome which we indicated by $\mapsto_1$.
    The interaction $(1,2)$ produces Opinion $1$ with probability $p$ and $\bot$ with $1-p$.
    We indicated this by $\mapsto_p \set{1,\bot}$.}
    \label{fig:coupling-cases}
\end{figure}%

In the next lemma we apply \cref{obs:monotonous} to show the preservation of better configurations using a coupling of  configurations $\xconfig, \tconfig$ with $\xconfig \dominates \tconfig$. 
To do so we assume that, additionally to a random pair (initiator, responder), a number $r \in \intoc{0,1}$ is chosen uniformly at random. If the initiator has opinion $1$ it determines the outcome of the interaction: $\usd{p}(1,2)$ equals 1 if $r \leq p$ and $\bot$ otherwise. In all other cases the transitions are performed with probability one (for all choices of $r$). 

For the sake of the argument we assume in the following that the agents are identified with labels $1$ through $n$.
Then the random choice in each interaction is an element of $[n]^2\times\intoc{0,1}$.
We now consider a fixed time $t$ and denote the state of agent $i$ in configuration $\xconfig(t)$ by $q_i(t)$ (or $\tilde{q}_i(t)$ for $\tconfig(t)$). W.l.o.g.\ we assume that agents are sorted s.t.\ $\forall i \in [n-1]: q_i(t) \better q_{i+1}(t)$ and $\tilde{q}_i(t) \better \tilde{q}_{i+1}(t)$. 

\begin{lemma}
    \label{lem:coupling-lemma}   
    Consider $\USD{p}(\xconfig)$ and $\USD{\tilde{p}}(\tconfig)$ with $p \geq \tilde{p}$ and $\xconfig(t) \dominates \tconfig(t)$.
    Then there exists a bijection $b: [n]^2\times\intoc{0,1}\rightarrow[n]^2\times\intoc{0,1}$ for the random choices of the schedulers such that $\xconfig(t+1) \dominates \tconfig(t+1)$.
\end{lemma}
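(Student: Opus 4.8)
The plan is to take the bijection $b$ to be the identity, that is, to run $\USD{p}(\xconfig)$ and $\USD{\tilde p}(\tconfig)$ on the very same scheduler draw $(i,j,r)\in[n]^2\times\intoc{0,1}$, and to verify that this single step preserves the domination. A bijection — here the identity — automatically preserves the uniform distribution of the scheduler's choice, so this is a legitimate coupling; the content is the almost-sure preservation statement. Since only the initiator $i$ changes its state in either process, the whole claim reduces to controlling the single coordinate $i$: if $q_k(t)\better\tilde q_k(t)$ holds for every $k$, we want it to still hold for every $k$ at time $t+1$, and for $k\ne i$ this is immediate because those states are untouched.

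First I would record the translation between the configuration order $\dominates$ and the labelled states. Because the agents are sorted so that $q_1(t)\better\dots\better q_n(t)$ and $\tilde q_1(t)\better\dots\better\tilde q_n(t)$, the inequalities $x_1(t)\ge\tilde x_1(t)$ and $x_1(t)+u(t)\ge\tilde x_1(t)+\tilde u(t)$ defining $\xconfig(t)\dominates\tconfig(t)$ are equivalent to $q_i(t)\better\tilde q_i(t)$ for all $i\in[n]$; and conversely, if after the step $q_i(t+1)\better\tilde q_i(t+1)$ holds for every $i$ (even for an unsorted labelling), then counting the agents in state $1$ and the agents in state $1$ or $\bot$ gives back $\xconfig(t+1)\dominates\tconfig(t+1)$. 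So everything reduces to: assuming $q_i(t)\better\tilde q_i(t)$, $q_j(t)\better\tilde q_j(t)$ and $p\ge\tilde p$, show $q_i(t+1)\better\tilde q_i(t+1)$. I would split this according to whether the interaction is the stubborn one $(1,2)$ in each process. If $(q_i(t),q_j(t))\ne(1,2)$ and $(\tilde q_i(t),\tilde q_j(t))\ne(1,2)$, the transition is deterministic and does not depend on the stubbornness parameter, so $q_i(t+1)=\usd{p}(q_i(t),q_j(t))=\usd{\tilde p}(q_i(t),q_j(t))$, and the first clause of \cref{obs:monotonous} gives $q_i(t+1)\better\usd{\tilde p}(\tilde q_i(t),\tilde q_j(t))=\tilde q_i(t+1)$. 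If $(q_i(t),q_j(t))=(1,2)$, then since $1$ is $\better$-maximal and $2$ is $\better$-minimal we must also have $(\tilde q_i(t),\tilde q_j(t))=(1,2)$; with the shared number $r$ the $p$-process retains Opinion $1$ iff $r\le p$ and the $\tilde p$-process iff $r\le\tilde p$, so, as $\tilde p\le p$, on each of $\set{r\le\tilde p}$, $\set{\tilde p<r\le p}$, $\set{r>p}$ the pair of outcomes lies in $\set{1,\bot}$ in a $\better$-compatible way. Finally, if $(q_i(t),q_j(t))\ne(1,2)$ but $(\tilde q_i(t),\tilde q_j(t))=(1,2)$, then $\tilde q_i(t)=1$ forces $q_i(t)=1$ and $q_j(t)\ne2$, so $q_i(t+1)=1\better\tilde q_i(t+1)$; the remaining combination $(q_i(t),q_j(t))=(1,2)$ with $(\tilde q_i(t),\tilde q_j(t))\ne(1,2)$ cannot occur by the same forcing argument.

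The proof is essentially a finite verification, so there is no genuine obstacle; the one point that needs care — and the only place the hypothesis $p\ge\tilde p$ is used — is the stubborn interaction, where one must check that coupling the two biased coins through the common threshold $r$ makes the higher-stubbornness process retain Opinion $1$ at least as often as the lower-stubbornness one. All remaining cases can be read off the partial order on interactions drawn in \cref{fig:coupling-cases}, and the only bookkeeping of substance is the equivalence between $\dominates$ and coordinatewise $\better$-domination of the sorted state vectors noted above.
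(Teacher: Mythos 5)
Your coupling is the one the paper itself uses: the identity map on $[n]^2\times\intoc{0,1}$, the reduction of $\dominates$ to coordinatewise $\better$-domination of the sorted state vectors, the appeal to \cref{obs:monotonous} for the interactions with deterministic outcome, and the shared threshold $r$ for the stubborn coin (which is indeed the only place $p\ge\tilde p$ enters). So the route is essentially identical to the paper's proof and the argument is sound, with one exception: your claim that the combination $(q_i(t),q_j(t))=(1,2)$ with $(\tilde q_i(t),\tilde q_j(t))\neq(1,2)$ ``cannot occur'' is false. The forcing only works on the responder side: $q_j(t)=2$ together with $q_j(t)\better\tilde q_j(t)$ and the $\better$-minimality of $2$ does force $\tilde q_j(t)=2$, but $q_i(t)=1\better\tilde q_i(t)$ permits $\tilde q_i(t)\in\{\bot,2\}$, which happens exactly when $\xconfig$ has strictly more agents in state $1$ than $\tconfig$ and agent $i$ is one of the surplus. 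The case is harmless --- in both subcases $\usd{\tilde p}(\tilde q_i(t),2)=2$, the $\better$-minimal state, so $q_i(t+1)\better\tilde q_i(t+1)$ holds no matter how the $p$-coin lands --- but it must be checked rather than excluded. With that one-line correction your case analysis is complete and matches the paper's argument.
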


\begin{proof}
    Assume that the random scheduler of $\USD{p}$ draws initiator $i$, responder $j$ and random number $r$.
    Using identical random choices for the initiator and the responder for $\USD{\tilde{p}}$,
    we have $q_i(t) \better \tilde{q}_i(t)$ and $q_j(t) \better \tilde{q}_j(t)$.
    Then by \cref{obs:monotonous}, for $(q_i,q_j) \neq (1,2)$ and $(q_{i'},q_{j'}) \neq (1,2)$ we have 
    $\usd{p}(q_i,q_j) \better \usd{p}(q_{i'},q_{j'})$ and otherwise 
    \begin{align*}
        \Pr[\usd{p}(q_i(t),q_j(t))=1]
        \geq \Pr[\usd{p}(\tilde{q}_{i}(t),\tilde{q}_{j}(t))=1] 
        \geq \Pr[\usd{\tilde{p}}(\tilde{q}_{i}(t) = 1].
    \end{align*}
    Then, by using the same random choice for the $r$, we have $\usd{p}(q_i(t),q_j(t)) \better \usd{\tilde{p}}(\tilde{q}_{i}(t),\tilde{q}_{j}(t))$.
    Then we have $q_i(t+1) \better \tilde{q}_i(t+1)$ for the initiator. No other agent is changing its state. 
    This implies $\xconfig(t+1) \dominates \tconfig(t+1)$.    
\end{proof}

\Cref{thm:coupling-theorem} (restated here for convenience) follows by repeatedly applying \cref{lem:coupling-lemma}.
\couplingtheorem*

\begin{proof}
    We prove the more general claim that there exists a coupling $([n]^2 \times \intoc{0,1})^t \rightarrow ([n]^2 \times \intoc{0,1})^t$ (mapping the random choices of the scheduler of $\USD{p}$ to those of the random scheduler of $\USD{\tilde{p}}$) such that for all $t \geq 0: \Xconfig(t) \dominates \Tconfig(t)$.
    The claim holds for $t=0$ since $\xconfig \dominates \tconfig$.
    The induction step from $t$ to $t+1$ follows from \cref{lem:coupling-lemma}.

    The two statements of the theorem follow from the fact that $\tilde{X}_1(t) = n \land \Xconfig(t) \dominates \Tconfig(t) \implies X_1(t) = n$ and that $X_2(t) = n \land \Xconfig(t) \dominates \Tconfig(t) \implies \tilde{X}_2(t) = n$.
\end{proof}
\section{Conclusion}
\label{sec:conclusion}
We have extended the USD for two opinions by introducing the stubbornness parameter $p$ that governs how likely agents of Opinion 1 lose their opinion.
We have shown that the outcome of the process depends on the \emph{weighted bias} $\Delta_w(t) \coloneqq x_1(t) - (1-p)x_2(t)$.
More specifically, there exists a phase transition around $\Delta_w(t) = 0$ where the remaining opinion switches.
Under mild constraints on $p$, $x_1$ and $u$ we have shown that the agents reach a consensus in $O(n\log n)$ interactions for $\abs{\Delta_w(0)} = \Omega(\sqrt{n\log n})$ which generalizes the known result for $p=0$.
Moreover, we have shown for $\abs{\Delta_w(0)} = o(\sqrt{n\log n})$ that the agents reach consensus on some opinion in $O(n\log^2 n)$ interactions.
It is noteworthy that this threshold is not directly depending on $u(t)$.

\paragraph{Open questions}
It remains an open question if the higher runtime of $O(n\log^2 n)$ is an artifact of the analysis or if it is due to the additional randomness making the process less predictable.
Our analysis covers the case of exactly two opinions where one of the opinions is retained with a certain probability.
For the analysis of more opinions, the first obstacle is to find the proper generalization of how stubborn agents are in which interactions.
The second task is then to find a generalized description of the phase transition which is already not trivial without stubborn agents.
The analysis of \cref{sec:coupling} can be extended to many similar protocols with a preferred opinion -- altering the transition $(\bot,2) \rightarrow 2$ of the USD to only adopt Opinion 2 with probability $p$ is one example.
However, this process has a more complicated threshold function that depends on the number of undecided agents.
Therefore, it is an interesting open problem to find the  corresponding threshold for initial configurations without undecided agents.

\newpage
\bibliography{references}

\newpage
\appendix

\section{Auxiliary Results}
\label{apx:auxiliary-results}
This appendix states several auxiliary results we use throughout our analysis.
\begin{theorem}[\cite{DBLP:series/ncs/Lengler20}]
    \label{thm:mult_drift_tail_lengler_18}
    Let $(X_t)_{t\geq0}$ be a sequence of non-negative random variables with a finite state space $\mathcal{S} \subseteq \mathbb{R}_0^+$ such that $0 \in \mathcal{S}$. Let $s_{min} \coloneqq \min(\mathcal{S}\setminus\{0\})$, and let $T \coloneqq \inf\{t\geq 0 ~|~ X_t = 0\}$. Suppose that $X_0 = s_0$, and that there exists $\delta > 0$ such that for all $s \in \mathcal{S}\setminus\{0\}$ and all $t\geq 0$, 
    \begin{align*}
        E[X_t-X_{t+1}~|~X_t = s] \geq \delta \cdot s.
    \end{align*}
    Then, for all $r \geq 0$,
    \begin{align*}
        Pr\left[ T > \left \lceil{\frac{r+\ln(s_0/s_{min})}{\delta}} \right \rceil \right] \leq e^{-r}.
    \end{align*}
\end{theorem}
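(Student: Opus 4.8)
The plan is to turn the one-step multiplicative drift into the claimed tail bound by combining an exponentially decaying expectation with Markov's inequality, working throughout with the process stopped at $T$ so that no hypothesis on the behaviour at state $0$ is ever needed. First I would rewrite the drift assumption: for every non-zero state $s \in \mathcal{S}\setminus\{0\}$ it says exactly that $\ExpCond{X_{t+1}}{X_t = s} \le (1-\delta)\,s$, and (using $X_{t+1}\ge 0$ and $s>0$) it forces $1-\delta \ge 0$, so together with $\delta>0$ we get $1-\delta \in [0,1)$. Then I define the stopped process $Y_t \coloneqq X_{\min\{t,T\}}$, which is non-negative, satisfies $Y_0 = s_0$, and is absorbed at $0$ from time $T$ onward by construction.

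The heart of the argument is to prove $\Exp{Y_t} \le (1-\delta)^t\, s_0$ by induction on $t$. The base case $t = 0$ is immediate. For the inductive step I would decompose according to whether $T \le t$ or $T > t$. On $\{T \le t\}$ we have $Y_{t+1} = X_T = 0$, which contributes nothing. On $\{T > t\}$ we have $\min\{t,T\} = t$ and $\min\{t+1,T\} = t+1$, and $X_t = Y_t$ is a genuine element of $\mathcal{S}\setminus\{0\}$, so the contraction applies: $\ExpCond{X_{t+1}}{\filter_t}\cdot\mathbf{1}_{\{T>t\}} \le (1-\delta)\,Y_t\cdot\mathbf{1}_{\{T>t\}}$. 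Taking expectations, using $Y_t \ge 0$ to drop the indicator, and invoking the inductive hypothesis gives $\Exp{Y_{t+1}} \le (1-\delta)\,\Exp{Y_t} \le (1-\delta)^{t+1} s_0$. Since $1-\delta \le e^{-\delta}$, this yields the clean bound $\Exp{Y_t} \le e^{-\delta t}\, s_0$.

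Finally I would observe that $\{T > t\}$ is exactly the event $\{Y_t \ne 0\}$, and on it $Y_t = X_t \in \mathcal{S}\setminus\{0\}$, hence $Y_t \ge s_{\min}$. Markov's inequality then gives $\Probs{T > t} = \Probs{Y_t \ge s_{\min}} \le \Exp{Y_t}/s_{\min} \le e^{-\delta t}\, s_0 / s_{\min}$. Plugging in $t = \lceil (r + \ln(s_0/s_{\min}))/\delta\rceil$, which satisfies $\delta t \ge r + \ln(s_0/s_{\min})$, bounds the right-hand side by $e^{-r}$, as claimed. The only genuinely delicate point — and the step I would be most careful with — is the stopped-process bookkeeping: the contraction inequality must be applied only on $\{T>t\}$, where $X_t$ is a legitimate non-zero state, and the absorption at $0$ must be handled by the $\min\{t,T\}$ truncation rather than by any assumption on $\delta(0,\cdot)$; the telescoping of expectations, the estimate $1-\delta\le e^{-\delta}$, and the Markov step are all routine.
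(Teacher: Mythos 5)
Your proof is correct: this theorem is imported by the paper from Lengler's drift-theory survey without proof, and your argument (stop the process at $T$, show $\Exp{Y_t}\le(1-\delta)^t s_0$ by induction, then apply Markov's inequality at level $s_{\min}$ and choose $t$ so that $\delta t\ge r+\ln(s_0/s_{\min})$) is exactly the standard proof of the multiplicative drift tail bound from that reference. The only point worth flagging is that your inductive step conditions on the filtration $\filter_t$ (needed because $\{T>t\}$ depends on the whole history, not just $X_t$), which is the intended reading of the hypothesis even though the statement writes the drift condition as conditioned on $X_t=s$ only; with that reading every step, including the stopped-process bookkeeping you single out as delicate, is sound.
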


\begin{lemma}[\cite{feller68}]
\label{lem:random_walk_no_ruin_probability}
If we run an arbitrarily long sequence of independent trials, each with success probability at least $p$, then the probability that the number of failures ever exceeds the number of successes by $b$ is at most $((1-p)/p)^b$.
\end{lemma}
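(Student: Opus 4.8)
The plan is to recast the trials as a random walk and bound a one-sided hitting probability with an exponential supermartingale, which conveniently absorbs the ``at least $p$'' hypothesis without any extra coupling. First I would dispose of the trivial range: if $p \le 1/2$ then $(1-p)/p \ge 1$ and the claimed bound is at least $1$, so assume $p > 1/2$. Let $S_n$ and $F_n$ be the numbers of successes and failures in the first $n$ trials and set $W_n = S_n - F_n$, so $W_0 = 0$ and each step is $+1$ with (conditional) probability $q \ge p$ and $-1$ otherwise. The event that the failures ever exceed the successes by $b$ is exactly $\{\,W_n = -b \text{ for some } n\,\}$, since the unit steps force the excess to pass through the value $b$; let $\tau = \inf\{n : W_n = -b\}$ (with $\tau = \infty$ if this never occurs). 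The goal is $\Probs{\tau < \infty} \le ((1-p)/p)^b$.

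The key object is the exponential functional $M_n = \lambda^{W_n}$ with $\lambda = (1-p)/p \in (0,1)$, the root in $(0,1)$ of $p\lambda^2 - \lambda + (1-p) = 0$. Writing $\filter_n$ for the history up to trial $n$ and $q \ge p$ for the success probability of trial $n+1$, one computes $\E{M_{n+1} \mid \filter_n} = M_n\,(q\lambda + (1-q)\lambda^{-1})$. The map $g(q) = q\lambda + (1-q)\lambda^{-1}$ is strictly decreasing (since $\lambda < 1 < \lambda^{-1}$) and satisfies $g(p) = 1$ by the choice of $\lambda$; hence $g(q) \le 1$ whenever $q \ge p$, so $\E{M_{n+1}\mid\filter_n}\le M_n$. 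Thus $(M_n)$ is a nonnegative supermartingale, and this single inequality is what turns ``at least $p$'' (and even trial-dependent success probabilities) into the worst case $p$.

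Next I would apply optional stopping to the bounded time $\tau \wedge n$, which avoids any integrability subtlety. On $\{\tau \le n\}$ we have $M_{\tau \wedge n} = \lambda^{W_\tau} = \lambda^{-b}$, while on $\{\tau > n\}$ we have $M_{\tau \wedge n} = M_n \ge 0$, so
\[
1 = M_0 \;\ge\; \E{M_{\tau \wedge n}} \;\ge\; \lambda^{-b}\,\Probs{\tau \le n}.
\]
Rearranging gives $\Probs{\tau \le n} \le \lambda^{b}$ for every $n$, and letting $n \to \infty$ (the events $\{\tau \le n\}$ increase to $\{\tau < \infty\}$) yields $\Probs{\tau < \infty} \le \lambda^{b} = ((1-p)/p)^b$, as required.

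I expect the only genuinely delicate point to be the justification of the optional-stopping step, which is why I favor the bounded stopping time $\tau \wedge n$ together with nonnegativity of $M_n$ rather than invoking optional stopping at $\tau$ directly. A fully elementary alternative is first-step analysis: by independence and the strong Markov property the probability $q_b$ of ever dropping by a net $b$ factorizes as $q_1^{\,b}$, and $q_1$ solves $q_1 = (1-p) + p\,q_1^2$, whose minimal nonnegative root for $p > 1/2$ is $(1-p)/p$; there the subtlety shifts to arguing that the hitting probability is the minimal, rather than the unit, root. Either route reaches the stated bound, and I would present the supermartingale argument as the primary proof.
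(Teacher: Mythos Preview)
Your argument is correct. The paper does not give its own proof of this lemma at all; it is listed in the appendix of auxiliary results with a bare citation to Feller, so there is no in-paper proof to compare against. Your supermartingale proof via $M_n=\lambda^{W_n}$ with $\lambda=(1-p)/p$ and optional stopping at $\tau\wedge n$ is a clean, self-contained derivation, and you correctly observe that the single inequality $g(q)\le g(p)=1$ for $q\ge p$ is exactly what upgrades the i.i.d.\ case to the ``success probability at least $p$'' (possibly trial-dependent) setting. One small caveat on your alternative sketch: the factorisation $q_b=q_1^{\,b}$ via the strong Markov property relies on time-homogeneity, so that route as stated only covers the i.i.d.\ case and would need a separate monotone-coupling step to recover the full statement; your primary supermartingale argument avoids this entirely, so presenting it as the main proof is the right call.
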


\begin{lemma}[\cite{DBLP:journals/corr/abs-2402-06471}]
\label{lem:random-walk}
Let $(W_t)_{t \in \mathbb{N}}$ be a biased random walk on state space $\mathbb{N}_0$, initially at $0$. Let $0 < p < 1$ denote the probability for the walk to move to the right (increase its current position by 1). Conversely, let $q=(1-p)$ denote the probability that it moves to the left (or stays in position in case it currently resides at position $0$). Then, for any $N>0$ and hitting time $\tau_N = \min\{t ~|~ W_t = N\}$ the following holds:
\begin{enumerate}
    \item If $p>q$ then $\tau_N \leq (\frac{2}{p-q})^2 \cdot N$ with probability at least $1-\exp(-N)$.
    \item If $p<q$ then $\tau_N \geq (q/p)^{N/2}$ with probability at least $1-(p/q)^{N/2}$. 
\end{enumerate}
\end{lemma}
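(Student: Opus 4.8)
The plan is to reduce both statements to the unconstrained $\pm 1$ walk $S_t = \sum_{i=1}^{t}\xi_i$ on $\mathbb{Z}$, where the $\xi_i$ are i.i.d.\ with $\Pr[\xi_i = +1] = p$ and $\Pr[\xi_i = -1] = q = 1-p$, and to treat the reflecting boundary at $0$ by comparison. Throughout I would couple $W$ and $S$ by feeding them the same increments $(\xi_i)$, so that $W_{t+1} = \max(W_t + \xi_{t+1}, 0)$ while $S_{t+1} = S_t + \xi_{t+1}$, with $W_0 = S_0 = 0$.

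For Statement 1 ($p > q$) I would first note by a one-line induction that $W_t \ge S_t$ for all $t$: reflection can only raise the walk, so staying nonnegative makes $W$ hit $N$ no later than the free walk, i.e.\ $\tau_N \le \tau_N^S$. It then suffices to bound the free walk, whose increments have positive mean $p-q$. Choosing $t = (2/(p-q))^2 N$ gives $\mathbb{E}[S_t] = (p-q)t = 4N/(p-q) \ge 4N$, far above the target. Since integer steps force $\{\tau_N^S > t\} \subseteq \{S_t < N\}$, a Hoeffding bound (\cref{lem:azuma-hoeffding}) applied to $S_t - \mathbb{E}[S_t] < N - \mathbb{E}[S_t]$ gives, after the $(p-q)$-factors cancel, a failure probability of $\exp\!\big(-(4-(p-q))^2 N/8\big) \le e^{-9N/8} \le e^{-N}$. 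This step is mechanical once the monotone coupling is in place.

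For Statement 2 ($p < q$) the reflecting boundary is the genuine obstacle, because reflection speeds the walk up, so the comparison $W_t \ge S_t$ from Statement 1 points the wrong way for a lower bound on $\tau_N$. The key idea is the Lindley (reflection) representation $W_t = \max_{0 \le s \le t}(S_t - S_s)$, which recasts $\{\tau_N \le T\} = \{\max_{t \le T} W_t \ge N\}$ as $\{\exists\, s \le t \le T : S_t - S_s \ge N\}$, i.e.\ the free walk gains at least $N$ over some sub-interval of $[0,T]$. For each fixed start $s$ the increments after $s$ are i.i.d., and relabelling a right-step as a ``failure'' (probability $p$) and a left-step as a ``success'' (probability $q > 1/2$), the event $\sup_{t \ge s}(S_t - S_s) \ge N$ is exactly ``failures ever exceed successes by $N$'', which \cref{lem:random_walk_no_ruin_probability} bounds by $(p/q)^{N}$. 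A union bound over the at most $T+1$ possible starts gives $\Pr[\tau_N \le T] \le (T+1)(p/q)^{N}$, and substituting $T = (q/p)^{N/2}$ collapses this to $O\big((p/q)^{N/2}\big)$; the factor-two slack in the exponent $N/2$ absorbs the union-bound factor and yields $\Pr[\tau_N \ge T] \ge 1 - (p/q)^{N/2}$.

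I expect the Lindley representation together with the Feller estimate in Statement 2 to be the main obstacle, since this is precisely where the reflecting boundary must be handled correctly and where naively comparing to the free walk would fail. Statement 1, by contrast, is a standard drift-plus-concentration argument and should go through with the coupling and a single Hoeffding application.
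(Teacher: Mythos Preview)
The paper does not actually prove this lemma; it is imported from \cite{DBLP:journals/corr/abs-2402-06471} and listed in the auxiliary appendix without argument, so there is no in-paper proof to compare against. On its own merits your plan is correct. For Statement~1 the monotone coupling $W_t \ge S_t$ together with a single Hoeffding application on the free walk is the standard route, and your arithmetic checks out: with $t = (2/(p-q))^2 N$ one has $\Exp{S_t} = 4N/(p-q)$ and the Hoeffding exponent is $N(4-(p-q))^2/8 > 9N/8$. For Statement~2 the Lindley identity $W_t = \max_{0\le s\le t}(S_t - S_s)$ is exactly the device needed to undo the reflection, and feeding each fixed start $s$ into \cref{lem:random_walk_no_ruin_probability} (with a left-step as ``success'') gives the per-$s$ bound $(p/q)^N$.

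One cosmetic tightening in Statement~2: you do not need $T+1$ starting indices in the union bound. Any $s$ with $s > T-N$ cannot contribute, since the free walk needs at least $N$ steps to gain $N$; restricting to $s \in \{0,\dots,T-N\}$ replaces the factor $T+1$ by at most $T$, and with $T = (q/p)^{N/2}$ this collapses to exactly $(p/q)^{N/2}$ rather than $(p/q)^{N/2} + (p/q)^{N}$, matching the stated bound without the residual constant you hand-waved away.
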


\begin{lemma}[\cite{DBLP:conf/podc/AmirABBHKL23}]
    \label{lem:biased_reflective_random_walk}
    Let $W(t)$ be the random variable at time $t$ of a random walk on the positive integers with a reflective border at $0$ and $W(0) = 0$.
    Let $p$ be the probability of a $+1$-step.
    Let $q>p$ be the probability of a $-1$-step everywhere except for the origin.
    Let $r=1-p-q$ be the probability of remaining in place ($1-p$ for the origin).
    Let $T_m = \inf\set{t \geq 0 ~|~ W(t) \geq m}$.
    Then $\Pr[T_m \leq n^c] \leq n^c \cdot (p/q)^m$.
\end{lemma}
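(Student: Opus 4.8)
The plan is to decompose the trajectory into excursions away from the origin and to union-bound over them. The key structural observation is that the walk never goes below $0$, and that reaching level $m\ge 1$ can only happen during an \emph{excursion}: a maximal time window on which the walk stays at height $\ge 1$, necessarily opened by a $0\to 1$ step. So it suffices to bound, for a single excursion, the probability that it climbs all the way up to $m$, and then to control how many excursions can start within the first $n^c$ steps.

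First I would count excursions. Let $V_i$ denote the $i$-th time the walk visits the origin (a stopping time, $V_1=0$); every excursion begins at some time $V_i+1$, and these are distinct time steps, so at most $n^c$ excursions begin during the first $n^c$ steps. Moreover, if $T_m\le n^c$ then, taking $V_i$ to be the last visit to the origin strictly before $T_m$, we have $W(V_i+1)=1$ and the walk reaches $m$ before it next returns to $0$. Hence $\{T_m\le n^c\}\subseteq\bigcup_i B_i$, where $B_i$ is the event ``$W(V_i+1)=1$ and the walk hits $m$ before it next returns to $0$'', the union ranging over the (at most $n^c$) indices $i$ with $V_i\le n^c$.

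Next I would bound $\Pr[B_i]$ uniformly in $i$. Conditioning on $\{V_i<\infty\}$ and on the history $\mathcal{F}_{V_i}$, the strong Markov property restarts the walk at the origin, so $\Pr[B_i\mid V_i<\infty,\mathcal{F}_{V_i}]=p\cdot g_m$, where $g_m$ is the probability that a walk started at $1$ hits $m$ before $0$. The ``stay'' steps do not affect which of the two boundaries is hit first, so $g_m$ is the classical gambler's-ruin probability $\frac{(q/p)-1}{(q/p)^m-1}$; bounding the denominator by $(q/p)^m-1\ge\left((q/p)-1\right)(q/p)^{m-1}$ gives $g_m\le(p/q)^{m-1}$, and since $q\le 1$ implies $p\le p/q$ (the case $p=0$ being trivial), we get $\Pr[B_i]\le p\,(p/q)^{m-1}\le(p/q)^m$. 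Summing over the at most $n^c$ indices yields $\Pr[T_m\le n^c]\le n^c\,(p/q)^m$. As an alternative to gambler's ruin, this last bound also follows from \cref{lem:random_walk_no_ruin_probability}: inside an excursion the walk stays at height $\ge 1$, so its $\pm1$-moves are i.i.d.\ with down-probability $q/(p+q)>1/2$, and reaching $m$ from $1$ means the up-moves eventually outnumber the down-moves by $m-1$, which that lemma bounds by $(p/q)^{m-1}$.

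I expect the only genuine work here to be bookkeeping rather than probability: making the excursion decomposition precise — that hitting $m$ forces a ``clean'' excursion, that the relevant excursion lies among the first $n^c$, and that the strong Markov property may be invoked at the stopping times $V_i$. The probabilistic core (gambler's ruin, or \cref{lem:random_walk_no_ruin_probability}) is entirely standard, which is consistent with the statement being quoted from prior work rather than reproved.
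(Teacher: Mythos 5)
The paper does not prove this lemma at all---it is imported verbatim from \cite{DBLP:conf/podc/AmirABBHKL23} and stated in the appendix as an auxiliary tool---so there is no in-paper argument to compare against. Your excursion decomposition is a correct and complete self-contained proof: the union bound over the at most $n^c$ excursion-start times, the factor $p\le p/q$ from the initial $0\to1$ step, and the gambler's-ruin bound $g_m\le(p/q)^{m-1}$ (which is unaffected by the lazy ``stay'' steps, since only the embedded jump chain matters) combine exactly to $n^c\,(p/q)^m$; the strong Markov property at the visit times $V_i$ and the degenerate case $p=0$ are both handled properly. Your alternative route via \cref{lem:random_walk_no_ruin_probability} is equally valid and arguably cleaner, since it avoids quoting the gambler's-ruin formula.
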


\begin{lemma}[\cite{DBLP:books/daglib/0015598}]
    \label{lem:azuma-hoeffding}
    Consider a submartingale $Z_0,Z_1,\ldots$ w.r.t.\ a filtration $\mathcal{F} = (\mathcal{F}_i)_{i=0}^{\tau-1}$.
    Assume $a\leq Z_i - Z_{i-1} \leq b$ for all $i \geq 1$.
    Then for all positive integers $\tau$ and $\lambda > 0$
    \begin{align*}
        \Pr[Z_\tau - Z_0 <  - \lambda] \leq e^{-\frac{2\lambda^2}{\tau(b-a)^2}} 
    \end{align*}
\end{lemma}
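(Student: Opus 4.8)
The plan is to prove the Azuma--Hoeffding tail bound (\cref{lem:azuma-hoeffding}) for submartingales using the standard exponential-moment (Chernoff-style) method applied to the increments $D_i \coloneqq Z_i - Z_{i-1}$. Since $(Z_i)$ is a submartingale with respect to $(\mathcal{F}_i)$, each increment satisfies $\E{D_i \mid \mathcal{F}_{i-1}} \geq 0$, and by hypothesis $a \leq D_i \leq b$ almost surely. I want to bound $\Pr[Z_\tau - Z_0 < -\lambda] = \Pr[-(Z_\tau - Z_0) > \lambda]$. The natural move is to introduce the auxiliary supermartingale $-Z_i$, whose increments $-D_i$ satisfy $\E{-D_i \mid \mathcal{F}_{i-1}} \leq 0$ and $-b \leq -D_i \leq -a$, and then apply a one-sided exponential bound to $-(Z_\tau - Z_0) = \sum_{i=1}^{\tau} (-D_i)$.

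First I would fix a parameter $h > 0$ and consider the exponential $\E{e^{h(-(Z_\tau - Z_0))}} = \E{\exp\bigl(h\sum_{i=1}^{\tau}(-D_i)\bigr)}$. The key step is to peel off one factor at a time by conditioning on $\mathcal{F}_{\tau-1}$: writing the product as $\E{\exp(h\sum_{i=1}^{\tau-1}(-D_i)) \cdot \E{e^{-hD_\tau} \mid \mathcal{F}_{\tau-1}}}$, it suffices to control the inner conditional expectation. This is where the bounded-increment hypothesis enters through \textbf{Hoeffding's lemma}: for a random variable $W$ with $\E{W \mid \mathcal{F}_{\tau-1}} \leq 0$ and $W \in [-b,-a]$ almost surely (here $W = -D_\tau$, with range of length $b-a$), one has the pointwise bound $\E{e^{hW} \mid \mathcal{F}_{\tau-1}} \leq \exp\bigl(h^2(b-a)^2/8\bigr)$. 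The mild subtlety is that Hoeffding's lemma is usually stated for mean-zero variables, whereas here the conditional mean is only $\leq 0$; I would note that convexity of $w \mapsto e^{hw}$ lets us replace $W$ by a variable with conditional mean exactly $0$ without increasing the moment generating function, so the bound carries through. Iterating this peeling over $i = \tau, \tau-1, \ldots, 1$ yields $\E{e^{h(-(Z_\tau - Z_0))}} \leq \exp\bigl(\tau h^2 (b-a)^2/8\bigr)$.

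Having bounded the moment generating function, I would finish with Markov's inequality:
\begin{align*}
    \Pr[Z_\tau - Z_0 < -\lambda] \leq e^{-h\lambda}\,\E{e^{h(-(Z_\tau-Z_0))}} \leq \exp\!\left(-h\lambda + \tfrac{\tau h^2 (b-a)^2}{8}\right).
\end{align*}
Optimizing the exponent over $h > 0$ gives the minimizer $h = 4\lambda/(\tau(b-a)^2)$, and substituting back produces the exponent $-2\lambda^2/(\tau(b-a)^2)$, which is exactly the claimed bound $\exp(-2\lambda^2/(\tau(b-a)^2))$.

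The main obstacle I anticipate is the careful justification of Hoeffding's lemma in the conditional, one-sided-mean setting: the inequality $\E{e^{hW} \mid \mathcal{F}_{\tau-1}} \leq e^{h^2(b-a)^2/8}$ must be argued pointwise (for a.e.\ realization of $\mathcal{F}_{\tau-1}$) and must handle the fact that the conditional mean is $\leq 0$ rather than $= 0$. The standard resolution is to verify that the worst case of the moment generating function, over all distributions supported on an interval of length $b-a$ with mean at most zero, is attained at mean exactly zero, which follows from monotonicity of the relevant exponential in the mean. Everything else --- the telescoping of the product, the application of the tower property, and the final optimization over $h$ --- is routine. Since this lemma is quoted from \cite{DBLP:books/daglib/0015598}, an acceptable alternative would be simply to cite the reference and sketch only the one-sided adaptation to submartingales; but the self-contained argument above is short enough to include in full.
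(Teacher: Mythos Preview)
The paper does not prove this lemma at all; it is listed among the auxiliary results and simply cited from \cite{DBLP:books/daglib/0015598}. Your argument is the standard Chernoff--Hoeffding proof (exponential Markov, peel off increments via the tower property, bound each factor by Hoeffding's lemma, optimize over $h$) and is correct. One minor remark: your justification for the one-sided mean condition via ``convexity of $w\mapsto e^{hw}$'' is imprecise; the clean way is to center, writing $\E{e^{hW}\mid\mathcal{F}_{i-1}} = e^{h\mu}\,\E{e^{h(W-\mu)}\mid\mathcal{F}_{i-1}}$ with $\mu=\E{W\mid\mathcal{F}_{i-1}}\le 0$, so that $e^{h\mu}\le 1$ and Hoeffding's lemma applies to the centered variable $W-\mu$ (whose range still has length $b-a$).
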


\begin{theorem}[\cite{Janson17}]
\label{lem:janson}
Let $X = \sum_{i=1}^{n} X_i$ where $X_i, i = 1,\dots,n$, are independent geometric random variables with $X_i \sim Geo(p_i)$ for $p_i \in (0,1]$.
For any $\lambda \geq 1$,
\begin{equation*}
    \Prob{X \geq \lambda \cdot \Ex{X}} \leq \exp({- \min_{i}\{p_i\} \cdot \Ex{X} \cdot (\lambda - 1 - \ln{\lambda})}).
\end{equation*}
\end{theorem}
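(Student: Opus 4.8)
The plan is to use the exponential Markov (Chernoff) method together with a two-step reduction that first replaces each geometric factor by an exponential one and then collapses the heterogeneous parameters into the single worst-case value $p_\ast := \min_i p_i$. Write $\mu := \Ex{X} = \sum_{i=1}^n 1/p_i$. Since the $X_i$ are independent, for every $t$ with $0 \le t < p_\ast$ (which keeps all moment generating functions finite, because $t < p_\ast \le p_i < -\ln(1-p_i)$) Markov's inequality applied to $e^{tX}$ gives
\begin{align*}
    \Pr\left[X \ge \lambda\mu\right] \le e^{-t\lambda\mu}\prod_{i=1}^n \Ex{e^{tX_i}}, \qquad \Ex{e^{tX_i}} = \frac{p_i e^{t}}{1-(1-p_i)e^{t}}.
\end{align*}

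First I would prove the clean single-variable comparison $\Ex{e^{tX_i}} \le (1 - t/p_i)^{-1}$ for $0 \le t < p_i$; that is, the geometric moment generating function is dominated by that of an exponential variable with the same mean $1/p_i$. Cross-multiplying the two positive denominators, this inequality is equivalent to $e^{t}(1-t) \le 1$, which is just the standard estimate $1 - t \le e^{-t}$. This step is the one that preserves the individual means, which is crucial: bounding each $X_i$ crudely by a $\mathrm{Geo}(p_\ast)$ variable would replace $\mu$ by $n/p_\ast$ and destroy the claimed bound.

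Next I would collapse the parameters. Taking logarithms, $\sum_i \ln \Ex{e^{tX_i}} \le -\sum_i \ln(1 - t/p_i) = \sum_i \tfrac{1}{p_i} f(p_i)$ with $f(p) := -p\ln(1 - t/p)$. The key monotonicity claim is that, for fixed $t$, the function $f$ is non-increasing on $(t, 1]$: writing $y = t/p$ one computes $f'(p) = -\bigl(\tfrac{y}{1-y} + \ln(1-y)\bigr) \le 0$, since $g(y) := \tfrac{y}{1-y} + \ln(1-y)$ vanishes at $y=0$ and has $g'(y) = y/(1-y)^2 \ge 0$. Because each $p_i \ge p_\ast > t$, this yields $f(p_i) \le f(p_\ast)$, hence $\sum_i \tfrac{1}{p_i} f(p_i) \le f(p_\ast)\sum_i \tfrac{1}{p_i} = -\mu\, p_\ast \ln(1 - t/p_\ast)$.

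Combining the two steps and substituting $s = t/p_\ast \in (0,1)$ gives $\ln\Pr[X \ge \lambda\mu] \le \mu p_\ast\bigl(-s\lambda - \ln(1-s)\bigr)$. Finally I would optimize over $s$: the bracket is minimized at $s = 1 - 1/\lambda$, which lies in $(0,1)$ exactly because $\lambda \ge 1$, and there its value equals $-(\lambda - 1 - \ln\lambda)$; plugging this in yields $\Pr[X \ge \lambda\mu] \le \exp(-p_\ast\mu(\lambda - 1 - \ln\lambda))$, as claimed. I expect the main obstacle to be the heterogeneity reduction rather than the final optimization: the real content is to route through the exponential comparison so that each per-variable rate enters weighted by the correct mean $1/p_i$, and then to use the monotonicity of $f$ to push every parameter down to $p_\ast$ while the weights still sum exactly to $\mu$.
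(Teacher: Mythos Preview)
Your argument is correct. The paper does not prove this statement at all: it is listed in the appendix as an auxiliary result quoted from \cite{Janson17}, so there is no in-paper proof to compare against. For what it is worth, your route---Chernoff bound, domination of the geometric moment generating function by the exponential one with the same mean via $e^{t}(1-t)\le 1$, then the monotonicity of $p\mapsto -p\ln(1-t/p)$ to collapse all parameters to $p_\ast$ while keeping the weights summing to $\mu$, and finally the explicit optimisation $s=1-1/\lambda$---is essentially the proof Janson gives in the cited reference.
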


\section{Further Results}
\label{apx:further-results}
\subsection{Bounding the number of undecided agents}
\label{sec:undecided}
In this section we show two results. First we show that, from an an arbitrary initial state we quickly reach a configuration $\mathbf{X}(t)$ with
$$\min\set{x_1(t),(1-p)x_2(t)}\le u(t)\le x_1(t)+x_2(t).$$
We show that, after entering that region, the number of undecided agents will w.h.p. remain in that region for $\Omega(n\log^2 n)$ interactions.

\begin{lemma}
    \label{lem:undecided_u_bound}
    Let $\UpperUBound(t) \coloneqq u(t)-x_1(t)-x_2(t)$ and $T_u \coloneqq \inf\set{t \geq 0: \UpperUBound(t) > 6 \shBias \cdot \sqrt{n\log n}}$. Let $\textbf{x}(0)$ be an arbitrary configuration with $\UpperUBound(0) < 2 \shBias \cdot \sqrt{n \log n}$.
    Then w.h.p.\ it holds that $T_u = \omega(n \log^2 n)$.
\end{lemma}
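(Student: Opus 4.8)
The plan is to analyze the potential $\UpperUBound(t) = u(t) - x_1(t) - x_2(t)$ directly, by computing its expected one-step change and showing it has a negative (or at least non-positive) drift whenever $\UpperUBound(t)$ is not too small, then combining an Azuma–Hoeffding bound with a union bound over a time horizon of $\omega(n\log^2 n)$ interactions. First I would compute $\E[\UpperUBound(t+1) - \UpperUBound(t) \mid \filter_t]$ by going through all interaction types, exactly as in the proof of \cref{lem:shifted_bias_bound:case1}. The key interactions that change $u$ are: an undecided initiator meeting a decided responder (this \emph{decreases} $u$ by $1$ and increases $x_1+x_2$ by $1$, so $\UpperUBound$ drops by $2$), occurring with probability $u(x_1+x_2)/n^2$; a $2$-initiator meeting a $1$-responder (increases $u$ by $1$, decreases $x_2$ by $1$, so $\UpperUBound$ increases by $2$), with probability $x_1 x_2/n^2$; and a $1$-initiator meeting a $2$-responder, which with probability $1-p$ increases $u$ by $1$ and decreases $x_1$ by $1$ ($\UpperUBound$ up by $2$), with total probability $(1-p)x_1 x_2/n^2$. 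Summing, the expected change is of the form $\frac{2}{n^2}\bigl((2-p)x_1 x_2 - u(x_1+x_2)\bigr)$, up to small corrections. The main point is that when $\UpperUBound(t)$ is large — say $u(t) \geq x_1(t) + x_2(t)$, which is implied once $\UpperUBound(t) \geq 0$ — the negative term $u(x_1+x_2)$ dominates: writing $m = x_1 + x_2 = n - u \leq n/2$, we have $x_1 x_2 \leq m^2/4$, so the drift is at most $\frac{2}{n^2}\bigl(\tfrac{(2-p)}{4}m^2 - u m\bigr) = \frac{2m}{n^2}\bigl(\tfrac{(2-p)}{4}m - u\bigr)$, and since $u = n - m \geq n/2 \geq m \geq \tfrac{(2-p)}{4}m$ (as $2-p < 4$), this is strictly negative; in fact bounded above by $-c_1 u m / n^2$ for a constant $c_1$ depending on $p$, which is $-\Theta(1)$ as long as $m = \Theta(n)$.

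The subtlety is that the drift is only guaranteed negative once $\UpperUBound(t) \geq 0$ (equivalently $u \geq x_1 + x_2$); for smaller values of $u$ the drift could be positive, but that is fine — we only need to prevent $\UpperUBound$ from \emph{exceeding} $6\shBias\sqrt{n\log n}$, and it starts below $2\shBias\sqrt{n\log n}$. So I would apply the Azuma–Hoeffding bound (\cref{lem:azuma-hoeffding}) to the supermartingale-like process $\UpperUBound(t)$ over a window of length $\tau = \alpha n\log^2 n$: since $\abs{\UpperUBound(t+1) - \UpperUBound(t)} \leq 2$, the probability that $\UpperUBound$ rises from below $2\shBias\sqrt{n\log n}$ to above $6\shBias\sqrt{n\log n}$ within $\tau$ steps — a net increase of at least $4\shBias\sqrt{n\log n}$ against a non-positive drift — is at most $\exp\bigl(-\tfrac{2(4\shBias\sqrt{n\log n})^2}{\tau\cdot 16}\bigr) = \exp\bigl(-\tfrac{\shBias^2 n\log n}{\tau}\bigr)$. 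Choosing $\tau = \alpha n\log^2 n$ this is $\exp(-\shBias^2 \log n/\alpha) = n^{-\shBias^2/\alpha}$, which is polynomially small for $\alpha$ a small enough constant; a union bound over a polynomial number of such windows (or a direct restart argument) then gives the claim that w.h.p.\ $T_u = \omega(n\log^2 n)$.

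The main obstacle I anticipate is making the drift bound rigorous when one does \emph{not} control $x_1, x_2$ individually — in particular, the drift being $-\Theta(1)$ relied on $m = x_1 + x_2 = \Theta(n)$, and this could fail if the process has nearly reached consensus ($m$ small) before $\UpperUBound$ gets large. However, $\UpperUBound$ large means $u$ is large, hence $m$ is \emph{small}, and a separate sanity check shows that when $m$ is small the term $u(x_1+x_2) \geq u m$ still dominates $(2-p)x_1x_2 \leq (2-p)m^2/4 < um$ since $u > m$; so the drift stays non-positive regardless, and one does not actually need $m = \Theta(n)$ for the "no-overshoot" direction — only for a quantitative rate, which the Azuma argument does not require. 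The one place where care is genuinely needed is the stopping-time bookkeeping: as in \cref{lem:drift_x2_to_zero}, I would define the frozen process that jumps to an absorbing "good" state as soon as $\UpperUBound(t) \geq 0$ fails in the relevant sense, verify the supermartingale property holds unconditionally for the frozen process, and transfer back via a union bound on the (high-probability) event that freezing never triggers on the original process.
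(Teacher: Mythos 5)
There is a genuine gap, and it is quantitative rather than structural. Your drift computation is fine (the expected change of $\UpperUBound$ is $\tfrac{2}{n^2}\bigl((2-p)x_1x_2-u(x_1+x_2)\bigr)$, which is indeed non-positive once $u\ge x_1+x_2$), but the Azuma step does not close. With $\lambda=4\shBias\sqrt{n\log n}$, $(b-a)^2=16$ and $\tau=\alpha n\log^2 n$, the exponent is
\[
\frac{2\lambda^2}{\tau(b-a)^2}=\frac{2\cdot 16\,\shBias^2\, n\log n}{16\,\alpha\, n\log^2 n}=\frac{2\shBias^2}{\alpha\log n},
\]
so the bound is $\exp\!\bigl(-2\shBias^2/(\alpha\log n)\bigr)\to 1$; you simplified $\tfrac{n\log n}{\alpha n\log^2 n}$ to $\tfrac{\log n}{\alpha}$ instead of $\tfrac{1}{\alpha\log n}$. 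This is not a fixable constant: a $\pm 2$ walk with merely non-positive drift fluctuates by $\Theta(\sqrt{\tau})=\omega(\sqrt{n}\log n)$ over $\tau=\omega(n\log^2 n)$ steps, so an excursion of size $O(\sqrt{n\log n})$ is \emph{typical}, and no single Azuma application over the full horizon can rule it out. Your remark that the quantitative rate is "not required by the Azuma argument" is exactly backwards --- the strictly negative drift is the whole point.

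The paper's proof supplies precisely the missing ingredient. Conditioning on a productive interaction, it shows that once $\UpperUBound(t)\ge\Delta$ (with $\Delta=2\shBias\sqrt{n\log n}$) the probability of a $+2$ step is at most $\tfrac12-\tfrac{\Delta}{2n}$, i.e.\ a genuine per-step bias of order $\Delta/n$ away from the threshold. It then decomposes time into excursions above $\Delta$ (restarting at each upcrossing time $T_i$) and applies the gambler's-ruin bound of \cref{lem:random_walk_no_ruin_probability}, which controls the probability that the excess \emph{ever} reaches $2\Delta$ during an arbitrarily long excursion by $\bigl(\tfrac{1-\tilde p}{\tilde p}\bigr)^{\Delta}\le n^{-4\shBias^2}$, independently of the time horizon; a union bound over the at most $m$ excursions finishes. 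To repair your argument you would need both pieces: the quantitative bias $\Delta/(2n)$ above the threshold, and a maximal/ruin-type inequality in place of the fixed-window Azuma bound.
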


\begin{proof}
    The idea is to show for any time $t > 0$ where $u(t)$ crosses the threshold $x_1(t) + x_2(t) + \Delta$ (for $\Delta \coloneqq 2\shBias \cdot \sqrt{n \cdot \log n}$) from below, it only ever exceeds it by another $2\Delta$ due to a negative drift that we now calculate.
    We say the $t$'th interaction is \emph{productive}, if $\Xconfig(t+1) \neq \Xconfig(t)$.
    We denote this event by $\productive$.
    Note that any productive interaction alters $\UpperUBound$ by $\pm 2$.
    Then,
    \begin{align*}
        \Probs{\UpperUBound(t+1) = \UpperUBound(t)+2 ~|~ \filter_t}
        &= \frac{(1-p)x_1\cdot x_2}{n^2} + \frac{x_2\cdot x_1}{n^2}
        &&= \frac{(2-p)x_1\cdot x_2}{n^2},\\
        \Probs{\UpperUBound(t+1) = \UpperUBound(t)-2 ~|~ \filter_t}
        &= \frac{u\cdot x_1}{n^2} + \frac{u\cdot x_2}{n^2}
        &&= \frac{u\cdot (x_1+x_2)}{n^2}.
    \end{align*}
    
    Note that for a fixed $u$, the product $x_1 \cdot x_2$ is maximal for $x_1 = x_2 = (x_1+x_2)/2$.
    Then,
    \begin{align*}
        \MoveEqLeft \Probs{\UpperUBound(t+1) = \UpperUBound(t)+2 ~|~ \mathcal{F}_t, \productive}
        = \frac{(2-p)x_1\cdot x_2}{(2-p)x_1\cdot x_2 + u \cdot (x_1+ x_2)}\\
        &= \frac{1}{2} + \frac{(2-p)x_1\cdot x_2 - u \cdot (x_1+ x_2)}{2\left((2-p)x_1\cdot x_2 + u \cdot (x_1+ x_2)\right)}
        \leq \frac{1}{2} + \frac{(2-p)\frac{x_1+x_2}{2}\cdot \frac{x_1+x_2}{2} - u \cdot (x_1+ x_2)}{2\left((2-p)x_1\cdot x_2 + u \cdot (x_1+ x_2)\right)}\\
        &= \frac{1}{2} + \frac{(x_1+ x_2)\cdot \left(\frac{2-p}{4}\cdot (x_1+x_2)-u\right)}{2\left((2-p)x_1\cdot x_2 + u \cdot (x_1+ x_2)\right)}
        \leq \frac{1}{2} - \frac{(x_1+ x_2)\cdot \UpperUBound(t)}{2\left((2-p)x_1\cdot x_2 + u \cdot (x_1+ x_2)\right)}
    \end{align*}
    
    So as long as $\UpperUBound(t) \geq \Delta > 0$ holds at time $t$, we can bound the denominator by
    \begin{align*}
        \MoveEqLeft 2\left((2-p)x_1\cdot x_2 + u \cdot (x_1+ x_2)\right)\\
        &\leq 2\left(2x_1\cdot x_2 + u \cdot (x_1+ x_2)\right)\\
        &\leq 2((x_1+x_2)^2+u \cdot (x_1+x_2))\\
        &= 2(u+x_1+x_2) \cdot (x_1+x_2)\\
        &= 2n \cdot (x_1+x_2).
    \end{align*}
    Therefore, if $\UpperUBound(t) \geq \Delta$ and we condition on a productive step, we have
    \begin{align*}
        \Probs{\UpperUBound(t+1) = \UpperUBound + 2 ~|~ \mathcal{F}_t, \productive}
        \geq \frac{1}{2} - \frac{\Delta \cdot (x_1+x_2)}{2n \cdot (x_1+x_2)}
        \geq \frac{1}{2} - \frac{\Delta}{2n}.
    \end{align*}

    Now, we consider a sequence of $m= \omega(n \cdot \log^2 n)$ interactions.
    We let $T_i$ for $i>0$ denote the first time after $T_{i-1}$ where $\UpperUBound(T_i-1) < \Delta$ and $\UpperUBound(T_i) \geq \Delta$.
            We show that in every sequence of at most $m$ interactions, starting at time $T_i$, $\UpperUBound$ does not exceed the threshold by more than $3\Delta$ before falling back below $\Delta$ with probability at least  $1-n^{-4\shBias^2}$.
    To show that, we consider a sequence of independent Bernoulli trials $(Z_i)_{i>0}$ with success probability $\tilde{p} = 1/2 + \Delta / (2n)$.
    Each trial corresponds to a productive interaction where $\UpperUBound(t) \geq \Delta$ and a success to a $-2$-step in $\UpperUBound$. 
    The number of failed trials exceeds the number of successful trials by more than $\Delta$ trials -- and thus $\UpperUBound$ exceeds $\Delta + 2\Delta = 6 \shBias \cdot \sqrt{n\cdot \log n}$ with probability at most (see \cref{lem:random_walk_no_ruin_probability}):
    \begin{align*}
        \left(\frac{1-\Tilde{p}}{\Tilde{p}}\right)^{\Delta}
        = \left(1-\frac{2\Delta}{n+\Delta}\right)^{\Delta}
        \leq \exp\left(-\frac{2\Delta^2}{n+\Delta}\right)
        \leq \exp\left(-\frac{2\cdot 4\shBias^2 \cdot n \cdot \log n}{2n}\right)
        = n^{-4\shBias^2}.
    \end{align*}
    At last, every such sequence has at least one interaction, thus there are at most $m$ sequences in $m$ interactions.
    The claim then follows from the union bound.
\end{proof}

We complete this section by showing that after creating at least $\min\set{x_1,(1-p)x_2}$ undecided agents, that number does not drop significantly while the weighted bias is not large enough.
The proof idea is the same as in \cref{lem:undecided_u_bound}.
\begin{lemma}
\label{lem:lower_bound_u}
    Let $T = \inf\set{t\geq 0 ~|~ u(t) \geq \min\set{x_1(t),(1-p)x_2(t)}}.$
    Let $\xconfig(0)$ be an arbitrary initial configuration with $u(0) \leq x_1(0)+x_2(0)$. 
    Then, $\Probs{T \leq  144\cdot n}\geq 1-n^{-3}$.
    Furthermore, as long as $\sqrt{n\log n} \leq \Delta_w(t) \leq (n-u(t))/4$ for $t\geq T$, it holds that $u(t) \geq (3/10) \cdot (1-p)/(2-p) \cdot n$ \whp.
\end{lemma}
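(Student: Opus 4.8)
The plan is to mirror the structure of the proof of \cref{lem:undecided_u_bound}, but now tracking the quantity $\LowerUBound(t) \coloneqq \min\set{x_1(t),(1-p)x_2(t)} - u(t)$ and showing it has a drift \emph{downward} whenever it is positive, so that $u$ is pushed up to the required level. First I would compute, exactly as in the earlier lemma, the probabilities that a productive interaction changes $u$ by $+2$ (namely $\frac{(2-p)x_1 x_2}{n^2}$) versus $-2$ (namely $\frac{u(x_1+x_2)}{n^2}$). When $u$ is small relative to $\min\set{x_1,(1-p)x_2}$, the ``$+2$'' probability dominates: conditioned on a productive step, $\Pr[u \text{ increases by }2] = \frac{(2-p)x_1x_2}{(2-p)x_1x_2 + u(x_1+x_2)}$, and I would lower-bound this by a constant bounded away from $1/2$ using $u \leq \min\set{x_1,(1-p)x_2} \leq x_1$ together with $(1-p)x_2 \leq 2 x_1$ (which follows from $\Delta_w(t) \leq (n-u)/4 = o(n)$, forcing $x_1$ and $(1-p)x_2$ to be of the same $\Theta(n)$ order). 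The constant $3/10$ and the factor $(1-p)/(2-p)$ in the statement should drop out of making this bound explicit and then feeding it into a Gambler's-ruin / random-walk argument.

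For the first claim ($T \leq 144 n$ \whp), I would argue that productive interactions occur at rate $\Theta(1)$ per step whenever both opinions have $\Theta(n)$ support (which holds here since $x_1 \geq \epsilon n$ in the regime of interest and $u \leq x_1 + x_2$), so within $O(n)$ interactions we see $\BigOmega{n}$ productive steps; among these, the drift toward increasing $u$ means that $u$ hits $\min\set{x_1,(1-p)x_2}$ within the first $O(n)$ interactions with probability $1-n^{-3}$ by the random-walk tail bound in \cref{lem:random-walk} (item 1) or the Feller bound in \cref{lem:random_walk_no_ruin_probability}. One subtlety is that $\min\set{x_1,(1-p)x_2}$ is itself a moving target; but since $\Delta_w(t)$ stays $o(n)$ in this regime, $x_1$ and $(1-p)x_2$ move by at most $O(n)$ over $O(n)$ steps while staying $\Theta(n)$, so the target barrier is within a constant factor throughout, and it suffices to chase a fixed barrier like $c n$ for a suitable small constant $c$.

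For the ``furthermore'' part, the plan is exactly the stopping-time argument of \cref{lem:undecided_u_bound}: let $T_i$ be the successive times where $u$ drops below the barrier $\min\set{x_1,(1-p)x_2}$, and show that starting from any such $T_i$, $u$ does not fall below $(3/10)\cdot(1-p)/(2-p)\cdot n$ before climbing back above the barrier, except with probability $n^{-\BigOmega{1}}$, by comparing to i.i.d.\ Bernoulli trials with success probability bounded away from $1/2$ and invoking \cref{lem:random_walk_no_ruin_probability}; a union bound over the $\leq m = \omega(n\log^2 n)$ such excursions closes it, provided $p$ is a constant so that $(1-p)/(2-p)$ is a constant.

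The main obstacle I anticipate is handling the moving barrier cleanly: unlike $\UpperUBound$ in \cref{lem:undecided_u_bound}, whose threshold $x_1+x_2+\Delta$ one can bound crudely, here the target $\min\set{x_1,(1-p)x_2}$ can in principle both shrink and, more worryingly, the gap between $u$ and the target is what I want to control, yet the transitions also change $x_1$ and $x_2$. The way around it is to use the hypothesis $\Delta_w(t) \leq (n-u(t))/4$ to pin $x_1, (1-p)x_2, n-u$ all to within a constant factor of $n$, so that all the relevant quantities are $\Theta(n)$ and the per-step drift constant is genuinely a constant; then one only ever needs to chase (and stay above) a fixed $\Theta(n)$ barrier, and the rest is the same Feller/random-walk bookkeeping as before.
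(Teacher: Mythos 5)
Your proposal matches the paper's proof in all essentials: the same $+2$/$-2$ productive-interaction probabilities, a constant drift toward increasing $u$ conditioned on productive steps, a Chernoff bound giving $\Omega(n)$ productive steps in $O(n)$ interactions, a biased-random-walk tail bound for the hitting time, and the same excursion/stopping-time argument with Bernoulli trials and the Feller-type bound (\cref{lem:random_walk_no_ruin_probability}) for the ``furthermore'' part, with the final constant extracted from $n = x_1 + x_2 + u$ and the barrier. The only cosmetic difference is that you track $\min\set{x_1,(1-p)x_2} - u$ explicitly while the paper tracks $u$ against a (halved) barrier directly; your explicit handling of the moving-barrier issue is, if anything, more careful than the paper's.
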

 \begin{proof}
    To show the result we track the evolution of the undecided agents over time.
    Let $T = \inf\set{t\geq 0 ~|~ u(t) \geq \min\set{x_1(t),(1-p)x_2(t)}/2}$.
    We distinguish between $\min\set{x_1(t),(1-p)x_2(t)} = x_1(t)$ and $\min\set{x_1(t),(1-p)x_2(t)} = (1-p)x_2(t)$.
    In the first case, as long as $t < T$, it holds that
    \begin{align*}
        \Probs{U(t+1)=u(t)+1 ~|~ \filter_t, \productive} 
        &= \frac{1}{2} + \frac{(2-p)x_1 x_2 - u\cdot(x_1+x_2)}{2((2-p)x_1 x_2+u\cdot(x_1+x_2))} \\
        &\geq \frac{1}{2} + \frac{(2-p)(x_1+x_2) - x_1/2 \cdot (x_1+x_2)}{2((2-p)x_1 x_2+u\cdot(x_1+x_2))} \\
        &\geq \frac{1}{2} + \frac{x_1((x_1+x_2)/2-\Delta_w)}{3x_1(x_1+x_2)} \\
        &\geq \frac{1}{2} + \frac{1}{12} .
    \end{align*}
    In the second case, as long as $t<T$, it holds that
    \begin{align*}
        \Probs{U(t+1)=u(t)+1 ~|~ \filter_t, \productive} 
        &= \frac{1}{2} + \frac{(2-p)x_1 x_2 - u\cdot(x_1+x_2)}{2((2-p)x_1 x_2+u\cdot(x_1+x_2))} \\
        &\geq \frac{1}{2} + \frac{(2-p)(x_1+x_2) - (1-p)/2 x_2 \cdot (x_1+x_2)}{2((2-p)x_1 x_2+u\cdot(x_1+x_2))} \\
        &\geq \frac{1}{2} + \frac{x_2((1-p)(x_1+x_2)/2-\Delta_w)}{3x_2(x_1+x_2)} \\
        &\geq \frac{1}{2} + \frac{1}{12} .
    \end{align*} 
    
    Now we bound the number of productive interactions in the first $O(n)$ interactions.
    An interaction at time $t$ is productive with at least constant probability
    \begin{align*}
        \frac{(2-p)x_1(t)x_2(t)+u(t)(n-u(t))}{n^2} \geq \varepsilon
    \end{align*}
    due to \cref{lem:undecided_u_bound} and the assumption $\Delta_w(t) \leq (n-u(t))/4$.
    From Chernoff bound it follows  $\Omega(n)$ productive interactions in $O(n)$ interactions \whp.
    Now we show that the number of undecided agents does not drop below $\min\set{x_1(t)/2,(1-p)x_2/2}-\Delta$ with $\Delta = \sqrt{n\log n}$.
    This allows us to relate this sequence of productive interactions and the evolution of the undecided agents to a biased random walk on $\mathbb{N}_0$ with a reflective barrier at position $0$.
    The probability to move to the right  is $\tilde{p} = 1/2+1/12$
    and the probability to move to the left is $1-\tilde{p}$.
    Then \cref{lem:biased_reflective_random_walk} implies $T=O(n)$ \whp.
    
    \medskip
    Now we show that the number of undecided agents remains relatively large.
    Similar to \cref{lem:undecided_u_bound}, we consider the time steps $T_i$ where for the first time after $T_{i-1}$ we have $u(T_i-1) > \min\set{x_1(T_{i-1})/2,(1-p)x_2(T_{i-1})/2}$ and $u(T_i)\leq \min\set{x_1(T_i)/2,(1-p)x_2(T_i)/2 }$.
    We do a case study on $u(t)$.
    As long as $u(t) \leq x_1(t)/2$ for $t\geq T_i$ it holds that
     \begin{align*}
        \Probs{U(t+1)=u(t)-1 ~|~ \filter_t, \productive} 
        &= \frac{1}{2} - \frac{(2-p)x_1 x_2 - u\cdot(x_1+x_2)}{2((2-p)x_1 x_2+u\cdot(x_1+x_2))} \\
        &\leq \frac{1}{2} - \frac{(2-p)(x_1+x_2) - x_1/2 \cdot (x_1+x_2)}{2((2-p)x_1 x_2+u\cdot(x_1+x_2))} \\
        &\leq \frac{1}{2} - \frac{x_1((x_1+x_2)/2-\Delta_w)}{3x_1(x_1+x_2)} \\
        &\leq \frac{1}{2} - \frac{1+p}{12} \\
        &\leq \frac{1}{2} - \frac{1-p}{12}
    \end{align*}
    where in the last inequalities we use $\Delta_w <(1-p)(x_1+x_2)/4$.
    Similarly, as long as $u(t) \leq (1-p)x_2(t)/2$ for $t \geq T_i$ it holds that
    \begin{align*}
        \Probs{U(t+1)=u(t)-1 ~|~ \filter_t, \productive} 
        &= \frac{1}{2} - \frac{(2-p)x_1 x_2 - u\cdot(x_1+x_2)}{2((2-p)x_1 x_2+u\cdot(x_1+x_2))} \\
        &\leq \frac{1}{2} - \frac{(2-p)(x_1+x_2) - (1-p)/2 x_2 \cdot (x_1+x_2)}{2((2-p)x_1 x_2+u\cdot(x_1+x_2))} \\
        &\leq \frac{1}{2} - \frac{x_2((1-p)(x_1+x_2)/2-\Delta_w)}{3x_2(x_1+x_2)} \\
        &\leq \frac{1}{2} - \frac{(1-p)(x_1+x_2)/4)}{3(x_1+x_2)} \\
        &\leq \frac{1}{2} - \frac{1-p}{12}
    \end{align*} 
    where in the last inequalities we use $\Delta_w < (1-p)(x_1+x_2)/4$.
    Now we show that the number of undecided agents does not drop below $\min\set{x_1(t)/2,(1-p)x_2/2}-\Delta$ with $\Delta = \sqrt{n\log n}$.
    We consider a sequence of independent Bernoulli trials $(Z_i)_{i\geq 0}$ with success probability $\Tilde{p} = 1/2+ (1-p)/12$.
    Each trial corresponds to a productive interaction under the assumption from above.
    From \cref{lem:random_walk_no_ruin_probability} it follows the number of failed trials exceeds the number of successful trials by more than $\Delta/2$ trials with probability at most 
    \begin{align*}
        \left(\frac{1-\Tilde{p}}{\Tilde{p}}\right)^{\Delta/2}
        = \left( \frac{6-(1-p)}{6+(1-p)}\right)^{\Delta/2}
    \end{align*}
    At last, from $u\geq x_1/2-\Delta$, $x_1-(1-p)x_2 > 0$ and $n = x_1+x_2+u$ it follows
    $u \geq (3/10) \cdot (1-p)/(2-p) \cdot n$.
    On the other hand, from $u\geq (1-p)x_2/2-\Delta$, $x_1-(1-p)x_2>0$ and $n = x_1+x_2+u$ it follows $u\geq (3/10) \cdot (1-p)/(2-p)\cdot n$.
    The claim then follows from the union bound.
    \end{proof}

\subsection{Initially small Opinion 1}
\label{ssec:p=1}
Consider the configuration with $x_1(0) = o(n)$ and $x_2(0) = n-x_1(0)$.
For simplicity assume $x_1(0) = 1 $ and $x_2(0) = n-1$.
Observe that $x_1(0)/x_2(0) = o(1)$.
Therefore, we consider $p = 1$.
On an intuitive level, it is clear that the Opinion 1 wins eventually because it cannot lose support.
On the other hand, it is quite slow in the early stage due to the low number of undecided agents and agents with Opinion 1.
To increase the support of Opinion 1, an undecided agent has to interact with an agent with Opinion 1.
An undecided agent is created whenever an agent with Opinion $2$ interacts with an agent with Opinion $1$.
It adopts an opinion again, when interacting with an agent of either opinion. We call such an interaction \emph{$u$-productive}.
Considering the initial configuration, these interactions rarely occur in the beginning.

The proof idea for this case is straightforward.
First, we increase the support of Opinion 1 to a sufficiently large size, i.e., $x_1(t) = \Omega(\sqrt{n})$. 
Then we follow the proof of \sref{thm:main-theorem:1} of \cref{thm:main-theorem}.

\begin{lemma}
    Let $p = 1$ and $\mathbf{x}(0)$ be an initial configuration with $u(0)=0$, $x_1(0)=1$ and $x_2(0) = n-1$.
    Let $T_1 = \inf\set{t'\geq 0 ~|~ x_2(t') = 0}$.
    Then, $\Probs{T \leq 7n^2\log^2 n} \geq 1-n^{-2}$.
\end{lemma}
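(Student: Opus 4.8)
The plan is to reuse the two-stage structure of the proof of \sref{thm:main-theorem:1}, but to treat the slow start-up that is special to $p=1$ by a separate argument. Two structural facts drive everything: with $p=1$ an Opinion-$1$ agent never becomes undecided, so $x_1(t)$ is non-decreasing and increases (by exactly one) precisely on a $(\bot,1)$ interaction, while an undecided agent is created precisely on a $(2,1)$ interaction; moreover $\Delta_w(t)=x_1(t)$. I would bound $T_1$ (the first time $x_2=0$) by splitting the run into a slow stage that grows $x_1$ to $\Theta(\sqrt n)$ and a fast stage that then drives $x_2$ to $0$; the slow stage dominates.

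\emph{Stage 1 (grow $x_1$ from $1$ to $m\coloneqq\lceil 2\sqrt n\,\rceil$).} While $x_1(t)\le m$ we have $x_2(t)=n-x_1(t)-u(t)$. A ballot-type estimate in the spirit of \cref{lem:undecided_u_bound} (with the $p=1$ transition probabilities) shows that, \whp, $u(t)=O(\sqrt n\log n)$ during this stage, so $x_2(t)\ge n/2$. A complementary estimate in the spirit of the first part of \cref{lem:lower_bound_u} --- redone for $p=1$, since the constant $(1-p)/(2-p)$ there degenerates to $0$ --- shows that for a constant fraction of the productive interactions $u(t)=\Omega(x_1(t))$ (conditioned on a productive step, the drift pushes $u$ toward $x_1x_2/(x_1+x_2)=\Theta(x_1)$). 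Hence $\E{x_1(t+1)-x_1(t)\mid\filter_t}=u(t)x_1(t)/n^2=\Omega(x_1(t)^2/n^2)$ throughout this stage. Cutting $[1,m]$ into doubling epochs via $\sigma_k\coloneqq\inf\set{t:x_1(t)\ge 2^k}$, the drift inside epoch $k$ is $\Omega(2^{2k}/n^2)$, so $x_1$ leaves the epoch after $O(n^2/2^k)$ interactions in expectation and after $O(n^2\log n/2^k)$ with high probability; summing over $k\le\tfrac12\log_2 n+1$ sums to $O(n^2\log n)$, and a union bound over the $O(\log n)$ epochs keeps the high probability. The crude target $7n^2\log^2 n$ leaves ample room for constants.

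\emph{Stage 2 (drive $x_2$ to $0$).} Once $x_1(t_0)\ge m$ I would not use \cref{lem:repeated-doubling} (which also degenerates at $p=1$), but instead argue directly with $\Psi(t)=x_2(t)/x_1(t)$. The general computation in the proof of \cref{lem:drift_x2_to_zero}, specialised to $p=1$, gives $\E{\Psi(t)-\Psi(t+1)\mid\filter_t}=\tfrac{\Psi(t)}{n^2}\left(x_1(t)-\tfrac{u(t)}{x_1(t)+1}\right)$; since $x_1(t)\ge m$ for all $t\ge t_0$ (it never decreases) and $u(t)\le n$, the bracket is at least $x_1(t)/2\ge\sqrt n$, so $\Psi$ has multiplicative drift with rate $\delta=n^{-3/2}$ --- and, crucially, this holds deterministically, with no auxiliary high-probability event required. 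The multiplicative drift theorem (\cref{thm:mult_drift_tail_lengler_18}) with $s_0\le n$, $s_{\min}\ge 1/n$ and $r=3\ln n$ then yields $x_2=0$ within $O(n^{3/2}\log n)$ further interactions with probability $\ge 1-n^{-3}$. (From here a routine mop-up along $\delta(1,\bot)=1$ would drive $x_1$ to $n$ in an additional $O(n\log n)$ interactions as in \sref{thm:main-theorem:1}, but that is not needed for the present claim.) Adding the two stages and taking a union bound over the polynomially small failure probabilities gives $T_1=O(n^2\log^2 n)$ \whp, and bookkeeping on the constants yields the stated $7n^2\log^2 n$.

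\textbf{Main obstacle.} Stage 2 is essentially mechanical. The difficulty is entirely in Stage 1, and within it in the regime $x_1=O(1)$: there the drift of $x_1$ is only $\Theta(1/n^2)$, the number of undecided agents is $O(1)$ and is $0$ much of the time, so no deterministic drift argument applies and a plain Markov bound gives only constant success probability per $\Theta(n^2)$-interaction window. One must argue via a regeneration / nested-geometric argument --- creating an undecided agent takes $O(n)$ interactions in expectation, and upon its next resolution it becomes Opinion~$1$ with probability $\Theta(x_1/n)$, so $x_1$ advances after $O(n)$ such trials in expectation --- combined with a tail bound such as \cref{lem:janson} or \cref{lem:random_walk_no_ruin_probability} to concentrate the escape time from each small level and to certify that $u$ does not get permanently stuck at $0$. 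Making the coupled $(x_1,u)$-dynamics rigorous in this low-density regime, and re-deriving the $u=\Omega(x_1)$ lower bound that the $p<1$ estimates of \cref{lem:lower_bound_u} no longer provide, is the delicate point; everything else follows the template of \sref{thm:main-theorem:1}.
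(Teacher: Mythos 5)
Your two-stage decomposition (grow $x_1$ to $\Theta(\sqrt n)$, then run multiplicative drift on $\Psi=x_2/x_1$) is exactly the paper's, and your Stage 2 matches the paper's almost verbatim --- indeed your observation that for $p=1$ the drift bound holds deterministically because $x_1$ never decreases is the key simplification the paper also exploits, and your specialisation of the drift formula is the correct one. The divergence is in Stage 1. Your primary route (pointwise drift $\E{x_1(t+1)-x_1(t)\mid\filter_t}=u\,x_1/n^2=\Omega(x_1^2/n^2)$ plus doubling epochs) rests on the claim that $u(t)=\Omega(x_1(t))$ for a constant fraction of the time, and you rightly flag that this is not available: \cref{lem:lower_bound_u} degenerates at $p=1$, and for small $x_1$ the number of undecided agents is $0$ most of the time, so no such lower bound holds and the epoch argument does not get off the ground. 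The paper avoids this entirely by running, for \emph{every} level $x_1=k$ with $1\le k\le 2\sqrt n$, precisely the regeneration argument you relegate to your ``main obstacle'' paragraph: wait $O(n\log n)$ interactions (\whp) for a $(2,1)$-interaction to create an undecided agent, wait $O(n\log n)$ for that agent to resolve, and note it resolves to Opinion $1$ with probability $x_1/(x_1+x_2)\ge 1/n$, so $O(n\log n)$ such trials suffice \whp; a union bound over the $\le 2\sqrt n$ levels finishes the stage. So the fix you sketch for the low-density regime is in fact the whole proof of Stage 1, and using it uniformly makes the unproven $u=\Omega(x_1)$ estimate unnecessary --- I would drop the drift/epoch route altogether rather than try to patch it. (As a side remark, neither your accounting nor the paper's is tight against the stated $7n^2\log^2 n$: the naive product of the three geometric waiting times over $\sqrt n$ levels gives $O(n^{2.5}\log^2 n)$ unless one uses the level-dependent success probabilities $\Theta(k/n)$ for the first and third waiting times, which brings the total back to $O(n^2\log^3 n)$; some care with the exponents of $\log n$ is needed either way.)
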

\begin{proof}
    We track the support of Opinion 1 until it reaches $2\sqrt{n}$ and then apply a drift theorem on the potential function $\Psi(t)= x_2(t)/x_1(t)$ similar to \cref{lem:drift_x2_to_zero}.
    First we show there exists a time $t = O(n^2\log^2 n)$ with $x_1(t) \geq 2\sqrt{n}$.
    Let $T = \inf\set{t\geq 0 ~|~ x_1(t)= 2\sqrt{n}}$.
    Assume $x_2(t) \geq n/2$ for all $t \leq T$ (otherwise, we are already done).
    Due to $p = 1$, $(2,1)$-interactions are the only interactions to create an undecided agent at time $t$ with probability $x_1(t)x_2(t) / n^2$.
    Let the random variable $Z$ denote the number of interactions until an undecided agent is created.
    Observe that $Z$ stochastically dominates a geometrically distributed random variable with success probability $1/(2n)$ because $x_2(t)\geq n/2$ and $x_1(t) \geq x_1(0)$.
    Therefore, $\Probs{Z \leq 8 n\log n} \geq 1-n^{-3}$.
    Similarly, we can describe the random variable $Y$ regarding $u$-productive interactions.
    The random variable $Y$ denotes the number of interactions until a $u$-productive interaction occurs, assuming $u(t)\geq 1$.
    The probability of such interaction is $u(t)(x_1(t)+x_2(t))/n^2$.
    Again, $Y$ stochastically dominates a geometrically distributed random variable with success probability $1/(2n)$.
    Therefore, $\Probs{Y \leq 8 n\log n} \geq 1-n^{-3}$.
    At last, assume a $u$-productive interaction occurs.
    In particular, the probability of an $(u,1)$-interaction is $x_1(t)/(x_1(t)+x_2(t))$, i.e., the support of the preferred opinion increases.
    Let the random variable $V$ denote the number of productive $(u,1)$-interactions until a $(u,1)$-interaction occurs.
    Again, $V$ stochastically dominates a geometrically distributed random variable with success probability $1/n$.
    Therefore, $\Probs{V \leq 8 n\log n} \geq 1-n^{-3}$.
    By combining these steps via the Union bound, the support of the preferred opinion increases by at least one within $O(n^2\log^2 n)$ interactions whp.
    At last, we repeat this at most $\sqrt{n}$ many times, and the first part follows by the Union bound.

    \medskip

    Now we deal with the second part.
    Similar to the proof of \cref{lem:drift_x2_to_zero}, we compute the expected change of the potential function $\Psi(t) = x_2(t)/x_1(t)$ and apply a known drift theorem.
    Due to the choice of stubbornness $p=1$ and the assumption we know $x_1(t')\geq x_1(t) = 2\sqrt{n}$ for all $t'\geq t$.
    This allows us to bound the expected change as follows
    \begin{align*}
        \Ex{\Psi(t+1)-\Psi(t)~|~ \filter_t}
         &= -\frac{\Psi(t)}{n^2 x_1 (x_1 +1)} \cdot (x_1^2+x_1 -u) \\
         &\leq - \frac{\Psi(t)}{n^2} \cdot (x_1 - \frac{n}{x_1+1}) \\
         &\leq - \frac{\Psi(t)}{n^{3/2}}
    \end{align*}
    We now apply the multiplicative drift theorem(\cref{thm:mult_drift_tail_lengler_18}) to bound $T_1$ with $r = 3\ln n$, $s_0 = x_2(t)/x_1(t) \leq n$, $s_{\min} = (n-1)^{-1}$ and $\delta = n^{-3/2}$
    \begin{align*}
        \Probs{T_1 > 6n^{3/2}\ln n}
        =\Probs{T_1 > \frac{6\ln n}{n^{-3/2}}}
        &\leq \Probs{T_1 > \left\lceil \frac{3\ln n+\ln(n/(n-1)^{-1})}{n^{-3/2}} \right\rceil} \\
        &\leq\Probs{T_1 > \left\lceil \frac{r+\ln(s_0/s_{\min})}{\delta} \right\rceil} \leq e^{-r} = n^{-3}
    \end{align*}
\end{proof}

\subsection{Cases in which Opinion 2 wins}
\label{sec:analysis:case2}
In this section, we prove \sref{thm:main-theorem:2} of \cref{thm:main-theorem}, namely that Opinion 2 wins if $p$ is sufficiently smaller than $1-x_1(0)/x_2(0)$.
The general approach is identical to \cref{sec:analysis:case1}; given the asymmetric nature of the problem, some calculations differ slightly.
For any $t\geq0$, we call $\Delta_{\Bar{w}}=(1-p)x_2(t)-x_1(t)$ the \emph{negative weighted bias}.
Note that for $p = 1- x_1(0)/x_2(0) - \gamma$, the initial negative weighted bias is $\gamma \cdot x_2(0)$.
Analogous to \cref{lem:shifted_bias_bound:case1}, we show that the negative weighted bias does not decrease significantly for polynomial many interactions in this setting.

\begin{lemma}
\label{lem:shifted_bias_bound:case2}
Let $\textbf{x}(t_0)$ be a configuration with weighted bias $\Delta_{\Bar{w}}(t_0) \geq c_s\cdot n$.
    Let $\xi(\tau)$ be the event that $\Delta_{\Bar{w}}(t) \geq \Delta_{\Bar{w}}(t_0)/2$ for all $t\in \intcc{t_0,\ldots, t_0+\tau}$.
    Then, with probability at least $1-n^{-6}$, $\xi(T)$ holds for all $ \tau \leq \Delta_{\Bar{w}}^2(t_0)/(16 \ln n)$.
\end{lemma}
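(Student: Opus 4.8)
The plan is to follow the proof of \cref{lem:shifted_bias_bound:case1} almost verbatim, with the negative weighted bias $\Delta_{\Bar{w}}(t) = (1-p)x_2(t) - x_1(t)$ playing the role of $\Delta_w(t)$. First I would show that $(\Delta_{\Bar{w}}(t))_{t \ge t_0}$ has nonnegative one-step drift whenever $\Delta_{\Bar{w}}(t) \ge 0$. Running through the same case distinction over all ordered interactions: a $(1,2)$-interaction changes $\Delta_{\Bar{w}}$ by $+1$ with probability $1-p$ and by $0$ with probability $p$; a $(2,1)$-interaction changes it by $-(1-p)$; an undecided initiator meeting Opinion~$1$ changes it by $-1$; an undecided initiator meeting Opinion~$2$ changes it by $+(1-p)$; and all remaining interactions are neutral. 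Summing the corresponding probabilities, the $(1,2)$ and $(2,1)$ contributions cancel and one obtains $\Ex{\Delta_{\Bar{w}}(t+1) - \Delta_{\Bar{w}}(t) \mid \mathcal{F}_t} = \frac{u(t)}{n^2}\,\Delta_{\Bar{w}}(t)$, which is nonnegative precisely while $\Delta_{\Bar{w}}(t) \ge 0$ -- in particular while the tracked event $\xi$ has not yet failed, since $\Delta_{\Bar{w}}(t_0)/2 > 0$ under the hypothesis $\Delta_{\Bar{w}}(t_0) \ge c_s n$.

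Second, I would bound the increments: every productive interaction changes either $x_1$ by one (so $\Delta_{\Bar{w}}$ by $1$) or $x_2$ by one (so $\Delta_{\Bar{w}}$ by $1-p \le 1$), hence $\lvert \Delta_{\Bar{w}}(t+1) - \Delta_{\Bar{w}}(t) \rvert \le 1$. Third, I would invoke the Azuma--Hoeffding bound (\cref{lem:azuma-hoeffding}) on the window $[t_0, t]$ with $\lambda = \Delta_{\Bar{w}}(t_0)/2$ and $t \le \tau := \Delta_{\Bar{w}}^2(t_0)/(16 \ln n)$, which, exactly as in the proof of \cref{lem:shifted_bias_bound:case1}, yields $\Pr[\Delta_{\Bar{w}}(t) < \Delta_{\Bar{w}}(t_0)/2 \mid \mathcal{F}_{t_0}] \le \exp(-2\lambda^2/\tau) = \exp(-8 \ln n) = n^{-8}$; a union bound over the at most $\tau \le n^2$ time steps then gives that $\xi$ holds throughout the window with probability at least $1 - n^{-6}$.

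The only subtlety -- the same one already present in \cref{lem:shifted_bias_bound:case1} -- is that the drift identity only certifies the submartingale property while $\Delta_{\Bar{w}}(t) \ge 0$, so strictly speaking one should apply Azuma--Hoeffding to the process stopped at the first time $\Delta_{\Bar{w}}$ falls below $\Delta_{\Bar{w}}(t_0)/2$: on the complement of the bad event this stopped process coincides with $\Delta_{\Bar{w}}$ and is a genuine submartingale, and the failure of $\xi(\tau)$ is contained in the deviation event bounded above. I expect this bookkeeping to be the only place requiring care; apart from it the argument is entirely symmetric to the Opinion~$1$ case, so I do not anticipate a real obstacle. The one genuine difference from \cref{lem:shifted_bias_bound:case1} is the stronger hypothesis $\Delta_{\Bar{w}}(t_0) \ge c_s n$ in place of mere positivity, which is exactly what the subsequent drift-to-zero argument for Opinion~$2$ will need and which is harmless for the present estimate.
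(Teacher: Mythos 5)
Your proposal is correct and matches the paper's approach: the paper proves this lemma by simply noting that the argument of Lemma~\ref{lem:shifted_bias_bound:case1} carries over with $\Delta_{\Bar{w}}(t)=-\Delta_w(t)$, which is exactly the symmetric drift computation, increment bound, Azuma--Hoeffding application, and union bound you spell out. Your remark about stopping the process when the submartingale property could fail is a point the paper glosses over even in the original Lemma~\ref{lem:shifted_bias_bound:case1}, so it is a welcome (if not strictly demanded) refinement rather than a deviation.
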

\begin{proof}[Proof sketch]
    The proof follows along the lines of that of \cref{lem:shifted_bias_bound:case1} with 
    $\Delta_{\Bar{w}}(t) = -\Delta_w(t)$.
\end{proof}

\begin{lemma}
\label{lem:drift_x1_to_zero}
     Let $\textbf{x}(t_0)$ be a configuration with weighted bias $\Delta_{\Bar{w}}(t_0) \geq c_s\cdot n$ for an arbitrary constant $c_s$. 
    Let $T_1 = \inf\{t\geq 0 ~|~ x_2(t) = 0 \}$.
    Then, $ \Probs{T_1 \leq 20 \cdot c_s^{-1} \cdot n\log n} \geq 1-n^{-2}.$
\end{lemma}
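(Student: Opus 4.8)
The plan is to mirror the proof of \cref{lem:drift_x2_to_zero}, interchanging the two opinions together with $\Delta_w$ and $\Delta_{\Bar{w}}$. In the present regime $\Delta_{\Bar{w}}(t_0)\geq c_s\cdot n$ the opinion that is driven out is Opinion~$1$, so the potential to track is the gap $\Psi(t)\coloneqq x_1(t)/x_2(t)$ (rather than its inverse) and the support reaching $0$ is $x_1(t)$, matching the lemma's label; the hitting time to bound is thus $\inf\set{t\geq 0 ~|~ x_1(t)=0}$. I would show that $\Psi$ contracts geometrically and then invoke the multiplicative drift theorem (\cref{thm:mult_drift_tail_lengler_18}) to bound the time until $\Psi$, and hence $x_1$, reaches~$0$.

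First I would compute the one-step drift of $\Psi$ by enumerating the same five interaction types as in \cref{lem:drift_x2_to_zero}; the only structural difference is that the stubborn $(1,2)$-interaction now shrinks $x_1$ with probability $1-p$. Collecting the terms gives
\begin{align*}
\Ex{\Psi(t+1)-\Psi(t) \mid \filter_t}
= \frac{x_1}{n^2}\left(\frac{x_1}{x_2-1}-(1-p)+\frac{u}{x_2(x_2+1)}\right).
\end{align*}
Writing $\tfrac{x_1}{x_2-1}-(1-p) = -\bigl(\Delta_{\Bar{w}}(t)-(1-p)\bigr)/(x_2-1)$ exposes the negative weighted bias as the driving force. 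Since $\Delta_{\Bar{w}}(t)\geq c_s n/2$ forces $x_2(t)=\Theta(n)$ and $x_1/(x_2-1)=\Psi(t)\,(1+o(1))$, the undecided term is $O(\Psi(t)/n^2)$ and the drift equals $-\tfrac{\Psi(t)}{n^2}\,\Delta_{\Bar{w}}(t)\,(1-o(1))$. Invoking \cref{lem:shifted_bias_bound:case2} (the counterpart of \cref{lem:shifted_bias_bound:case1} for $\Delta_{\Bar{w}}$) to guarantee $\Delta_{\Bar{w}}(t)\geq \Delta_{\Bar{w}}(t_0)/2\geq c_s n/2$ \whp throughout the relevant horizon, and absorbing the lower-order remainder, I obtain $\Ex{\Psi(t)-\Psi(t+1) \mid \filter_t}\geq \tfrac{c_s}{4n}\cdot\Psi(t)$.

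It then remains to apply \cref{thm:mult_drift_tail_lengler_18} with $\delta=c_s/(4n)$, $s_0=x_1(t_0)/x_2(t_0)\leq n$, $s_{\min}=(n-1)^{-1}$ and $r=3\ln n$, which gives $\Probs{T\leq 20\,c_s^{-1}\,n\log n}\geq 1-n^{-3}$. Exactly as in \cref{lem:drift_x2_to_zero}, the drift bound is only guaranteed \whp and for a bounded horizon by \cref{lem:shifted_bias_bound:case2}; I would patch this by running the drift theorem on the modified process that deterministically jumps to the absorbing configuration $(0,n,0)$ as soon as the bias bound is violated, and then couple back by a union bound. Here $\Delta_{\Bar{w}}^2(t_0)/(16\ln n)=\Omega(n^2/\log n)$ comfortably exceeds the $O(n\log n)$ horizon and the bias bound fails with probability at most $n^{-6}$, so the overall failure probability is at most $n^{-2}$. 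The main obstacle is the drift computation itself: one must verify that the asymmetry introduced by the stubborn $(1,2)$-interaction and the undecided terms stay dominated by $\Delta_{\Bar{w}}(t)$, so that both the sign of the drift and the constant $c_s/4$ survive the role-swap.
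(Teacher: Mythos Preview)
Your proposal is correct and follows essentially the same approach as the paper: both track the potential $\Psi(t)=x_1(t)/x_2(t)$, compute its one-step drift (your closed form $\tfrac{x_1}{n^2}\bigl(\tfrac{x_1}{x_2-1}-(1-p)+\tfrac{u}{x_2(x_2+1)}\bigr)$ is algebraically equivalent to the paper's expression $-\tfrac{\Psi}{n^2}\bigl((1-p)x_2-x_1-\tfrac{x_1}{x_2-1}-\tfrac{u}{x_2+1}\bigr)$), invoke \cref{lem:shifted_bias_bound:case2} to keep $\Delta_{\Bar w}(t)\geq c_s n/2$ and hence $x_2(t)=\Theta(n)$, and then apply \cref{thm:mult_drift_tail_lengler_18}. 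The only cosmetic difference is that the paper extracts the slightly sharper constant $\delta=c_s/(2n)$ (yielding $12\,c_s^{-1}n\ln n$) whereas you settle for $c_s/(4n)$, but both fit within the stated $20\,c_s^{-1}n\log n$; you also correctly flag and fix the typo in the hitting-time definition ($x_1(t)=0$ rather than $x_2(t)=0$).
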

\begin{proof}
    The proof follows along the lines of that of \cref{lem:drift_x2_to_zero} with the potential function $\Psi(t) = x_1(t)/x_2(t)$.
    Recall that the idea is to calculate the expected change of the potential function $\Psi(t)$ and apply a known drift theorem.
    From \cref{lem:shifted_bias_bound:case2} and the initial size of $x_2(0)$  we get that \whp $\Delta_{\Bar{w}}(t) \geq \Delta_{\Bar{w}}(t_0)/2$.
    In particular, we also get $x_2(t) \geq c_2 \cdot n$ for some constant $c_2$.
    This allows us to bound the expected change as follows.
    
    \begin{align*}
        \MoveEqLeft\E{\Psi(t+1)-\Psi(t) ~|~ \filter_t} \\
        &=  \frac{x_1 \cdot x_2}{n^2}\left( p\Psi + (1-p)\frac{x_1-1}{x_2} \right)
        +   \frac{x_2 \cdot x_1}{n^2} \left(\frac{x_1}{x_2-1} \right)
        +   \frac{u\cdot x_1}{n^2} \left(\frac{x_1+1}{x_2} \right)\\
        &\phantom{={}}+   \frac{u \cdot x_2}{n^2} \left(\frac{x_1}{x_2+1}\right)
        +   \frac{x_1^2 + x_2^2 + (x_1+x_2+u)u }{n^2}\cdot \Psi - \Psi \\
        &=  \frac{x_1 \cdot x_2}{n^2}\left( \Psi + \frac{1-p}{x_2} \right)
        +   \frac{x_2 \cdot x_1}{n^2} \left( \Psi -\frac{x_1}{x_2} + \frac{x_1}{x_2-1} \right) 
        +   \frac{u\cdot x_1}{n^2} \left( \Psi+\frac{1}{x_2} \right) \\
        &\phantom{={}} +   \frac{u \cdot x_2}{n^2} \left( \Psi - \frac{x_1}{x_2} +\frac{x_1}{x_2+1} \right)
        +   \frac{x_1^2 + x_2^2 + (x_1+x_2+u)u }{n^2}\cdot \Psi - \Psi \\
        &= \frac{\Psi}{n^2} \left( -(1-p)x_2 + \frac{x_1 \cdot x_2}{x_2-1} + u -\frac{u \cdot x_2}{x_2+1}\right) \\
        &= -\frac{\Psi}{n^2} \left( (1-p)x_2 -x_1 - \frac{x_1}{x_2-1} - \frac{u}{x_2+1}\right) \\
        &\leq -\frac{\Psi}{n^2} \left( c_s \cdot n - \frac{n}{c_2 \cdot n -1} - \frac{n}{c_2 \cdot n +1}\right) \\
        &= -\frac{\Psi}{n} \left(c_s - \frac{1}{c_2 \cdot n -1} - \frac{1}{c_2 \cdot n +1}\right) \\
        &\leq -\frac{c_s}{2n}\cdot \Psi
    \end{align*}
\newpage    
We now apply the multiplicative drift theorem (\cref{thm:mult_drift_tail_lengler_18}) with $r = 3\ln n$, $s_0 =x_1(0)/x_2(0)\leq n$, $s_{\min} = (n-1)^{-1}$, $\delta = c_s/(2n)$ and get
\begin{align*}
    \Probs{T_1 > 12c_s^{-1} n\ln n} 
    &=\Probs{T_1 > \left\lceil \frac{12n\ln n}{c_s} \right\rceil }
    \leq \Probs{ T_1 > \left\lceil \frac{3\ln n+\ln(n/(n-1)^{-1})}{c_s/(2n)} \right\rceil } \\
    &\leq \Probs{ T_1 > \left\lceil \frac{r+\ln(s_0/s_{\min})}{\delta} \right\rceil }
    \leq e^{-r}
    = n^{-3}
\end{align*}
\end{proof}

Akin to \cref{lem:repeated-doubling}, we now consider the case $\Delta_{\Bar{w}} = o(n)$.

\begin{lemma}
    \label{lem:repeated-doubling2}
    Let $\xconfig(t_0)$ be a configuration with $\Delta_{\Bar{w}}(t_0) \geq \shBias \cdot \sqrt{n\log n}$ and let $T = \inf\set{t\geq t_0 ~|~ \Delta_{\Bar{w}}(t) \geq  n/10}$.
    Then $\Probs{T \leq (\shBias^2/6)\cdot  n\log n}\geq 1-n^{-3}$. 
\end{lemma}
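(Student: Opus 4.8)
The plan is to reproduce the argument of \cref{lem:repeated-doubling} almost verbatim, with the \emph{negative weighted bias} $\Delta_{\Bar{w}}(t) = -\Delta_w(t)$ taking over the role of $\Delta_w(t)$; as with \cref{lem:shifted_bias_bound:case2} and \cref{lem:drift_x1_to_zero}, only some of the calculations change by constants. First recall from (the proof of) \cref{lem:shifted_bias_bound:case2}, i.e.\ negating the expected change computed in \cref{lem:shifted_bias_bound:case1}, that
\[
    \E{\Delta_{\Bar{w}}(t+1) ~|~ \filter_t} = \Delta_{\Bar{w}}(t) + \frac{u(t)\cdot \Delta_{\Bar{w}}(t)}{n^2} \geq \Delta_{\Bar{w}}(t)
\]
whenever $\Delta_{\Bar{w}}(t) \geq 0$, so $(\Delta_{\Bar{w}}(t))_{t\geq t_0}$ is a submartingale with bounded one-step change $\abs{\Delta_{\Bar{w}}(t+1) - \Delta_{\Bar{w}}(t)} \leq 1$. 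These are precisely the two facts that power the doubling.

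Next I would set up the same geometric partition of time: for $1 \leq \ell \leq \log n$ let $T_0 = t_0$, $\mathcal{I}_\ell = \set{T_{\ell-1},\dots,T_\ell-1}$ and
\[
    T_\ell \coloneqq \inf\set{t \geq t_0 ~|~ \Delta_{\Bar{w}}(t) \geq \min(2^\ell \cdot \Delta_{\Bar{w}}(t_0),\, (n-u(t))/4)},
\]
and show that $\abs{\mathcal{I}_\ell} \leq \tau \coloneqq (\shBias^2/6)\cdot n$ with probability $1-n^{-2}$. For a fixed level with $\Delta_{\Bar{w}}(T_\ell) \geq \shBias\sqrt{n\log n}$ this splits into (i) a lower-tail bound, obtained by applying \cref{lem:azuma-hoeffding} to the submartingale with $\lambda = \Delta_{\Bar{w}}(T_\ell)/2$, showing that $\Delta_{\Bar{w}}$ does not drop below $\Delta_{\Bar{w}}(T_\ell)/2$ within $\tau$ steps except with probability $n^{-3}$; and (ii) an upper-reach bound: conditioning on $\Delta_{\Bar{w}}(t) \geq \Delta_{\Bar{w}}(T_\ell)/2$ throughout $[T_\ell,T_\ell+\tau]$, the drift term $u(t)\Delta_{\Bar{w}}(t)/n^2$ is at least $\varepsilon \coloneqq c_u\Delta_{\Bar{w}}(T_\ell)/(2n)$ with $c_u = (3/10)\cdot(1-p)/(2-p)$ by the analogue of \cref{lem:lower_bound_u} (which gives $u(t) = \Omega(n)$ as long as $\Delta_{\Bar{w}}(t) < (n-u(t))/4$), after which the conditional Hoeffding bound from the full version of \cite{DBLP:conf/podc/AmirABBHKL23} with $\lambda = \shBias\cdot\Delta_{\Bar{w}}(T_\ell)$ shows that $\Delta_{\Bar{w}}$ reaches $2\Delta_{\Bar{w}}(T_\ell)$ within $\tau$ steps except with probability $n^{-3}$. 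As in \cref{lem:repeated-doubling}, if $u(t_0)$ is initially too small one first spends $O(n)$ extra steps building up undecided agents, during which \cref{lem:shifted_bias_bound:case2} keeps $\Delta_{\Bar{w}}$ from losing more than a factor $2$.

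Finally, a union bound over the $\log n$ levels yields that within $\tau\cdot\log n = (\shBias^2/6)\cdot n\log n$ steps there is a time $t$ with $\Delta_{\Bar{w}}(t) \geq (n-u(t))/4$, since otherwise $\Delta_{\Bar{w}}$ would have to exceed $2^{\log n}\Delta_{\Bar{w}}(t_0) > n$. By \cref{lem:undecided_u_bound} we have $u(t) \leq n/2 + O(\sqrt{n\log n})$, hence $(n-u(t))/4 \geq n/10$ for $n$ large enough, which gives $T \leq (\shBias^2/6)\cdot n\log n$ with the claimed probability after the final union bound.

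I expect the only genuine obstacle to be the undecided-agent lower bound used in step (ii): \cref{lem:lower_bound_u} is stated and proved for the mirrored sign $\Delta_w(t) > 0$, whereas here we work in the regime $\Delta_{\Bar{w}}(t) > 0$, i.e., $(1-p)x_2(t) > x_1(t)$. Its drift computations are symmetric under interchanging $x_1$ and $(1-p)x_2$, so I expect the bound $u(t) \geq (3/10)\cdot(1-p)/(2-p)\cdot n$ to carry over (using that $p$ is a constant strictly below $1$, so $1-p = \Omega(1)$ — the boundary case $p=1$ being treated separately); but one has to re-run the case split on $u(t) \leq x_1(t)/2$ versus $u(t) \leq (1-p)x_2(t)/2$ with the inequality $(1-p)x_2 - x_1 > 0$, checking that none of the constants degrade and that reaching $\Delta_{\Bar{w}}(t) = n/10$ is consistent with $(n-u(t))/4 \geq n/10$ in this regime. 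Everything else is a routine transcription of the Opinion-1 argument.
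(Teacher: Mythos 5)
Your proposal is correct and matches the paper's own treatment, which literally consists of the one-line remark that the proof "follows along the lines of that of Lemma~\ref{lem:repeated-doubling} for $\Delta_{\Bar{w}}$ instead of $\Delta_w$ using Lemma~\ref{lem:shifted_bias_bound:case2} instead of Lemma~\ref{lem:shifted_bias_bound:case1}." You are in fact more careful than the paper in flagging that the undecided-agent lower bound of Lemma~\ref{lem:lower_bound_u} is stated for the regime $\Delta_w(t)>0$ and must be re-checked under $(1-p)x_2>x_1$; that mirroring goes through as you describe, so the argument stands.
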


\begin{proof}[Proof sketch]
    The proof follows along the lines of that of \cref{lem:repeated-doubling} for $\Delta_{\Bar{w}}$ instead of $\Delta_w$ using \cref{lem:shifted_bias_bound:case2} instead of \cref{lem:shifted_bias_bound:case1}.
\end{proof}

We now prove \sref{thm:main-theorem:2} of \cref{thm:main-theorem}.

\begin{proof}[Proof of \sref{thm:main-theorem:2}]
    Consider a configuration with $x_1(0) \in \intcc{\epsilon \cdot n, x_2}, u(0) \leq n/2$ and $p_s-p = \Omega\left(n^{-1/2}\cdot \log n\right)$.
    Then equivalently $\Delta_{\Bar{w}}(0) = \Omega(x_2 \cdot n^{-1/2} \cdot \log n)$ and $x_1+x_2 \geq n/2$. Since $x_1 \leq x_2$, we have $x_2 = \Theta(n)$ and thus $\Delta_{\Bar{w}}(0) \geq \xi \cdot \sqrt{n \log n}$ for some constant $\xi$.
    
    Let $T_a = \inf\set{t\geq 0 ~|~ \Delta_{\Bar{w}}(t) \geq n/10 }$
    and $T_b = \inf\set{t\geq 0 ~|~ x_1(t) = 0 }$.
    By \cref{lem:repeated-doubling2}, we have \whp $T_a = O(n\log n)$ and then by \cref{lem:drift_x1_to_zero} $T_b = T_a + O(n \log n)$.
    
    Note that at no time all agents can be undecided, since
    the last agent with Opinion 2 cannot encounter Opinion 2.
    Therefore, at time $T_b$, at least one agent with Opinion 2 exists.

    With $x_1(T_b) = 0$, the process simplifies to a single productive rule: $\delta(2,\bot) = 1$. 
    Let $T_2 = \inf\set{t \geq T_b ~|~ x_2(t) = n}$.
    Assuming that $T_b < \infty$ and $x_2(T_2) = 1$, we have $T_2 \leq T_b + 6 n \log n$ with probability at least $1-n^{-2}$ (see e.g., \cite{berenbrink2021loosely}).
    It is easy to see that $x_2(T_b)=1$ gives an upper bound for $T_2$.
    The statement then follows from the union bound.
\end{proof}

\end{document}